\begin{document}

\title{Myhill-Nerode Theorem for Higher-Dimensional Automata}

\author{%
  Uli Fahrenberg\thanks{Address for correspondence:  EPITA Research Laboratory (LRE), France.}
   \\
  EPITA Research Laboratory (LRE), France\\
  uli@lrde.epita.fr
  \and Krzysztof Ziemia\'nski   \\
  University of Warsaw, Poland\\
  ziemians@mimuw.edu.pl
  }

\maketitle

\runninghead{U.~Fahrenberg and K.~Ziemia\'nski}{Myhill-Nerode Theorem for Higher-Dimensional Automata}

\begin{abstract}
  We establish a Myhill-Nerode type theorem for higher-dimensional
  automata (HDAs), stating that a language is regular if and only if
  it has finite prefix quotient.  HDAs extend standard automata with
  additional structure, making it possible to distinguish between
  interleavings and concurrency.  We also introduce deterministic HDAs
  and show that not all HDAs are determinizable, that is, there exist
  regular languages that cannot be recognised by a deterministic HDA.
  Using our theorem, we develop an internal characterisation of
  deterministic languages.  Lastly, we develop analogues of the
  Myhill-Nerode construction and of determinacy for HDAs with
  interfaces.
\end{abstract}
\begin{keywords}
higher-dimensional automaton; Myhill-Nerode theorem; concurrency theory; determinism
\end{keywords}

\section{Introduction}

\begin{figure}[tp]
  \centering
  \begin{tikzpicture}
    \begin{scope}[x=1.5cm, state/.style={shape=circle, draw,
        fill=white, initial text=, inner sep=1mm, minimum size=3mm}]
      \node[state, black] (10) at (0,0) {};
      \node[state, rectangle] (20) at (0,-1) {$\vphantom{b}a$};
      \node[state] (30) at (0,-2) {};
      \node[state, black] (11) at (1,0) {};
      \node[state, rectangle] (21) at (1,-1) {$b$};
      \node[state] (31) at (1,-2) {};
      \path (10) edge (20);
      \path (20) edge (30);
      \path (11) edge (21);
      \path (21) edge (31);
      \node[state, black] (m) at (.5,-1) {};
      \path (20) edge[out=15, in=165] (m);
      \path (m) edge[out=-165, in=-15] (20);
      \path (21) edge[out=165, in=15] (m);
      \path (m) edge[out=-15, in=-165] (21);
    \end{scope}
    \begin{scope}[xshift=4cm]
      \node[state] (00) at (0,0) {};
      \node[state] (10) at (-1,-1) {};
      \node[state] (01) at (1,-1) {};
      \node[state] (11) at (0,-2) {};
      \path (00) edge node[left] {$\vphantom{b}a$\,} (10);
      \path (00) edge node[right] {\,$b$} (01);
      \path (10) edge node[left] {$b$\,} (11);
      \path (01) edge node[right] {\,$\vphantom{b}a$} (11);
    \end{scope}
  \end{tikzpicture}
  \qquad\qquad
  \begin{tikzpicture}
    \begin{scope}[xshift=8cm]
      \path[fill=black!15] (0,0) to (-1,-1) to (0,-2) to (1,-1);
      \node[state] (00) at (0,0) {};
      \node[state] (10) at (-1,-1) {};
      \node[state] (01) at (1,-1) {};
      \node[state] (11) at (0,-2) {};
      \path (00) edge node[left] {$\vphantom{b}a$\,} (10);
      \path (00) edge node[right] {\,$b$} (01);
      \path (10) edge node[left] {$b$\,} (11);
      \path (01) edge node[right] {\,$\vphantom{b}a$} (11);
    \end{scope}
    \begin{scope}[x=1.5cm, state/.style={shape=circle, draw,
        fill=white, initial text=, inner sep=1mm, minimum size=3mm},
      xshift=10.5cm]
      \node[state, black] (10) at (0,0) {};
      \node[state, rectangle] (20) at (0,-1) {$\vphantom{b}a$};
      \node[state] (30) at (0,-2) {};
      \node[state, black] (11) at (1,0) {};
      \node[state, rectangle] (21) at (1,-1) {$b$};
      \node[state] (31) at (1,-2) {};
      \path (10) edge (20);
      \path (20) edge (30);
      \path (11) edge (21);
      \path (21) edge (31);
    \end{scope}
  \end{tikzpicture}
  \caption{Petri net and HDA models distinguishing interleaving
    from non-interleaving concurrency.  Left: Petri net and
    HDA models for $a. b+ b. a$; right: HDA and Petri net models for
    $a\para b$.}
  \label{fi:int-conc}\vspace*{-3mm}
\end{figure}
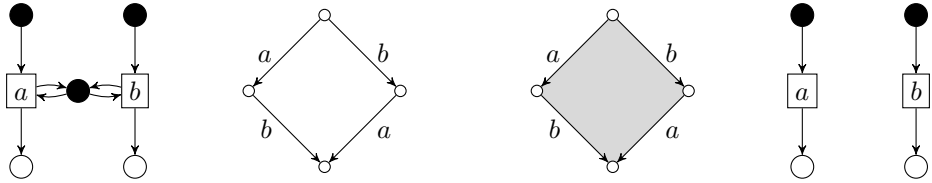

Higher-dimensional automata (HDAs),
introduced by Pratt and van Glabbeek \cite{Pratt91-geometry, Glabbeek91-hda, DBLP:journals/tcs/Glabbeek06},
extend standard automata with additional structure
that makes it possible to distinguish between interleavings and concurrency.
That puts them in a class with other non-interleaving models for concurrency
such as Petri nets \cite{book/Petri62},
event structures \cite{DBLP:journals/tcs/NielsenPW81},
configuration structures \cite{DBLP:conf/lics/GlabbeekP95, DBLP:journals/tcs/GlabbeekP09},
asynchronous transition systems \mbox{\cite{Bednarczyk87-async, DBLP:journals/cj/Shields85}},
and similar approaches \cite{pratt95chu, DBLP:journals/acta/GlabbeekG01, Pratt03trans_cancel, P15jlamp_STstruct},
while retaining some of the properties and intuition of automata-like models.
As an example,
Figure~\ref{fi:int-conc} shows Petri net and HDA models
for a system with two events,
labelled $a$ and $b$.
The Petri net and HDA on the left side
model the (mutually exclusive) interleaving of $a$ and $b$ as either $a. b$ or $b. a$;
those to the right model concurrent execution of $a$ and $b$.
In the HDA, this independence is indicated by a filled-in square.

We have recently introduced languages of HDAs \cite{Hdalang},
which consist of partially ordered multisets with interfaces (ipomsets),
and shown a Kleene theorem for them \cite{conf/concur/FahrenbergJSZ22, Kleenearxiv}.
Here we continue to develop the language theory of HDAs.
Our first contribution is a Myhill-Nerode type theorem for HDAs,
stating that a language is regular if and only if it has finite prefix quotient.
This provides a necessary and sufficient condition for regularity.
Our proof is inspired by the standard proofs of the Myhill-Nerode theorem,
but the higher-dimensional structure introduces some difficulties.
For example, we cannot use the standard prefix quotient relation
but need to develop a stronger one which takes concurrency of events into account.

As a second contribution,
we give a precise definition of deterministic HDAs
and show that there exist regular languages that cannot be recognised by deterministic HDAs.
Our Myhill-Nerode construction will produce a deterministic HDA for such deterministic languages,
and a non-deterministic HDA otherwise.
Our definition of determinism is more subtle than for standard automata
as it is not always possible to remove non-accessible
parts of HDAs.
We develop a language-internal characterisation of deterministic languages.

Thirdly, we develop a variant of the Myhill-Nerode construction
and of determinism which uses higher-dimensional automata with interfaces (iHDAs).
These were introduced in \cite{conf/concur/FahrenbergJSZ22}
and allow for some components to be missing which in HDAs would have to exist solely for structural reasons.
In iHDAs, non-accessible parts may be removed,
which allows for a more principled Myhill-Nerode construction.
HDAs and iHDAs are related via mappings called resolution and closure
which preserve languages.

\medskip
We start this paper by introducing languages of ipomsets in Section \ref{se:ipomsets}.
Section \ref{s:Step} develops important decomposition properties of ipomsets needed in the sequel,
and HDAs are introduced in Section~\ref{se:hda}.
Section \ref{se:mn} then states and proves our Myhill-Nerode theorem,
and Section \ref{se:det} introduces deterministic HDAs.
HDAs with interfaces are defined in Section \ref{se:ihda},
the Myhill-Nerode theorem using iHDAs is in Section \ref{se:imn},
and deterministic iHDAs are treated in Section \ref{se:idet}.
This paper is based on \cite{DBLP:conf/apn/FahrenbergZ23}
which was presented at the 44th International Conference on Application and Theory of Petri Nets and Concurrency.
Compared to this conference paper, proofs have been added and errors corrected,
and the material in Sections \ref{se:imn} and \ref{se:idet} is new.

\section{Pomsets with interfaces}
\label{se:ipomsets}

HDAs model systems in which labelled events have duration and may happen concurrently.
Every event has a time interval during which it is active:
it starts at some point, then remains active until its termination
and never reappears.
Events may be concurrent, that is, their activity intervals may overlap;
otherwise, one of the events precedes the other.
We also need to consider executions
in which some events are already active at the beginning (\emph{source events})
or are still active at the end (\emph{target events}).

\medskip
At any moment of an execution we observe a list of currently active events
(such lists are called \emph{conclists} below).
The relative position of any two concurrent events on these lists
remains the same,
regardless of the point in time.
This provides a secondary relation between events, which we call \emph{event order}.

To make the above precise, let $\Sigma$ be a finite alphabet.
A \emph{conclist} (for ``concurrency list'') $(U, {\evord}, \lambda)$
is a finite set $U$ with a total order $\evord$ called the \emph{event order}
and a labelling function $\lambda: U\to \Sigma$.
Conclists (or rather their isomorphism classes) are effectively strings
but consist of concurrent, not subsequent, events.

A \emph{labelled poset with event order} (\emph{lposet})
$(P, {<}, {\evord}, \lambda)$
consists of a finite set $P$ with two relations:
\emph{precedence} $<$ and \emph{event order} $\evord$,
together with a labelling function $\lambda:P\to \Sigma$.
Note that different events may carry the same label: we do \emph{not} exclude autoconcurrency.
We require that both $<$ and $\evord$
are strict partial orders, that is,
they are irreflexive and transitive (and thus asymmetric).
We also require that for each $x\ne y$ in $P$,
at least one of $x<y$ or $y<x$ or $x\evord y$ or $y\evord x$ must hold;
that is, if $x$ and $y$ are concurrent, then they must be related by $\evord$.

Conclists may be regarded as lposets with empty precedence relation;
the last condition enforces that their elements are totally ordered by $\evord$.
A temporary state of an execution is described by a conclist,
while the whole execution provides an lposet of its events.
The precedence order expresses that one event terminates before the other starts.
The event order of an lposet is generated by the event orders of temporary conclists.
Hence any two events which are active concurrently are unrelated by $<$ but related by $\evord$.

In order to accommodate source and target events,
we need to introduce lposets with interfaces (iposets).
An \emph{iposet}
$(P, {<}, {\evord}, S, T, \lambda)$ consists of an lposet
$(P, {<}, {\evord}, \lambda)$ together with subsets $S, T\subseteq P$
of source and target \emph{interfaces}.  Elements of $S$ must be
$<$-minimal and those of $T$ $<$-maximal; hence both $S$ and $T$ are conclists.
We often denote an iposet as above by $\ilo{S}{P}{T}$ or $(S, P, T)$, ignoring the orders and labelling,
or use $S_P=S$ and $T_P=T$ if convenient.
Source and target events will be marked by ``$\ibullet$'' at the left or right side, and
if the event order is not shown, we assume that it goes downwards.

\begin{figure}[!h]
  \centering
  \begin{tikzpicture}[x=.95cm]
    \def\possh{-1.3};
    \begin{scope}[shift={(9.6,0)}]
      \def\hw{0.3};
      \filldraw[fill=green!50!white,-](0,1.2)--(1.2,1.2)--(1.2,1.2+\hw)--(0,1.2+\hw);
      \filldraw[fill=pink!50!white,-](0.3,0.7)--(1.9,0.7)--(1.9,0.7+\hw)--(0.3,0.7+\hw)--(0.3,0.7);
      \filldraw[fill=blue!20!white,-](0.5,0.2)--(1.7,0.2)--(1.7,0.2+\hw)--(0.5,0.2+\hw)--(0.5,0.2);
      \draw[thick,-](0,0)--(0,1.7);
      \draw[thick,-](2.2,0)--(2.2,1.7);
      \node at (0.6,1.2+\hw*0.5) {$a$};
      \node at (1.1,0.7+\hw*0.5) {$b$};
      \node at (1.1,0.2+\hw*0.5) {$c$};
    \end{scope}
    \begin{scope}[shift={(9.6,\possh)}]
      \node (a) at (0.4,0.7) {$a$};
      \node at (.22,.7) {$\ibullet$};
      \node (c) at (0.4,-0.7) {$c$};
      \node (b) at (1.8,0) {$b$};
      \path[densely dashed, gray] (a) edge (b) (b) edge (c) (a) edge (c);
    \end{scope}
    \begin{scope}[shift={(6.4,0)}]
      \def\hw{0.3};
      \filldraw[fill=green!50!white,-](0,1.2)--(1.2,1.2)--(1.2,1.2+\hw)--(0,1.2+\hw);
      \filldraw[fill=pink!50!white,-](1.3,0.7)--(1.9,0.7)--(1.9,0.7+\hw)--(1.3,0.7+\hw)--(1.3,0.7);
      \filldraw[fill=blue!20!white,-](0.5,0.2)--(1.7,0.2)--(1.7,0.2+\hw)--(0.5,0.2+\hw)--(0.5,0.2);
      \draw[thick,-](0,0)--(0,1.7);
      \draw[thick,-](2.2,0)--(2.2,1.7);
      \node at (0.6,1.2+\hw*0.5) {$a$};
      \node at (1.6,0.7+\hw*0.5) {$b$};
      \node at (1.1,0.2+\hw*0.5) {$c$};
    \end{scope}
    \begin{scope}[shift={(6.4,\possh)}]
      \node (a) at (0.4,0.7) {$a$};
      \node at (.22,.7) {$\ibullet$};
      \node (c) at (0.4,-0.7) {$c$};
      \node (b) at (1.8,0) {$b$};
      \path (a) edge (b);
      \path[densely dashed, gray]  (b) edge (c) (a) edge (c);
    \end{scope}
    \begin{scope}[shift={(3.2,0)}]
      \def\hw{0.3};
      \filldraw[fill=green!50!white,-](0,1.2)--(1.2,1.2)--(1.2,1.2+\hw)--(0,1.2+\hw);
      \filldraw[fill=pink!50!white,-](1.3,0.7)--(1.9,0.7)--(1.9,0.7+\hw)--(1.3,0.7+\hw)--(1.3,0.7);
      \filldraw[fill=blue!20!white,-](0.5,0.2)--(1.1,0.2)--(1.1,0.2+\hw)--(0.5,0.2+\hw)--(0.5,0.2);
      \draw[thick,-](0,0)--(0,1.7);
      \draw[thick,-](2.2,0)--(2.2,1.7);
      \node at (0.6,1.2+\hw*0.5) {$a$};
      \node at (1.6,0.7+\hw*0.5) {$b$};
      \node at (0.8,0.2+\hw*0.5) {$c$};
    \end{scope}
    \begin{scope}[shift={(3.2,\possh)}]
      \node (a) at (0.4,0.7) {$a$};
      \node at (.22,.7) {$\ibullet$};
      \node (c) at (0.4,-0.7) {$c$};
      \node (b) at (1.8,0) {$b$};
      \path (a) edge (b) (c) edge (b);
      \path[densely dashed, gray]  (a) edge (c);
    \end{scope}
    \begin{scope}[shift={(0.0,0)}]
      \def\hw{0.3};
      \filldraw[fill=green!50!white,-](0,1.2)--(0.4,1.2)--(0.4,1.2+\hw)--(0,1.2+\hw);
      \filldraw[fill=pink!50!white,-](1.3,0.7)--(1.9,0.7)--(1.9,0.7+\hw)--(1.3,0.7+\hw)--(1.3,0.7);
      \filldraw[fill=blue!20!white,-](0.5,0.2)--(1.1,0.2)--(1.1,0.2+\hw)--(0.5,0.2+\hw)--(0.5,0.2);
      \draw[thick,-](0,0)--(0,1.7);
      \draw[thick,-](2.2,0)--(2.2,1.7);
      \node at (0.2,1.2+\hw*0.5) {$a$};
      \node at (1.6,0.7+\hw*0.5) {$b$};
      \node at (0.8,0.2+\hw*0.5) {$c$};
    \end{scope}
    \begin{scope}[shift={(0.0,\possh)}]
      \node (a) at (0.4,0.7) {$a$};
      \node at (.22,.7) {$\ibullet$};
      \node (c) at (0.4,-0.7) {$c$};
      \node (b) at (1.8,0) {$b$};
      \path (a) edge (b) (c) edge (b) (a) edge (c);
    \end{scope}
  \end{tikzpicture}
  \caption{Activity intervals (top)
    and corresponding iposets (bottom),
    see Example \ref{ex:iposets1}.
    Full arrows indicate precedence order;
    dashed arrows indicate event order;
    bullets indicate interfaces.}
  \label{fi:iposets1}\vspace*{-2mm}
\end{figure}
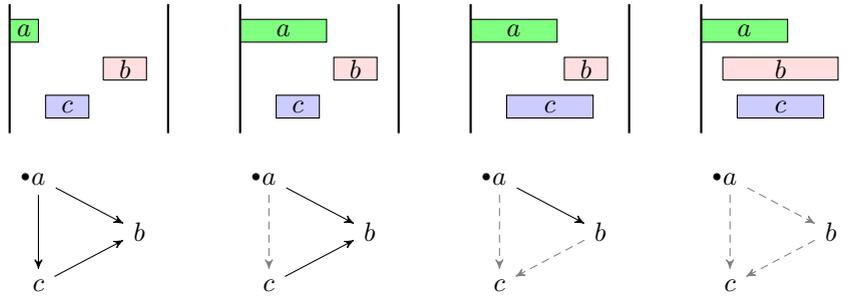

\begin{example}
  \label{ex:iposets1}
  Figure \ref{fi:iposets1}
  shows some simple examples
  of activity intervals of events and the corresponding iposets.
  The left iposet consists of three totally ordered events,
  given that the intervals do not overlap;
  the event $a$ is already active at the beginning
  and hence in the source interface.
  In the other iposets,
  the activity intervals do overlap
  and hence the precedence order is partial
  (and the event order non-trivial).
\end{example}

\eject
Given that the precedence relation $<$ of an iposet
represents activity intervals of events,
it is an \emph{interval order} \cite{book/Fishburn85}.
In other words, any of the iposets we will encounter admits an \emph{interval representation}:
functions $b$ and $e$ from $P$ to real numbers such that
$b(x)\le e(x)$ for all $x\in P$ and
$x <_P y\iff e(x)<b(y)$ for all $x, y\in P$.
We will only consider interval iposets in this paper
and hence omit the qualification ``interval''.
This is \emph{not} a restriction, but rather induced by the semantics.
The following property is trivial, but we will make heavy use of it later.

\begin{lemma}
  If $P$ is an (interval) iposet and $A\subseteq P$,
  then the set difference $P-A$ is an (interval) iposet as well.
\end{lemma}

Iposets may be refined by shortening the activity intervals of events,
so that some events stop being concurrent.
This corresponds to expanding the precedence relation $<$
(and, potentially, removing event order).
The inverse to refinement is called subsumption and defined as follows.
For iposets $P$ and $Q$, we say that $Q$ \emph{subsumes} $P$
(or that $P$ is a \emph{refinement} of $Q$)
and write $P\subsu Q$
if there exists a bijection $f: P\to Q$ (a \emph{subsumption})
which
\begin{itemize}
\item
  respects interfaces and labels: $f(S_P)=S_Q$, $f(T_P)=T_Q$, and $\lambda_Q\circ f=\lambda_P$;
\item
  reflects precedence: $f(x)<_Q f(y)$ implies $x<_P y$; and
\item
  preserves essential event order:
  $x\evord_P y$ implies $f(x)\evord_Q f(y)$
  whenever $x$ and $y$ are concurrent (that is, $x\not<_P y$ and $y\not<_P x$).
\end{itemize}
(Event order is essential for concurrent events,
but by transitivity, it also appears between non-concurrent events;
subsumptions may ignore such non-essential event order.)

\begin{example}
  In Figure~\ref{fi:iposets1},
  there is a sequence of refinements from right to left,
  each time shortening some activity intervals.
  Conversely, there is a sequence of subsumptions from left to right:

  \begin{equation*}
    \vcenter{\hbox{
        \begin{tikzpicture}[x=.7cm, y=.6cm]
          \path[use as bounding box] (.4,.8) -- (.4,-.8) -- (1.8,0);
          \node (a) at (0.4,0.7) {$a$};
          \node at (.15,.7) {$\ibullet$};
          \node (c) at (0.4,-0.7) {$c$};
          \node (b) at (1.8,0) {$b$};
          \path (a) edge (b) (c) edge (b) (a) edge (c);
        \end{tikzpicture}}}
    \;\subsu\;
    \vcenter{\hbox{
        \begin{tikzpicture}[x=.7cm, y=.6cm]
          \path[use as bounding box] (.4,.8) -- (.4,-.8) -- (1.8,0);
          \node (a) at (0.4,0.7) {$a$};
          \node at (.15,.7) {$\ibullet$};
          \node (c) at (0.4,-0.7) {$c$};
          \node (b) at (1.8,0) {$b$};
          \path (a) edge (b) (c) edge (b);
          \path[densely dashed, gray]  (a) edge (c);
        \end{tikzpicture}}}
    \;\subsu\;
    \vcenter{\hbox{
        \begin{tikzpicture}[x=.7cm, y=.6cm]
          \path[use as bounding box] (.4,.8) -- (.4,-.8) -- (1.8,0);
          \node (a) at (0.4,0.7) {$a$};
          \node at (.15,.7) {$\ibullet$};
          \node (c) at (0.4,-0.7) {$c$};
          \node (b) at (1.8,0) {$b$};
          \path (a) edge (b);
          \path[densely dashed, gray]  (b) edge (c) (a) edge (c);
        \end{tikzpicture}}}
    \;\subsu\;
    \vcenter{\hbox{
        \begin{tikzpicture}[x=.7cm, y=.6cm]
          \path[use as bounding box] (.4,.8) -- (.4,-.8) -- (1.8,0);
          \node (a) at (0.4,0.7) {$a$};
          \node at (.15,.7) {$\ibullet$};
          \node (c) at (0.4,-0.7) {$c$};
          \node (b) at (1.8,0) {$b$};
          \path[densely dashed, gray] (a) edge (b) (b) edge (c) (a) edge (c);
        \end{tikzpicture}}}
  \end{equation*}
  Interfaces need to be preserved across subsumptions,
  so in our example,
  the left endpoint of the $a$-interval must stay at the boundary.
\end{example}

Iposets and subsumptions form a category.
The isomorphisms in that category are invertible subsumptions,
and isomorphism classes of iposets are called \emph{ipomsets}.
Concretely, an isomorphism $f: P\to Q$ of iposets is a bijection
which
\begin{itemize}
\item
  respects interfaces and labels: $f(S_P)=S_Q$, $f(T_P)=T_Q$, and $\lambda_Q\circ f=\lambda_P$;
\item
  respects precedence: $x<_P y\iff f(x)<_Q f(y)$; and
\item
  respects essential event order:
  $x\evord_P y\iff f(x)\evord_Q f(y)$
  whenever $x\not<_P y$ and $y\not<_P x$.
\end{itemize}
Isomorphisms between iposets are unique
(because of the requirement that all elements
be ordered by $<$ or $\evord$),
hence we may switch freely between ipomsets and concrete representations,
see \cite{conf/concur/FahrenbergJSZ22} for details.
We write $P\cong Q$ if iposets $P$ and $Q$ are isomorphic
and let $\iiPoms$ denote the set of (interval) ipomsets.

Ipomsets may be \emph{glued},
using a generalisation of the standard serial composition of pomsets \cite{DBLP:journals/fuin/Grabowski81}.
For ipomsets $P$ and $Q$, their \emph{gluing} $P*Q$ is defined
if the targets of $P$ match the sources of $Q$: $T_P\cong S_Q$.
In that case, its carrier set is the quotient $(P\sqcup Q)_{/x\equiv f(x)}$,
where $f: T_P\to S_Q$ is the unique isomorphism,
the interfaces are $S_{P*Q}=S_P$ and $T_{P*Q}=T_Q$,
$\evord_{P*Q}$ is the transitive closure of ${\evord_P}\cup {\evord_Q}$,
and $x<_{P*Q} y$ if and only if $x<_P y$, or $x<_Q y$, or $x\in P-T_P$ and $y\in Q-S_Q$.
We will often omit the ``$*$'' in gluing compositions.
For ipomsets with empty interfaces, $*$ is serial pomset composition;
in the general case, matching interface points are glued,
see \cite{Hdalang, DBLP:journals/iandc/FahrenbergJSZ22} or below for examples.

A \emph{language} is, a priori, a set of ipomsets $L\subseteq \iiPoms$.
However, we will assume that languages are closed under refinement (inverse subsumption),
so that refinements of any ipomset in $L$ are also in $L$:

\begin{definition}
  \label{de:langs}
  A \emph{language} is a subset $L\subseteq \iiPoms$
  such that $P\subsu Q$ and $Q\in L$ imply $P\in L$.
\end{definition}

Using interval representations,
this means that languages are closed under shortening activity intervals of events.
The set of all languages is denoted $\Langs\subseteq 2^{\iiPoms}$.

For $X\subseteq \iiPoms$ an arbitrary set of ipomsets,
we denote by
\begin{equation*}
  X\down=\{P\in \iiPoms\mid \exists Q\in X: P\subsu Q\}
\end{equation*}
its downward subsumption closure,
that is, the smallest language which contains~$X$.
Then
\begin{equation*}
  \Langs = \{ X\subseteq \iiPoms\mid X\down = X\}.
\end{equation*}

\section{Step decompositions}
\label{s:Step}

An ipomset $P$ is \emph{discrete} if $<_P$ is empty and $\evord_P$ total.
Conclists are discrete ipomsets with empty interfaces.
Discrete ipomsets $\ilo{U}{U}{U}$ are identities for gluing composition and written $\id_U$.
A \emph{starter} is an ipomset $\ilo{U-A}{U}{U}$,
a \emph{terminator} is $\ilo{U}{U}{U-A}$;
these will be written $\starter{U}{A}$ and $\terminator{U}{A}$, respectively.

\begin{figure}[tbp]
  \centering
  \begin{tikzpicture}[y=1.2cm]
    \def\possh{-1.3};
    \def\hw{0.3};
    \node at (1.8,3.5) {%
      $\left[\vcenter{\hbox{\!%
            \begin{tikzpicture}[x=1.2cm]
              \node (a) at (0,0) {$a$};
              \node (c) at (1,0) {$c$};
              \node at (1.15,-0.75) {$\vphantom{b}\ibullet$};
              \node (b) at (0,-.75) {$b$};
              \node at (-.15,-.75) {$\vphantom{b}\ibullet$};
              \node (d) at (1,-.75) {$d$};
              \path (a) edge (c);
              \path (b) edge (d);
              \path (a) edge (d);
              \path[densely dashed, gray] (a) edge (b) (c) edge (b) (c) edge (d);
            \end{tikzpicture}
          \!\!}}\right]$};
    \node at (1.8,2) {%
      $\left[\vcenter{\hbox{\!%
            \begin{tikzpicture}[y=.7cm]
              \node (a) at (0,0) {$a$};
              \node (c) at (1,0) {$c$};
              \node at (1.17,-0.75) {$\vphantom{b}\ibullet$};
              \node (b) at (0,-.75) {$b$};
              \node at (-.17,-.75) {$\vphantom{b}\ibullet$};
              \node (d) at (1,-.75) {$d$};
              \path (a) edge (c);
              \path (b) edge (d);
              \path (a) edge (d);
            \end{tikzpicture}
          \!\!}}\right]$};
    \begin{scope}[shift={(0,0)}]
      \filldraw[fill=green!50!white,-](0.3,0.7)--(0.9,0.7)--(0.9,0.7+\hw)--(0.3,0.7+\hw)--(0.3,0.7);
      \filldraw[fill=pink!50!white,-](0.0,0.2)--(2.1,0.2)--(2.1,0.2+\hw)--(0.0,0.2+\hw)--(0.0,0.2);
      \filldraw[fill=blue!20!white,-](1.5,0.7)--(3.3,0.7)--(3.3,0.7+\hw)--(1.5,0.7+\hw)--(1.5,0.7);
      \filldraw[fill=yellow!50!white,-](2.7,0.2)--(3.6,0.2)--(3.6,0.2+\hw)--(2.7,0.2+\hw)--(2.7,0.2);
      \draw[thick,-](0,0)--(0,1.2);
      \draw[thick,-](3.6,0)--(3.6,1.2);
      \node at (0.6,0.7+\hw*0.5) {$a$};
      \node at (1.05,0.2+\hw*0.5) {$b$};
      \node at (2.4,0.7+\hw*0.5) {$c$};
      \node at (3.15,0.2+\hw*0.5) {$d$};
    \end{scope}
    \node at (4.9,3.5) {%
      $\left[\vcenter{\hbox{%
            \begin{tikzpicture}
              \path[use as bounding box] (-.3,.1) -- (.2,-.9);
              \node (a) at (0,0) {$a$};
              \node (b) at (0,-.75) {$b$};
              \node at (.17,0) {$\ibullet$};
              \node at (-.17,-.75) {$\vphantom{b}\ibullet$};
              \node at (.17,-.75) {$\vphantom{b}\ibullet$};
              \path[densely dashed, gray] (a) edge (b);
            \end{tikzpicture}
          }}\right]$};
    \node at (4.9,2) {$\starter{\bigloset{a\\b}}{a}$};
    \begin{scope}[shift={(4.6,0)}]
      \filldraw[fill=green!50!white,-](0.3,0.7)--(0.6,0.7)--(0.6,0.7+\hw)--(0.3,0.7+\hw)--(0.3,0.7);
      \filldraw[fill=pink!50!white,-](0.0,0.2)--(0.6,0.2)--(0.6,0.2+\hw)--(0.0,0.2+\hw)--(0.0,0.2);
      \draw[thick,-](0,0)--(0,1.2);
      \draw[thick,-](0.6,0)--(0.6,1.2);
      \node at (0.45,0.7+\hw*0.5) {$a$};
      \node at (0.3,0.2+\hw*0.5) {$b$};
    \end{scope}
    \node at (6.1,3.5) {%
      $\left[\vcenter{\hbox{%
            \begin{tikzpicture}
              \path[use as bounding box] (-.3,.1) -- (.2,-.9);
              \node (a) at (0,0) {$a$};
              \node (b) at (0,-.75) {$b$};
              \node at (-.17,0) {$\ibullet$};
              \node at (-.17,-.75) {$\vphantom{b}\ibullet$};
              \node at (.17,-.75) {$\vphantom{b}\ibullet$};
              \path[densely dashed, gray] (a) edge (b);
            \end{tikzpicture}
          }}\right]$};
    \node at (6.15,2) {$\terminator{\bigloset{a\\b}}{a}$};
    \begin{scope}[shift={(5.8,0)}]
      \filldraw[fill=green!50!white,-](0.0,0.7)--(0.3,0.7)--(0.3,0.7+\hw)--(0.0,0.7+\hw)--(0.0,0.7);
      \filldraw[fill=pink!50!white,-](0.0,0.2)--(0.6,0.2)--(0.6,0.2+\hw)--(0.0,0.2+\hw)--(0.0,0.2);
      \draw[thick,-](0,0)--(0,1.2);
      \draw[thick,-](0.6,0)--(0.6,1.2);
      \node at (0.15,0.7+\hw*0.5) {$a$};
      \node at (0.3,0.2+\hw*0.5) {$b$};
    \end{scope}
    \node at (7.3,3.5) {%
      $\left[\vcenter{\hbox{%
            \begin{tikzpicture}
              \path[use as bounding box] (-.3,.1) -- (.2,-.9);
              \node (a) at (0,0) {$c$};
              \node (b) at (0,-.75) {$b$};
              \node at (.17,0) {$\ibullet$};
              \node at (-.17,-.75) {$\vphantom{b}\ibullet$};
              \node at (.17,-.75) {$\vphantom{b}\ibullet$};
              \path[densely dashed, gray] (a) edge (b);
            \end{tikzpicture}
          }}\right]$};
    \node at (7.25,2) {$\starter{\bigloset{c\\b}}{c}$};
    \begin{scope}[shift={(7.0,0)}]
      \filldraw[fill=blue!20!white,-](0.3,0.7)--(0.6,0.7)--(0.6,0.7+\hw)--(0.3,0.7+\hw)--(0.3,0.7);
      \filldraw[fill=pink!50!white,-](0.0,0.2)--(0.6,0.2)--(0.6,0.2+\hw)--(0.0,0.2+\hw)--(0.0,0.2);
      \draw[thick,-](0,0)--(0,1.2);
      \draw[thick,-](0.6,0)--(0.6,1.2);
      \node at (0.45,0.7+\hw*0.5) {$c$};
      \node at (0.3,0.2+\hw*0.5) {$b$};
    \end{scope}
    \node at (8.52,3.5) {%
      $\left[\vcenter{\hbox{%
            \begin{tikzpicture}
              \path[use as bounding box] (-.3,.1) -- (.2,-.9);
              \node (a) at (0,0) {$c$};
              \node (b) at (0,-.75) {$b$};
              \node at (-.17,0) {$\ibullet$};
              \node at (.17,0) {$\ibullet$};
              \node at (-.17,-.75) {$\vphantom{b}\ibullet$};
              \path[densely dashed, gray] (a) edge (b);
            \end{tikzpicture}
          }}\right]$};
    \node at (8.55,2) {$\terminator{\bigloset{c\\b}}{b}$};
    \begin{scope}[shift={(8.2,0)}]
      \filldraw[fill=blue!20!white,-](0.0,0.7)--(0.6,0.7)--(0.6,0.7+\hw)--(0.0,0.7+\hw)--(0.0,0.7);
      \filldraw[fill=pink!50!white,-](0.0,0.2)--(0.3,0.2)--(0.3,0.2+\hw)--(0.0,0.2+\hw)--(0.0,0.2);
      \draw[thick,-](0,0)--(0,1.2);
      \draw[thick,-](0.6,0)--(0.6,1.2);
      \node at (0.3,0.7+\hw*0.5) {$c$};
      \node at (0.15,0.2+\hw*0.5) {$b$};
    \end{scope}
    \node at (9.7,3.5) {%
      $\left[\vcenter{\hbox{%
            \begin{tikzpicture}
              \path[use as bounding box] (-.3,.1) -- (.2,-.9);
              \node (a) at (0,0) {$c$};
              \node (b) at (0,-.75) {$d$};
              \node at (-.17,0) {$\ibullet$};
              \node at (.17,0) {$\ibullet$};
              \node at (.17,-.75) {$\vphantom{b}\ibullet$};
              \path[densely dashed, gray] (a) edge (b);
            \end{tikzpicture}
          }}\right]$};
    \node at (9.65,2) {$\starter{\bigloset{c\\d}}{d}$};
    \begin{scope}[shift={(9.4,0)}]
      \filldraw[fill=blue!20!white,-](0.0,0.7)--(0.6,0.7)--(0.6,0.7+\hw)--(0.0,0.7+\hw)--(0.0,0.7);
      \filldraw[fill=yellow!50!white,-](0.3,0.2)--(0.6,0.2)--(0.6,0.2+\hw)--(0.3,0.2+\hw)--(0.3,0.2);
      \draw[thick,-](0,0)--(0,1.2);
      \draw[thick,-](0.6,0)--(0.6,1.2);
      \node at (0.3,0.7+\hw*0.5) {$c$};
      \node at (0.45,0.2+\hw*0.5) {$d$};
    \end{scope}
    \node at (10.9,3.5) {%
      $\left[\vcenter{\hbox{%
            \begin{tikzpicture}
              \path[use as bounding box] (-.3,.1) -- (.2,-.9);
              \node (a) at (0,0) {$c$};
              \node (b) at (0,-.75) {$d$};
              \node at (-.17,0) {$\ibullet$};
              \node at (-.17,-.75) {$\vphantom{b}\ibullet$};
              \node at (.17,-.75) {$\vphantom{b}\ibullet$};
              \path[densely dashed, gray] (a) edge (b);
            \end{tikzpicture}
          }}\right]$};
    \node at (10.9,2) {$\terminator{\bigloset{c\\d}}{c}$};
    \begin{scope}[shift={(10.6,0)}]
      \filldraw[fill=blue!20!white,-](0.0,0.7)--(0.3,0.7)--(0.3,0.7+\hw)--(0.0,0.7+\hw)--(0.0,0.7);
      \filldraw[fill=yellow!50!white,-](0.0,0.2)--(0.6,0.2)--(0.6,0.2+\hw)--(0.0,0.2+\hw)--(0.0,0.2);
      \draw[thick,-](0,0)--(0,1.2);
      \draw[thick,-](0.6,0)--(0.6,1.2);
      \node at (0.15,0.7+\hw*0.5) {$c$};
      \node at (0.3,0.2+\hw*0.5) {$d$};
    \end{scope}
    \node at (4.1,3.5) {$=$};
    \node at (5.5,3.5) {$*$};
    \node at (6.7,3.5) {$*$};
    \node at (7.9,3.5) {$*$};
    \node at (9.1,3.5) {$*$};
    \node at (10.3,3.5) {$*$};
    \node at (4.1,2) {$=$};
    \node at (5.5,2) {$*$};
    \node at (6.7,2) {$*$};
    \node at (7.9,2) {$*$};
    \node at (9.1,2) {$*$};
    \node at (10.3,2) {$*$};
    \node at (4.1,0.6) {$=$};
    \node at (5.5,0.6) {$*$};
    \node at (6.7,0.6) {$*$};
    \node at (7.9,0.6) {$*$};
    \node at (9.1,0.6) {$*$};
    \node at (10.3,0.6) {$*$};
  \end{tikzpicture}
  \caption{Sparse decomposition of ipomset into starters and terminators.}
  \label{fig:Ndecomp}
\end{figure}
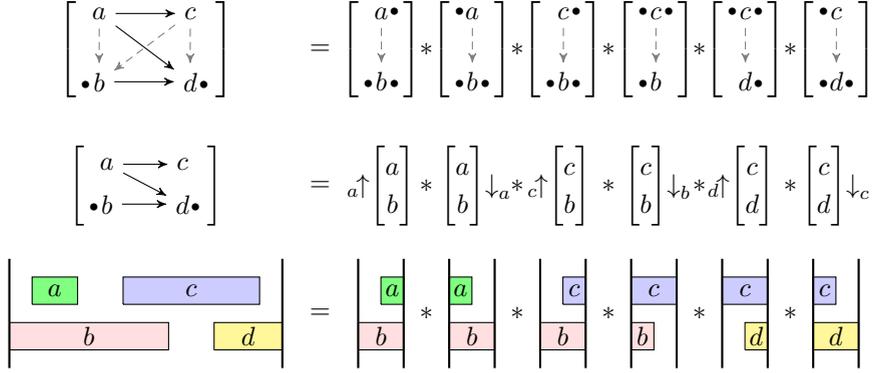

Any ipomset can be presented as a gluing of starters and terminators
\cite[Proposition~21]{DBLP:journals/iandc/FahrenbergJSZ22}.
(This is related to the fact that a partial order is interval
if and only if its antichain order is total,
see \cite{DBLP:journals/tcs/JanickiK93, book/Fishburn85, DBLP:journals/fuin/JanickiK19}).
Such a presentation we call a \emph{step decomposition};
if starters and terminators are alternating, the decomposition is \emph{sparse}.

\begin{example}
  Figure \ref{fig:Ndecomp} shows a sparse decomposition of an ipomset
  into starters and terminators.
  The top line shows the graphical representation,
  in the middle the representation using the notation we have introduced for starters and terminators,
  and the bottom line shows activity intervals.
\end{example}

We show that sparse step decompositions of ipomsets are unique.
For an ipomset $P$,
we denote by $P^m\subseteq P$ the subset of $<$-minimal elements and
\[
  P^s=\{p\in P\mid \forall\; p'\in P-P^m:\; p<p'\}.
\]
That is, $P^s$ contains precisely those minimal elements
which have arrows to all non-minimal elements.
Clearly, both $P^m$ and $P^s$ are conclists, and $P^s\subseteq P^m\supseteq S_P$.
We need a few technical lemmas.

\begin{lemma}
  \label{l:STGluing}
  Let $P$ be an ipomset, $U$ a conclist, and $A\subseteq U$.
  \begin{enumerate}
  \item
    Assume that $U\cong S_P$ and $P'=\starter{U}{A}* P$.
    Then $P'$ and $P$ are isomorphic as pomsets,
    $T_P\cong T_{P'}$ and $S_{P'}\cong S_P-A$.
  \item
    Assume that $U-A\cong S_P$ and $P'=\terminator{U}{A}* P$.
    Then $P'\cong P\cup A$ as sets, and $P\cong P'-A$ as pomsets,
    $T_P\cong T_{P'}$ and $S_{P'}\cong U$. \qed
  \end{enumerate}
\end{lemma}

\begin{proof}
  Simple calculations.
\end{proof}

Consider a presentation $P\cong Q R$.
From the definition follows that $P^m\cong Q^m$ and $S_P\cong S_Q$
This implies:

\begin{lemma}
  \label{l:StartCrit}
  Assume that $P\cong Q R$ and $Q$ is either a (non-identity) starter or a terminator.
  Then $Q$ is a starter if{}f $S_P\subsetneq P^m$,
  and $Q$ is a terminator if{}f $S_P=P^m$.
\end{lemma}

\begin{proof}
  We have $P^m\cong Q=Q^m$ and $S_P\cong S_Q$.
  But $Q$ is a terminator if and only if $S_Q=Q$, and a (non-identity) starter if and only if $S_Q\subsetneq Q$.
  \qed
\end{proof}

\begin{lemma}
  \label{l:SparsePreLemma}
  Assume that $P\cong Q Q' R$.
  \begin{enumerate}
  \item
    If $Q$ is a non-identity starter and $Q'$ is a non-identity terminator, then $Q\cong \starter{P^m}{P^m-S_P}$.
  \item
    If $Q$ is a non-identity terminator and $Q'$ is a non-identity starter, then $Q\cong\terminator{P^m}{P^s}$.
  \end{enumerate}
\end{lemma}

\begin{proof}
  Consider the first case. Then $P$ and $Q' R$ are isomorphic as pomsets,
  and
  \[
    Q=T_Q\cong S_{Q' R}
    \overset{\text{Lemma \ref{l:StartCrit}}}=
    (Q' R)^m
    \overset{\text{Lemma \ref{l:STGluing}}}\cong
    P^m.
  \]	
  Equality $S_Q=S_P$ follows immediately from the definition.

\medskip
  In the second case, we have $Q=S_Q\cong S_P\overset{\text{Lemma \ref{l:StartCrit}}}=P^m$,
  and $Q' R\cong P-(Q-T_Q)$ as pomsets (Lemma \ref{l:STGluing}).	
  By Lemma \ref{l:StartCrit} we have
  \[
    P^m\cap (Q' R)= Q\cap (Q' R)=T_Q\cong S_{Q' R}
    \overset{\text{Lemma \ref{l:StartCrit}}}\subsetneq
    (Q' R)^m.
  \]	
  Hence there exists an element $p\in Q' R$
  that is minimal in $Q' R$ but not in $P$.
  For every $p'\in P^s$ we have $p'<p$ and, therefore, $p'\not\in Q' R$.
  As a consequence, $P^s\subseteq P-(Q' R)=Q-T_Q$ (Lemma \ref{l:STGluing}).
	
\medskip
  On the other hand, if $p'\in P^m-P^s$, then there exists $p\in P-P^m=P-Q$
  such that $p'\not< p$. Thus, $p'$ must belong to $T_Q$. \qed
\end{proof}

\begin{proposition}
  \label{p:SparsePresentation}
  Every ipomset $P$ has a unique sparse step decomposition.
\end{proposition}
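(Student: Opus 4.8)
The plan is to prove existence and uniqueness separately, the latter by induction on the quantity $n(P) = (|P|-|S_P|) + (|P|-|T_P|)$, which counts how many events still have to be started plus how many still have to be terminated. Note that $n(P)=0$ exactly when $S_P=T_P=P$, i.e.\ when every event is both $<_P$-minimal and $<_P$-maximal, so that $<_P$ is empty and $P=\id_U$ is a discrete identity; this is the base case, whose only sparse decomposition is the trivial one.

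For existence I would start from an arbitrary step decomposition of $P$, which exists by the statement recalled above (every ipomset glues from starters and terminators). First discard all identity factors, then repeatedly merge adjacent factors of the same kind. A short computation with the gluing formula shows that a gluing of two starters is again a starter, namely $\starter{U}{A} * \starter{V}{B} = \starter{V}{A\cup B}$ whenever $U=V-B$, and dually two terminators glue to a terminator: in both cases the set $P-T_P$ (resp.\ $Q-S_Q$) feeding the gluing precedence rule is empty, so no new precedence is created and the result stays discrete with total event order. Each merge strictly shortens the decomposition, so the process terminates in an alternating, hence sparse, decomposition.

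For uniqueness the key claim is that the first factor of \emph{any} sparse decomposition of a non-identity $P$ is completely determined by $P$. Set $B = \{x\in P-S_P : x \text{ is } <_P\text{-minimal}\}$, the events that may start before anything else happens, and recall that all $<_P$-minimal elements are pairwise concurrent and so pairwise related by $\evord$. If $B\neq\emptyset$, I claim the first factor must be the starter $\starter{S_P\cup B}{B}$: it cannot be a terminator, since a non-trivial terminator would terminate some $C\subseteq S_P$ before the events of $B$ start, forcing $C<_P B$ although these events are concurrent; and it must start \emph{all} of $B$, since any $w\in B$ left out would start only after the following terminator, again creating a spurious precedence into $w$ from events concurrent with it. If $B=\emptyset$, then nothing can start first, so the first factor is a terminator $\terminator{S_P}{D}$, and a symmetric argument identifies $D=\{x\in S_P : x<_P y \text{ for all } y\in P-S_P\}$, the sources that precede every not-yet-started event.

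Once the first factor $\alpha$ is pinned down, the residual $P'$ with $P=\alpha * P'$ is uniquely determined (in the starter case $P'$ is $P$ with enlarged source $S_P\cup B$; in the terminator case $P'$ is $P$ with $D$ deleted and source $S_P-D$), and $n(P')<n(P)$, since a starter removes $|B|>0$ from the not-yet-started count and a terminator removes $|D|>0$ from the not-yet-terminated count. Stripping the first factor from a sparse decomposition leaves a sparse decomposition of $P'$, so the induction hypothesis yields uniqueness of the remainder and hence of the whole decomposition. I expect the main obstacle to be exactly this forcing step: making rigorous, in terms of the gluing precedence relation and the preservation of essential event order, that sparseness together with the mutual concurrency of all minimal elements leaves no freedom in choosing which events start or terminate first, and in particular rules out deferring any minimal event to a later step.
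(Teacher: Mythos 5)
Your overall strategy---existence by merging adjacent factors of the same kind, uniqueness by showing that sparseness forces the first factor---is sound, and there is nothing in the paper to compare it against: the proposition is stated without proof, being imported from prior work (the paper only cites \cite[Prop.~21]{DBLP:journals/iandc/FahrenbergJSZ22} for the existence of step decompositions). Your existence argument is correct: in $\starter{U}{A}*\starter{V}{B}=\starter{V}{A\cup B}$ and its dual, no element of the first factor dies and no element of the second is born, so the gluing creates no precedence. The starter case of your uniqueness argument is also correct: with $B=\{x\in P-S_P : x \text{ is } {<_P}\text{-minimal}\}\neq\emptyset$, any sparse decomposition must begin with $\starter{(S_P\cup B)}{B}$, because by the gluing rule an event terminated in an earlier factor precedes every event born in a later one, which would contradict $<_P$-minimality of elements of $B$. (Note that this relies on the convention, implicit in your merging step, that the factors of a sparse decomposition are \emph{proper} starters and terminators, not identities.)

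The terminator case, however, contains a genuine error: the first factor is not $\terminator{S_P}{D}$ with $D=\{x\in S_P : x<_P y \text{ for all } y\in P-S_P\}$, but with $D=\{x\in S_P-T_P : x<_P y \text{ for all } y\in P-S_P\}$. No factor of any decomposition can terminate an event of $T_P$: a terminated event is absent from all later carriers and therefore cannot lie in the target interface of the composite. When $P-S_P\neq\emptyset$ your formula accidentally agrees with the correct one, since an element of $T_P$ is $<_P$-maximal and hence cannot precede an element of $P-S_P$; but when $P-S_P=\emptyset$ your condition is vacuous and your recipe terminates all of $S_P$, including target events. Concretely, take $P=\terminator{\loset{c\\d}}{c}$, with source $\{c,d\}$ and target $\{d\}$---this is exactly the last factor in Fig.~\ref{fig:Ndecomp}, so it arises as a residual in your own induction. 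Here $B=\emptyset$ and the unique sparse decomposition is the single factor $P$ itself, yet your $D$ equals $\{c,d\}$, and $\terminator{\loset{c\\d}}{\{c,d\}}*P'$ has empty target interface for every $P'$, so your claimed first factor admits no completion at all. Relatedly, this case is not obtained by a ``symmetric argument'': the dual of your starter characterisation pins down the \emph{last} terminator of a decomposition, not the first. With $D$ corrected as above, the rest of your proof (the residual is uniquely determined, and $n$ strictly decreases because $D\cap T_P=\emptyset$ keeps $|T_{P'}|=|T_P|$) goes through.
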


\begin{proof}
  Let $P = P_1*\dotsm*P_n = Q_1*\dotsm*Q_m$ be two sparse presentations.
  If $n=1$, then $m=1$ and equality follows trivially,
  so assume $n, m\ge 2$ and write $P_2*\dotsm*P_n=P'$ and $Q_2*\dotsm*Q_m=Q'$.

\medskip
  Assume first that $P_1$ is a starter.
  By Lemma \ref{l:SparsePreLemma}, $P_1\cong \starter{P^m}{P^m-S_P}$.
  By Lemma \ref{l:STGluing},
  $S_P\cong S_{P'}-(P^m-S_P)$.
  Hence $S_{P'}\cong P^m$, implying $S_P\subsetneq P^m$.
  By Lemma \ref{l:StartCrit}, $Q_1$ is a starter.
  By Lemma \ref{l:SparsePreLemma}, $Q_1\cong \starter{P^m}{P^m-S_P}$.
  Thus $P_1\cong Q_1$, and we may proceed inductively with $P'=Q'$.

  Now assume instead that $P_1$ is a terminator.
  By Lemma \ref{l:SparsePreLemma}, $P_1\cong \terminator{P^m}{P^s}$.
  By Lemma \ref{l:STGluing},
  $S_P\cong P^m$.
  By Lemma \ref{l:StartCrit}, $Q_1$ is a terminator.
  By Lemma \ref{l:SparsePreLemma}, $Q_1\cong \terminator{P^m}{P^s}$.
  Thus $P_1\cong Q_1$, and we may proceed inductively with $P'=Q'$. \qed
\end{proof}

\section{Higher-dimensional automata and their languages}
\label{se:hda}

An HDA is a collection of \emph{cells} which are connected according to specified \emph{face maps}.
Each cell has an associated list of \emph{labelled events}
which are interpreted as being executed in that cell,
and the face maps may terminate some events or, inversely,
indicate cells in which some of the current events were not yet started.
Additionally, some cells are designated \emph{start} cells and some others \emph{accept} cells;
computations of an HDA begin in a start cell and proceed by starting and terminating events
until they reach an accept cell.

\subsection{Precubical sets and HDAs}

To make the above precise,
let $\sq$ denote the set of conclists.
A \emph{precubical set} consists of a set of cells $X$
together with a mapping $\ev: X\to \sq$ which to every cell assigns its list of active events.
For a conclist $U$ we write $X[U]=\{x\in X\mid \ev(x)=U\}$ for the cells of type $U$.
Further, for every $U\in \sq$ and subset $A\subseteq U$ there are \emph{face maps}
$\delta_A^0, \delta_A^1: X[U]\to X[U-A]$.
The \emph{upper} face maps $\delta_A^1$ terminate the events in $A$,
whereas the \emph{lower} face maps $\delta_A^0$ ``unstart'' these events:
they map cells $x\in X[U]$ to cells $\delta_A^0(x)\in X[U-A]$
where the events in $A$ are not yet active.

If $A, B\subseteq U$ are disjoint,
then the order in which events in $A$ and $B$ are terminated or unstarted
should not matter, so we require that $\delta_A^\nu \delta_B^\mu = \delta_B^\mu \delta_A^\nu$
for $\nu, \mu\in\{0, 1\}$: the \emph{precubical identities}.
A \emph{higher-dimensional automaton} (\emph{HDA})
is a precubical set $X$ together with subsets $\bot_X, \top_X\subseteq X$
of \emph{start} and \emph{accept} cells.
For a precubical set $X$ and subsets $Y, Z\subseteq X$
we denote by $X_Y^Z$ the HDA with precubical set $X$, start cells $Y$ and accept cells $Z$.
We do \emph{not} generally assume that precubical sets or HDAs are finite.
The \emph{dimension} of an HDA $X$ is $\dim(X)=\sup\{|\ev(x)|\mid x\in X\}\in \Nat\cup\{\infty\}$.

\begin{example}
  One-dimensional HDAs $X$ are standard automata.
  Cells in $X[\emptyset]$ are states,
  cells in $X[a]$ for $a\in \Sigma$ are $a$-labelled transitions.
  Face maps $\delta_a^0$ and $\delta_a^1$ attach source and target states to transitions.
  In contrast to ordinary automata we allow start and accept \emph{transitions}
  instead of merely states,
  so languages of such automata may contain not only words
  but also ``words with interfaces''.
  In any case, at most one event is active at any point in time,
  so the event order is unnecessary.
\end{example}

\begin{figure}[h]
\vspace*{-2mm}
  \centering
  \begin{tikzpicture}[x=1cm, y=.95cm]
    \node[circle,draw=black,fill=black!10,inner sep=0pt,minimum size=15pt]
    (aa) at (0,0) {$\vphantom{hy}v$};
    \node[circle,draw=black,fill=black!10,inner sep=0pt,minimum size=15pt]
    (ac) at (0,4) {$\vphantom{hy}x$};
    \node[circle,draw=black,fill=black!10,inner sep=0pt,minimum size=15pt]
    (ca) at (4,0) {$\vphantom{hy}w$};
    \node[circle,draw=black,fill=black!10,inner sep=0pt,minimum size=15pt]
    (cc) at (4,4) {$\vphantom{hy}y$};
    \node[circle,draw=black,fill=red!30,inner sep=0pt,minimum size=15pt]
    (ba) at (2,0) {$\vphantom{hy}e$};
    \node[circle,draw=black,fill=red!30,inner sep=0pt,minimum size=15pt]
    (bc) at (2,4) {$\vphantom{hy}f$};
    \node[circle,draw=black,fill=green!30,inner sep=0pt,minimum size=15pt]
    (ab) at (0,2) {$\vphantom{hy}g$};
    \node[circle,draw=black,fill=green!30,inner sep=0pt,minimum size=15pt]
    (cb) at (4,2) {$\vphantom{hy}h$};
    \node[circle,draw=black,fill=yellow!60,inner sep=0pt,minimum size=15pt]
    (bb) at (2,2) {$\vphantom{hy}q$};
    \node[right] at (5,4) {$X[\emptyset]=\{v,w,x,y\}$};
    \node[right] at (5,3.2) {$X[a]=\{e,f\}$};
    \node[right] at (5,2.4) {$X[b]=\{g,h\}$};
    \node[right] at (5,1.6) {$X[ab]=\{q\}$};
    \path (ba) edge node[above] {$\delta^0_a$} (aa);
    \path (ba) edge node[above] {$\delta^1_a$} (ca);
    \path (bb) edge node[above] {$\delta^0_a$} (ab);
    \path (bb) edge node[above] {$\delta^1_a$} (cb);
    \path (bc) edge node[above] {$\delta^0_a$} (ac);
    \path (bc) edge node[above] {$\delta^1_a$} (cc);
    \path (ab) edge node[left] {$\delta^0_b$} (aa);
    \path (ab) edge node[left] {$\delta^1_b$} (ac);
    \path (bb) edge node[left] {$\delta^0_b$} (ba);
    \path (bb) edge node[left] {$\delta^1_b$} (bc);
    \path (cb) edge node[left] {$\delta^0_b$} (ca);
    \path (cb) edge node[left] {$\delta^1_b$} (cc);
    \path (bb) edge node[above left] {$\delta^1_{ab}\!\!$} (cc);
    \path (bb) edge node[above left] {$\delta^0_{ab}\!\!$} (aa);
    \node[below left] at (aa) {$\bot\;$};
    \node[above right] at (cb) {$\;\top$};
    \node[above right] at (cc) {$\;\top$};
    \node[right] at (5,0.8) {$\bot_X=\{v\}$};
    \node[right] at (5,0) {$\top_X=\{h,y\}$};
    \begin{scope}[shift={(8.5,1)}]
      \filldraw[color=black!10!white] (0,0)--(2,0)--(2,2)--(0,2)--(0,0);			
      \filldraw (0,0) circle (0.05);
      \filldraw (2,0) circle (0.05);
      \filldraw (0,2) circle (0.05);
      \filldraw (2,2) circle (0.05);
      \path (0,0) edge node[below,color=red!70!black] {$a$} (1.95,0);
      \path (0,2) edge (1.95,2);
      \path (0,0) edge node[left,color=green!70!black] {$b$} (0,1.95);
      \path (2,0) edge (2,1.95);
      \node[left] at (0,0) {$\bot$};
      \node[right] at (2,2) {$\top$};
      \node[right] at (2,1) {$\top$};
    \end{scope}
  \end{tikzpicture}\vspace*{-1mm}
  \caption{A two-dimensional HDA $X$ on $\Sigma=\{{\color{red!70!black}{a}}, {\color{green!70!black}{b}}\}$, see Example \ref{ex:abcube}.}
  \label{fig:abcube}\vspace*{-3mm}
\end{figure}

\begin{example}
  \label{ex:abcube}
  Figure \ref{fig:abcube} shows a two-dimensional HDA $X$
  both as a combinatorial object (left) and in a more geometric realisation (right).
  We write isomorphism classes of conclists as lists of labels
  and omit the set braces in $\delta_{\{a\}}^0$ etc.
  $X$ has four zero-dimensional cells, or states, displayed in grey on the left;
  four one-dimensional transitions, two labelled $a$ and displayed in red
  and two labelled $b$ and shown in green;
  and one two-dimensional cell displayed in yellow.
\end{example}

An \emph{HDA-map} between HDAs $X$ and $Y$ is a function $f: X\to Y$ that preserves structure:
types of cells ($\ev_Y\circ f=\ev_X$),
face maps ($f(\delta^\nu_A(x))=\delta^\nu_A(f(x))$)
and start/accept cells ($f(\bot_X)\subseteq \bot_Y$, $f(\top_X)\subseteq \top_Y$).
Similarly, a precubical map is a function that preserves the first two of these three.
HDAs and HDA-maps form a category, as do precubical sets and precubical maps.

\subsection{Paths and their labels}

Computations of HDAs are paths:
sequences of cells connected by face maps.
A \emph{path} in $X$ is, thus, a sequence
\begin{equation}
  \label{eq:path}
  \alpha=(x_0, \phi_1, x_1, \dotsc, x_{n-1}, \phi_n, x_n),
\end{equation}
where the $x_i$ are cells of $X$ and the $\phi_i$ indicate types of face maps:
for every $i$, $(x_{i-1}, \phi_i, x_i)$ is either
\begin{itemize}
\item $(\delta^0_A(x_i), \arrO{A}, x_i)$ for $A\subseteq \ev(x_i)$ (an \emph{upstep})
\item or $(x_{i-1}, \arrI{B}, \delta^1_B(x_{i-1}))$ for $B\subseteq \ev(x_{i-1})$ (a \emph{downstep}).
\end{itemize}
Upsteps start events in $A$ while downsteps terminate events in $B$.
The \emph{source} and \emph{target} of $\alpha$ as in \eqref{eq:path} are $\src(\alpha)=x_0$ and $\tgt(\alpha)=x_n$.

\medskip
The set of all paths in $X$ starting at $Y\subseteq X$ and terminating in $Z\subseteq X$
is denoted by $\Path(X)_Y^Z$;
we write $\Path(X)_Y=\Path(X)_Y^X$, $\Path(X)^Z=\Path(X)_X^Z$, and $\Path(X)=\Path(X)_X^X$.
A path $\alpha$ is \emph{accepting} if $\src(\alpha)\in \bot_X$ and $\tgt(\alpha)\in \top_X$.
Paths $\alpha$ and $\beta$ may be concatenated
if $\tgt(\alpha)=\src(\beta)$;
their concatenation is written $\alpha*\beta$,
and we omit the ``$*$'' in concatenations if convenient.

\emph{Path equivalence} is the congruence $\simeq$
generated by $(z\arrO{A} y\arrO{B} x)\simeq (z\arrO{A\cup B} x)$,
$(x\arrI{A} y\arrI{B} z)\simeq (x\arrI{A\cup B} z)$, and
$\gamma \alpha \delta\simeq \gamma \beta \delta$ whenever $\alpha\simeq \beta$.
Intuitively, this relation allows to assemble subsequent upsteps or downsteps into one ``bigger'' step.
A path is \emph{sparse} if its upsteps and downsteps are alternating,
so that no more such assembling may take place.
Every equivalence class of paths contains a unique sparse path.

\begin{example}
  In one-dimensional HDAs,
  paths are sequences of transitions connected at states.
  Path equivalence is a trivial relation, and all paths are sparse.
\end{example}

\begin{example}
  \label{ex:paths}
  The HDA $X$ of Figure~\ref{fig:abcube} admits
  five sparse accepting paths:
  \begin{gather*}
    v\arrO{a} e\arrI{a} w\arrO{b} h, \qquad\quad
    v\arrO{a} e\arrI{a} w\arrO{b} h\arrI{b} y, \\
    v\arrO{ab} q\arrI{a} h, \qquad
    v\arrO{ab} q\arrI{ab} y, \qquad
    v\arrO{b} g\arrI{b} x\arrO{a} f\arrI{a} y.
  \end{gather*}
\end{example}

The observable content or \emph{event ipomset} $\ev(\alpha)$
of a path $\alpha$ is defined recursively as follows:
\begin{itemize}
\item If $\alpha=(x)$, then
  $\ev(\alpha)=\id_{\ev(x)}$.
\item If $\alpha=(y\arrO{A} x)$, then
  $\ev(\alpha)=\starter{\ev(x)}{A}$.
\item If $\alpha=(x\arrI{B} y)$, then
  $\ev(\alpha)=\terminator{\ev(x)}{B}$.
\item If $\alpha=\alpha_1*\dotsm*\alpha_n$ is a concatenation, then
  $\ev(\alpha)=\ev(\alpha_1)*\dotsm*\ev(\alpha_n)$.
\end{itemize}
\cite[Lemma~8]{conf/concur/FahrenbergJSZ22} shows that $\alpha\simeq \beta$ implies $\ev(\alpha)=\ev(\beta)$.
Further, if $\alpha=\alpha_1*\dotsm*\alpha_n$ is a sparse path,
then $\ev(\alpha)=\ev(\alpha_1)*\dotsm*\ev(\alpha_n)$ is a sparse step decomposition.

\begin{figure}[tbp]
  \centering
  \begin{tikzpicture}[x=1cm, y=.95cm]
    \begin{scope}
      \filldraw[color=black!10!white] (0,0)--(4,0)--(4,4)--(2,4)--(2,2)--(0,2)--(0,0);			
      \filldraw (0,0) circle (0.05);
      \filldraw (2,0) circle (0.05);
      \filldraw (4,0) circle (0.05);
      \filldraw (0,2) circle (0.05);
      \filldraw (2,2) circle (0.05);
      \filldraw (4,2) circle (0.05);
      \filldraw (2,4) circle (0.05);
      \filldraw (4,4) circle (0.05);
      \path (0,0) edge node[below] {$a$} (1.95,0);
      \path (2,0) edge node[below] {$c$} (3.95,0);
      \path (0,2) edge (1.95,2);
      \path (2,2) edge (3.95,2);
      \path (2,4) edge (3.95,4);
      \path (0,0) edge (0,1.95);
      \path (2,0) edge (2,1.95);
      \path (2,2) edge node[left] {$\vphantom{y}d$} (2,3.95);
      \path (4,0) edge node[right] {$\vphantom{y}b$} (4,1.95);
      \path (4,2) edge (4,3.95);
      \node[left] at (0,1) {$\bot$};
      \node[right] at (4,3) {$\top$};
      \node at (1,1) {$\vphantom{by}x$};
      \node at (3,1) {$\vphantom{b}y$};
      \node at (3,3) {$\vphantom{by}z$};
    \end{scope}
  \end{tikzpicture}
  \caption{HDA $Y$ consisting of three squares glued along common faces.}
  \label{fig:ab*cb*cd}\vspace*{-2mm}
\end{figure}
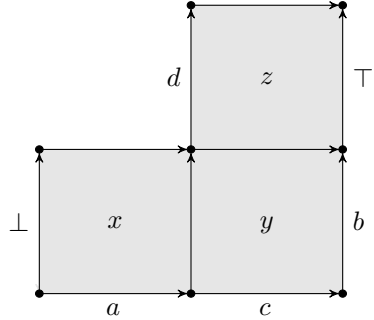

\begin{example}
  Event ipomsets of paths in one-dimensional HDAs are words, possibly with interfaces.
  Sparse step decompositions of words are obtained by splitting symbols into starts and terminations,
  for example, $\ibullet a b=\ibullet a*{b\ibullet}*{\ibullet b}$.
\end{example}

\begin{example}
  The event ipomsets of the five sparse accepting paths in the HDA $X$ of Figure~\ref{fig:abcube}
  are $a b\ibullet$, $a b$, $\loset{a\\b&\!\!\ibullet\!}$, $\loset{a\\b}$, and $b a$.
  Figure \ref{fig:ab*cb*cd} shows another HDA, which admits an accepting path
  \begin{equation*}
    (\delta_a^0 x\arrO{a} x\arrI{a} \delta_a^1 x\arrO{c} y\arrI{b} \delta_b^1 y\arrO{d} z\arrI{c} \delta_c^1 z).
  \end{equation*}
  Its event ipomset is precisely the ipomset of Figure~\ref{fig:Ndecomp},
  with the indicated sparse step decomposition arising from the sparse presentation above.
\end{example}

\subsection{Languages of HDAs}

The \emph{language} of an HDA $X$ is
\begin{equation*}
  \Lang(X) = \{ \ev(\alpha)\mid \alpha \text{ accepting path in } X\}.
\end{equation*}
\cite[Proposition~10]{conf/concur/FahrenbergJSZ22} shows that languages of HDAs
are sets of ipomsets which are closed under subsumption,
\ie~languages in the sense of Definition~\ref{de:langs}.

\medskip
A language is \emph{regular} if it is the language of a finite HDA.

\begin{example}
  The languages of our example HDAs are
  \begin{equation*}
    \Lang(X) = \big\{ \loset{a\\b&\!\!\ibullet\!}, \loset{a\\b}\big\}\down
    =\big\{ \loset{a\\b&\!\!\ibullet\!}, a b\ibullet, \loset{a\\b}, ab, ba\big\}
  \end{equation*}
  and
  \begin{equation*}
    \Lang(Y) = \left\{ \left[\vcenter{\hbox{\!%
            \begin{tikzpicture}[x=1cm, y=.7cm]
              \node (a) at (0,0) {$a$};
              \node (c) at (1,0) {$c$};
              \node at (1.2,-.79) {$\ibullet$};
              \node (b) at (0,-.75) {$b$};
              \node at (-.2,-.79) {$\ibullet$};
              \node (d) at (1,-.75) {$d$};
              \path (a) edge (c);
              \path (b) edge (d);
              \path (a) edge (d);
            \end{tikzpicture}
          \!\!}}\right]\right\}\down.
  \end{equation*}
\end{example}

We say that a cell $x\in X$ in an HDA $X$ is
\begin{itemize}
\itemsep=0.9pt
\item
  \emph{accessible} if $\Path(X)_\bot^x\ne \emptyset$, \ie
  $x$ can be reached by a path from a start cell;
\item
  \emph{coaccessible} if $\Path(X)_x^\top\ne \emptyset$, \ie
  there is a path from $x$ to an accept cell;
\item
  \emph{essential} if it is both accessible and coaccessible.
\end{itemize}
A path is \emph{essential} if its source and target cells are essential.
This implies that all its cells are essential.
Segments of accepting paths are always essential.

\eject

The set of essential cells of $X$ is denoted by $\ess(X)$;
this is not necessarily a sub-HDA of $X$
given that faces of essential cells may be non-essential.
For example, all bottom cells of the HDA $Y$
in Figure~\ref{fig:ab*cb*cd} are inaccessible and hence non-essential.

\begin{lemma}
  \label{l:HDAGen}
  Let $X$ be an HDA.
  There exists a smallest sub-HDA $X^{\ess}\subseteq X$ that contains all essential cells,
  and $\Lang(X^\ess)=\Lang(X)$.
  If $\ess(X)$ is finite, then $X^{\ess}$ is also finite.
\end{lemma}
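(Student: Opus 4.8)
The plan is to construct $X^\ess$ explicitly as the closure of $\ess(X)$ under face maps and then verify the two claimed properties. First I would record that a sub-HDA of $X$ is determined by a subset $X'\subseteq X$ that is closed under all face maps $\delta^\nu_A$ (so that the restricted maps turn $X'$ into a precubical set, the precubical identities being inherited from $X$), equipped with start and accept cells $\bot_{X'}=\bot_X\cap X'$ and $\top_{X'}=\top_X\cap X'$. Since an arbitrary intersection of face-map-closed subsets is again face-map-closed, I would define $X^\ess$ to be the smallest face-map-closed subset of $X$ containing $\ess(X)$, concretely the set of all iterated faces $\delta^{\nu_1}_{A_1}\dotsm\delta^{\nu_k}_{A_k}(x)$ with $x\in\ess(X)$. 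By construction this is a sub-HDA containing every essential cell, and any sub-HDA containing $\ess(X)$ must be closed under face maps and hence contain $X^\ess$; this yields minimality.

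For the language equality, the inclusion $\Lang(X^\ess)\subseteq\Lang(X)$ is immediate: every accepting path of $X^\ess$ is a path of $X$ whose source lies in $\bot_{X^\ess}\subseteq\bot_X$ and whose target lies in $\top_{X^\ess}\subseteq\top_X$, hence accepting in $X$, and $\ev$ is computed identically in both. For the reverse inclusion I would take an accepting path $\alpha=(x_0,\phi_1,\dotsc,\phi_n,x_n)$ in $X$. Being accepting, $\alpha$ is essential, so by the remark preceding the lemma all of its cells $x_0,\dotsc,x_n$ are essential and thus lie in $\ess(X)\subseteq X^\ess$. Each step of $\alpha$ has the form $x_{i-1}=\delta^0_A(x_i)$ or $x_i=\delta^1_B(x_{i-1})$, and these face relations persist in the restricted structure of $X^\ess$; therefore $\alpha$ is a path in $X^\ess$, still accepting because $x_0\in\bot_X\cap X^\ess=\bot_{X^\ess}$ and $x_n\in\top_X\cap X^\ess=\top_{X^\ess}$. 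Consequently $\ev(\alpha)\in\Lang(X^\ess)$, and as $\ev(\alpha)$ ranges over all of $\Lang(X)$, we obtain $\Lang(X)\subseteq\Lang(X^\ess)$.

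Finally, for finiteness I would use that every nontrivial face map strictly decreases the number of active events, $|\ev(\delta^\nu_A(x))|=|\ev(x)|-|A|$. Building $X^\ess$ by repeatedly adjoining the finitely many faces $\delta^\nu_A(x)$, $A\subseteq\ev(x)$, of already-collected cells, the maximal remaining dimension strictly drops whenever genuinely new lower-dimensional cells are forced in, so the process stabilises after at most $\dim(X)+1$ rounds, each round adding only finitely many cells when started from a finite $\ess(X)$. Hence $X^\ess$ is finite whenever $\ess(X)$ is. The only point requiring real care is the reverse language inclusion: one must observe, via the cited fact that segments of accepting paths are essential, that not merely the endpoints but \emph{every} intermediate cell of an accepting path lands in $X^\ess$, so that the entire path—and not just its event ipomset—can be replayed inside the sub-HDA.
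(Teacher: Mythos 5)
Your proof is correct and takes essentially the same approach as the paper, which defines $X^\ess$ in one step as the set of all faces $\delta^0_A\delta^1_B(x)$ of cells $x\in\ess(X)$ with $A,B\subseteq\ev(x)$ disjoint (this is exactly your iterated-face closure put into normal form via the precubical identities), derives $\Lang(X^\ess)=\Lang(X)$ from the same observation that every cell of an accepting path is essential, and obtains finiteness from the explicit count $|X^\ess|\le n\cdot 3^d$ rather than your stabilisation rounds. (One cosmetic nit: your bound on the number of rounds should refer to the maximal dimension of cells in $\ess(X)$ --- finite whenever $\ess(X)$ is --- rather than to $\dim(X)$, which may be infinite; since your closure only ever takes faces of essential cells, the argument is otherwise unaffected.)
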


\begin{proof}
  The set of all faces of essential cells
  \[
    X^{\ess}=\{\delta^0_A\delta^1_B(x)\mid x\in \ess(X),\; A,B\subseteq \ev(x),\; A\cap B=\emptyset\}
  \]
  is a sub-HDA of $X$.
  Clearly every sub-HDA of $X$ that contains $\ess(X)$ must also contain $X^\ess$.
  Since all accepting paths are essential, $\Lang(X^\ess)=\Lang(X)$.
  If $|\ess(X)|=n$ and $|\ev(x)|\le d$ for all $x\in \ess(X)$,
  then $|X^{\ess}|\le n\cdot 3^d$,
  since a cell of dimension $\leq d$ has at most $3^d$ faces.
  \qed
\end{proof}

\subsection{Track objects}

Track objects, introduced in \cite{Hdalang},
provide a mapping from ipomsets to HDAs and are a powerful tool for reasoning about languages.
Below we adapt the definition from \cite[Section~5.3]{Hdalang}.

\begin{definition}
  \label{de:pobj}
  The \emph{track object} of an ipomset $P$
  is the HDA $\sq^P$ defined as follows:
  \begin{itemize}
  \item
    $\sq^P$ is the set of all functions
    $x:P\to\{ 0, *, 1\}$
    such that
    \[
      p<q \implies (x(p),x(q))\in\{(0,0),(*,0),(1,0),(1,*),(1,1)\}.
    \]
  \item
    For $x\in \sq^P$, $\ev(x)=x^{-1}(*)$\quad (the condition above ensures that $x^{-1}(*)$ is discrete);
  \item
    For $x\in\sq^P$, $\nu\in\{0,1\}$ and $A\subseteq \ev(x)$,
    \begin{equation*}
      \delta_A^\nu( x)( p)=
      \begin{cases}
        \nu &\text{for } p\in A, \\
        x( p) &\text{for } p\not\in A;
      \end{cases}
    \end{equation*}
  \item $\bot_{\sq^P}=\{c_\bot^P\}$ and $\top_{\sq^P}=\{c^\top_P\}$, where
    \begin{equation*}
      c_\bot^P(p)=
      \begin{cases}
        * & \text{if } p\in S_P, \\
        0 & \text{if } p\not\in S_P,
      \end{cases}
      \qquad
      c^\top_P(p)=
      \begin{cases}
        * & \text{if } p\in T_P, \\
        1 & \text{if } p\not\in T_P;
      \end{cases}
    \end{equation*}
  \end{itemize}
\end{definition}

We list some properties of track objects needed later.

\begin{lemma}
  \label{l:PathTrack}
  Let $X$ be an HDA, $x,y\in X$ and $P\in\iiPoms$.
  The following conditions are equivalent:
  \begin{enumerate}
  \item There exists a path $\alpha\in\Path(X)_x^y$ such that $\ev(\alpha)=P$.
  \item There is an HDA-map $f:\sq^P\to X_x^y$ (\ie $f(c_\bot^P)=x$ and $f(c^\top_P)=y$).
  \end{enumerate}
\end{lemma}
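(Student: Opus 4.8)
The plan is to prove the two implications separately. The direction $(2)\Rightarrow(1)$ is immediate from functoriality, while $(1)\Rightarrow(2)$ carries the real content, expressing the representability of track objects. For $(2)\Rightarrow(1)$ I would use that $\sq^P$ carries a canonical accepting path $\gamma$ from $c_\bot^P$ to $c^\top_P$ with $\ev(\gamma)=P$, obtained by reading off any sparse step decomposition of $P$ (this is part of the construction of $\sq^P$). Pushing $\gamma$ forward along $f$ settles the claim: since an HDA-map preserves cell types and commutes with face maps, an upstep $(\delta^0_A(z)\arrO{A} z)$ maps to $(\delta^0_A(f(z))\arrO{A} f(z))$ and dually for downsteps, so $f(\gamma)$ is again a path, and its event ipomset depends only on the preserved types $\ev(f(z_i))=\ev(z_i)$ and the unchanged step labels. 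Hence $\ev(f(\gamma))=\ev(\gamma)=P$, and $f(\gamma)\in\Path(X)_x^y$ because $f(c_\bot^P)=x$ and $f(c^\top_P)=y$.

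For $(1)\Rightarrow(2)$ I would induct on the length of the sparse step decomposition of $P$. In the base case $P$ is a single starter $\starter{U}{A}$, terminator $\terminator{U}{A}$, or identity $\id_U$; here $\sq^P$ is a single cube $\sq^U$ with suitable start and accept faces, and I would invoke the representability (Yoneda) property of the cube: a precubical map $\sq^U\to X$ is uniquely determined by, and may be freely prescribed as, any cell of type $U$ in $X$. The path $\alpha$ supplies such a cell, namely its unique cell of type $U$, and the upstep/downstep structure forces the relevant face of that cell to coincide with the prescribed endpoint, so $f$ respects start and accept cells. For the inductive step I write $P=R*Q$ with $Q$ a single step and use the gluing property of track objects from \cite{Hdalang}, $\sq^{R*Q}=\sq^R\cup\sq^Q$, the pieces being identified along the common interface cube $\sq^U$ (where $U=T_R=S_Q$), which is simultaneously the accept cube of $\sq^R$ and the start cube of $\sq^Q$. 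Splitting $\alpha=\alpha_R*\alpha_Q$ at the cell $z$ of type $U$ gives $\ev(\alpha_R)=R$ and $\ev(\alpha_Q)=Q$, and the induction hypothesis supplies $f_R\colon\sq^R\to X$ and $f_Q\colon\sq^Q\to X$.

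The main obstacle is verifying that $f_R$ and $f_Q$ agree on the shared cube $\sq^U$, so that they glue to a well-defined HDA-map $f$ on $\sq^{R*Q}$. By the uniqueness half of the Yoneda property this reduces to a single identity: both maps send the top cell of $\sq^U$ to $z=\tgt(\alpha_R)=\src(\alpha_Q)$. Granting this, the glued $f$ preserves face maps on each piece and consistently on the overlap, and sends $c_\bot^P=c_\bot^R$ to $x$ and $c^\top_P=c^\top_Q$ to $y$, as required. The remaining delicate points are to confirm that the decomposition $\sq^{R*Q}=\sq^R\cup_{\sq^U}\sq^Q$ faithfully matches the definition of track objects in \cite{Hdalang} and that gluing of HDA-maps along a shared sub-HDA is legitimate; with these structural facts in hand the induction closes.
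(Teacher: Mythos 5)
Your proposal is correct in its essentials, but it takes a genuinely different route from the paper: the paper proves this lemma in a single line, as an immediate consequence of \cite[Prop.~89]{Hdalang}, and deliberately treats track objects as a black box (it never defines them, referring instead to \cite[Sect.~5.3]{Hdalang}). You, by contrast, reconstruct the proof of that external proposition from the structure of track objects. Your reconstruction is sound, and the structural facts you flag as delicate are precisely the ones available in \cite{Hdalang} and already used elsewhere in this paper: the decomposition $\sq^{R*Q}\cong \sq^R\cup_{\sq^U}\sq^Q$ glued along the interface cube is \cite[Lem.~65]{Hdalang}, which the paper itself invokes to prove Lemma~\ref{l:PathDivision}; the representability of the cube $\sq^U$ (precubical maps $\sq^U\to X$ correspond bijectively to cells of $X[U]$) is the Yoneda lemma for precubical sets; and gluing two maps that agree on the shared cube is the universal property of the pushout. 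One step you should make explicit to close the induction: an arbitrary path $\alpha$ with $\ev(\alpha)=R*Q$ need not a priori pass through a cell of type $U$ at the right place, so you should either first replace $\alpha$ by its equivalent sparse representative (which has the same source, target and event ipomset) or argue via uniqueness of sparse decompositions (Prop.~\ref{p:SparsePresentation}) that the final maximal run of upsteps (resp.\ downsteps) of $\alpha$ evaluates exactly to $Q$, so that the splitting cell exists. The trade-off between the two approaches is clear: the paper's citation keeps the exposition short and consistent with its black-box treatment of track objects, while your argument is self-contained and exposes the combinatorial reason why the correspondence holds, at the cost of needing the explicit construction of $\sq^P$ to verify the base case and the gluing.
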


\eject
\begin{proof}
  This is an immediate consequence of \cite[Proposition~89]{Hdalang}. \qed
\end{proof}

\begin{lemma}
  \label{l:SubsuTrack}
  If $P, Q\in \iiPoms$ are such that $P\subsu Q$, then there exists an HDA-map $\sq^P\to \sq^Q$.
\end{lemma}

\begin{proof}
  This is \cite[Lemma 63]{Hdalang}. \qed
\end{proof}

\begin{lemma}
  \label{l:SubsuPath}
  Let $X$ be an HDA, $x,y\in X$, $\beta\in\Path(X)_x^y$ and $P\subsu Q=\ev(\beta)$.
  Then there exists $\alpha\in\Path(X)_x^y$
  such that $\ev(\alpha)=P$.
\end{lemma}

\begin{proof}
  This follows immediately from Lemmas \ref{l:PathTrack} and \ref{l:SubsuTrack}.
\end{proof}

\begin{lemma}
  \label{l:PathDivision}
  Let $X$ be an HDA, $x,y\in X$ and $\gamma\in\Path(X)_x^y$.
  Assume that $\ev(\gamma)=P*Q$ for ipomsets $P$ and $Q$.
  Then there exist paths $\alpha\in\Path(X)_x$ and $\beta\in\Path(X)^y$
  such that $\ev(\alpha)=P$, $\ev(\beta)=Q$ and $\tgt(\alpha)=\src(\beta)$.
\end{lemma}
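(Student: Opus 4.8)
The plan is to reduce the statement to the combinatorics of a single, well-understood HDA --- the track object $\sq^{P*Q}$ --- and then transport the resulting splitting to $X$ along the map furnished by Lemma~\ref{l:PathTrack}. First I would apply Lemma~\ref{l:PathTrack} to $\gamma$: since $\ev(\gamma)=P*Q$, there is an HDA-map $f\colon\sq^{P*Q}\to X_x^y$ with $f(c_\bot^{P*Q})=x$ and $f(c^\top_{P*Q})=y$. The key structural fact I would invoke is that track objects respect gluing: $\sq^{P*Q}$ arises by gluing $\sq^P$ and $\sq^Q$ along the cell of type $T_P\cong S_Q$. Concretely, there are HDA-maps $\iota_P\colon\sq^P\to\sq^{P*Q}$ and $\iota_Q\colon\sq^Q\to\sq^{P*Q}$ with $\iota_P(c_\bot^P)=c_\bot^{P*Q}$, $\iota_Q(c^\top_Q)=c^\top_{P*Q}$, and $\iota_P(c^\top_P)=\iota_Q(c_\bot^Q)=:s$, where $s$ is the distinguished ``seam'' cell of type $T_P=S_Q$ at which the two track objects meet.

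Granting this, the argument is short. Put $z=f(s)\in X$. Composition yields HDA-maps $f\circ\iota_P\colon\sq^P\to X_x^z$ and $f\circ\iota_Q\colon\sq^Q\to X_z^y$, since $f\circ\iota_P$ sends $c_\bot^P\mapsto x$, $c^\top_P\mapsto z$, while $f\circ\iota_Q$ sends $c_\bot^Q\mapsto z$, $c^\top_Q\mapsto y$. Applying the converse direction of Lemma~\ref{l:PathTrack} to each map produces a path $\alpha\in\Path(X)_x$ with $\ev(\alpha)=P$ and $\tgt(\alpha)=z$, and a path $\beta\in\Path(X)^y$ with $\ev(\beta)=Q$ and $\src(\beta)=z$. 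Then $\tgt(\alpha)=z=\src(\beta)$, which is exactly what is required.

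The main obstacle is the gluing property of track objects used above; this is the only nontrivial ingredient, and I would either cite it from \cite{Hdalang} or derive it from the definition of $\sq^{P*Q}$ (the cell $s$ is precisely the execution state in which every event of $P-T_P$ has terminated, every event of $T_P=S_Q$ is active, and no event of $Q-S_Q$ has started). If that property is not directly at hand, I would instead argue by \emph{path surgery} on $\gamma$ itself, the point being to exhibit a cut cell $z$ of type $T_P=S_Q$ inside the precubical set underlying $X$. Since every event of $P-T_P$ precedes every event of $Q-S_Q$ in $P*Q$ (the gluing clause for $<_{P*Q}$), there is a ``gap'' in $\gamma$ after all $P-T_P$ events have terminated and before any $Q-S_Q$ event has started; choosing a cell $w$ of $\gamma$ straddling this gap and applying lower faces $\delta^0$ (to unstart events of $Q-S_Q$ that $\gamma$ started early) together with upper faces $\delta^1$ (to terminate events of $P-T_P$ that $\gamma$ kept alive) lands on a cell $z$ of type exactly $T_P=S_Q$, from which a prefix realizing $P$ and a suffix realizing $Q$ can be reassembled. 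I expect this face surgery --- verifying, via the precubical identities, that the unstarting and terminating are coherent so that the two fragments are genuine paths in $X$ with the correct event ipomsets --- to be the delicate part, which is exactly why routing the proof through the track object and Lemma~\ref{l:PathTrack} is preferable.
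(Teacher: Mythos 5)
Your proposal matches the paper's proof essentially verbatim: the paper likewise applies Lemma~\ref{l:PathTrack} to obtain $f\colon\sq^{PQ}\to X_x^y$, invokes the gluing property of track objects (\cite[Lem.~65]{Hdalang}) to get maps $j_P\colon\sq^P\to\sq^{PQ}$ and $j_Q\colon\sq^Q\to\sq^{PQ}$ with $j_P(c^\top_P)=j_Q(c_\bot^Q)$, and then applies Lemma~\ref{l:PathTrack} again to the composites $f\circ j_P$ and $f\circ j_Q$. The only nitpick is that these gluing maps are precubical maps rather than HDA-maps (they need not preserve start/accept cells), but the composites are HDA-maps into $X_x^z$ and $X_z^y$ exactly as you use them, so your argument is the paper's argument.
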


\begin{proof}
  By Lemma \ref{l:PathTrack},
  there is an HDA-map $f: \sq^{P Q}\to X_x^y$.
  By \cite[Lemma~65]{Hdalang},
  there exist precubical maps $j_P: \sq^P\to \sq^{P Q}$, $j_Q: \sq^Q\to \sq^{P Q}$
  such that $j_P(c_\bot^P)=c_\bot^{P Q}$, $j_P(c^\top_P)=j_Q(c_\bot^Q)$ and $j_Q(c^\top_Q)=c^\top_{P Q}$.
  Let $z=f(j_P(c_\bot^P))$,
  then $f\circ j_P: \sq^P\to X_x^z$ and $f\circ j_Q: \sq^Q\to X_z^y$ are HDA-maps,
  and by applying Lemma \ref{l:PathTrack} again to $j_P$ and $j_Q$
  we obtain $\alpha$ and $\beta$. \qed
\end{proof}

\section{Myhill-Nerode theorem}
\label{se:mn}

The \emph{prefix quotient} of
a language $L\in \Langs$ by an ipomset $P$ is the language
\begin{equation*}
  P\lquo L=\{Q\in \iiPoms\mid P Q\in L\}.
\end{equation*}
Similarly, the \emph{suffix quotient} of $L$ by $P$ is $L/P=\{Q\in \iiPoms\mid Q P\in L\}$.
Denote
\[
  \suff(L)=\{P\lquo L\mid P\in \iiPoms\},
  \qquad
  \pref(L)=\{L/P\mid P\in\iiPoms\}.
\]
We record the following property of quotient languages.

\begin{lemma}
  \label{l:QuotIncl}
  If $L$ is a language and $P\subsu Q$,
  then $Q\lquo L\subseteq P\lquo L$.
\end{lemma}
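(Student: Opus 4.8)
The plan is to reduce the statement to a compatibility between gluing and subsumption, namely the claim that whenever $P\subsu Q$ and the gluing $QR$ is defined, then $PR$ is also defined and $PR\subsu QR$. Granting this, the lemma is immediate: if $R\in Q\backslash L$ then $QR\in L$ by the definition of the prefix quotient, and since $PR\subsu QR$ while $L$ is closed under refinement (Definition~\ref{de:langs}), we conclude $PR\in L$, that is, $R\in P\backslash L$.

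To prove the claim, let $f\colon P\to Q$ be a subsumption. First I would check that $PR$ is defined: $f$ restricts to a label-preserving bijection $T_P\to T_Q$, and since target interfaces are losets (discrete ipomsets, totally ordered by $\evord$) this restriction is an isomorphism of losets; hence $T_P\cong T_Q\cong S_R$ and $PR$ exists. I would then define $h\colon PR\to QR$ as the gluing $f*\id_R$, acting as $f$ on the copy of $P$ and as the identity on the copy of $R$. Well-definedness on the glued quotients is where uniqueness of loset isomorphisms enters: the interface identifications used to form $PR$ and $QR$ are the unique isomorphisms $T_P\to S_R$ and $T_Q\to S_R$, and uniqueness forces them to agree with $f|_{T_P}$, so $h$ respects the identifications. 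That $h$ is a bijection respecting interfaces ($h(S_{PR})=f(S_P)=S_Q=S_{QR}$ and $h(T_{PR})=T_R=T_{QR}$) and labels is then routine.

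Reflection of precedence I would handle by case analysis over the three clauses defining $<_{QR}$ (the relations $<_Q$, $<_R$, and the cross-clause from $Q-T_Q$ to $R-S_R$): on the $R$-part $h$ is the identity, so $<_R$ is mirrored exactly; the cross-clause for $QR$ matches the cross-clause for $PR$ because $f(P-T_P)=Q-T_Q$; and a $<_Q$-relation between images $f(x),f(y)$ reflects to $x<_P y$ because $f$ reflects precedence. The genuinely delicate point, and the step I expect to be the main obstacle, is preserving essential event order, because $\evord_{PR}$ is the \emph{transitive closure} of ${\evord_P}\cup{\evord_R}$. Given concurrent $x,y$ in $PR$ with $x\evord_{PR}y$, one obtains a chain of $\evord_P$- and $\evord_R$-links; the $\evord_R$-links transport trivially since $h$ is the identity on $R$, but an individual $\evord_P$-link need not be \emph{essential} in $P$, whereas $f$ is only guaranteed to preserve event order between concurrent elements. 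The work is therefore to show that the essential event order of $PR$ between a genuinely concurrent pair can always be realised through links that $f$ preserves — exploiting that precedence-related links are irrelevant for concurrent endpoints and that event order is total on each loset of concurrently active events — so that transitivity in $QR$ yields $h(x)\evord_{QR}h(y)$ and $h$ is a subsumption.
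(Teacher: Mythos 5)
Your proposal follows exactly the paper's route: the paper's entire proof consists of the assertion that $P\subsu Q$ implies $PR\subsu QR$, followed by the same one-line computation $R\in Q\backslash L \iff QR\in L \implies PR\in L \iff R\in P\backslash L$ using closure of $L$ under refinement, so your reduction is precisely the paper's argument, with the gluing--subsumption compatibility left unproved there. The step you flag as the main obstacle in your (extra) verification of that compatibility in fact resolves more cleanly than you fear: by the mixed clause defining $<_{PR}$, no concurrent pair of $PR$ has one element in $P-T_P$ and the other in $R-S_R$, and since the gluing identification matches $\evord_P$ on $T_P$ with $\evord_R$ on $S_R$, any $\evord_{PR}$-chain between two elements of the $P$-part (resp.\ the $R$-part) can be collapsed to a single $\evord_P$-link (resp.\ $\evord_R$-link), after which preservation by $f$ (resp.\ by the identity) applies directly.
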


\begin{proof}
  If $P\subsu Q$, then $PR\subsu QR$.
  Thus,
  \[
    R\in Q\lquo L
    \iff
    QR\in L
    \implies
    PR\in L
    \iff
    R\in P\lquo L.  \vspace*{-7mm}
  \]	
  \end{proof}

The main goal of this section is to show the following.

\begin{theorem}
  \label{t:MN}
  For a language $L\in\Langs$
  the following conditions are equivalent.
  \begin{itemize}
  \item[\rm (a)] $L$ is regular.
  \item[\rm (b)] The set $\suff(L)\subseteq\Langs$ is finite.
  \item[\rm (c)] The set $\pref(L)\subseteq\Langs$ is finite.
  \end{itemize}
\end{theorem}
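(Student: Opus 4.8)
The plan is to prove (a)$\Leftrightarrow$(b) directly and to obtain (a)$\Leftrightarrow$(c) by a time-reversal duality. For the easy implication (a)$\Rightarrow$(b), assume $L=\Lang(X)$ for a finite HDA $X$ and, for every ipomset $P$, consider the set of cells reachable from a start cell while reading $P$,
\[
  R(P)=\{\tgt(\alpha)\mid \alpha\in\Path(X)_{\bot_X},\ \ev(\alpha)=P\}\subseteq X.
\]
Splitting an accepting path that reads $PQ=P*Q$ at the $P/Q$-interface with Lemma~\ref{l:PathDivision}, and conversely concatenating a path reading $P$ with one reading $Q$, yields $P\backslash L=\bigcup_{y\in R(P)}\Lang(X_y^{\top_X})$. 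Hence $P\backslash L$ is determined by the subset $R(P)\subseteq X$, and since $X$ is finite there are at most $2^{|X|}$ possibilities; thus $\suff(L)$ is finite.

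For the hard implication (b)$\Rightarrow$(a) I would first observe that finiteness of $\suff(L)$ bounds the dimension of $L$: if $L$ contained ipomsets with arbitrarily many mutually concurrent events, the prefixes starting $n$ such events would have nonempty residuals whose common source interface is an $n$-element loset, and these residuals would be pairwise distinct, contradicting (b). Fix this bound $d$. Writing $P\Lsim P'$ for $P\backslash L=P'\backslash L$, condition (b) says $\Lsim$ has finitely many classes. As noted in the introduction, $\Lsim$ is too coarse to organise into an HDA, because the residual of a prefix forgets how its currently active events were interleaved with already terminated concurrent ones, information that the lower face maps (unstarting events) must recover. I therefore refine $\Lsim$ to $\Lapprox$, defined by induction on $|T_P|$: $P\Lapprox P'$ iff $T_P=T_{P'}$, $P\backslash L=P'\backslash L$, and for every nonempty $A\subseteq T_P$ both the terminations $P*\terminator{T_P}{A}\Lapprox P'*\terminator{T_P}{A}$ and the deletions $(P\setminus A)\Lapprox(P'\setminus A)$ hold, where $P\setminus A$ is $P$ with the (maximal) events $A$ removed. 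Both operations strictly shrink the target, so the recursion is well founded, and a counting argument on $|T_P|\le d$ shows that $\Lapprox$ still has only finitely many classes: such a class is pinned down by its ($\Lsim$-)residual together with its finitely many faces, each a class of strictly smaller target.

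The Myhill--Nerode HDA $X_L$ then has the $\Lapprox$-classes $\eqcl P$ as cells, typed by $\ev(\eqcl P)=T_P$, with upper faces $\delta^1_A\eqcl P=\eqcl{P*\terminator{T_P}{A}}$ and lower faces $\delta^0_A\eqcl P=\eqcl{P\setminus A}$ (well defined precisely because $\Lapprox$ was closed under these operations); start cells are the $\eqcl{\id_S}$ and accept cells the $\eqcl P$ with $P\in L$ (well defined since $\Lapprox$ refines membership in $L$). The precubical identities reduce to the fact that deletion and termination of disjoint event sets commute. To see $\Lang(X_L)=L$, I would use Proposition~\ref{p:SparsePresentation}: for $Q\in L$ the sparse decomposition of $Q$ drives a path from $\eqcl{\id_{S_Q}}$ whose steps glue the successive starters and terminators, ending at the accepting cell $\eqcl Q$ and reading exactly $Q$, so $L\subseteq\Lang(X_L)$. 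Conversely, an accepting path reading $Q$ ends at some accept cell $\eqcl R$ with $R\in L$; tracking the path shows $Q\subsu R$ (upsteps may land in more concurrent cells), whence $Q\in L$ because $L$ is closed under subsumption. Lemmas~\ref{l:PathTrack} and~\ref{l:QuotIncl} support these translations.

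Finally, (a)$\Leftrightarrow$(c) follows by applying the equivalence (a)$\Leftrightarrow$(b) to the opposite language $L^{\mathrm{op}}$, obtained by reversing precedence and swapping source and target interfaces: reversal sends a finite HDA to a finite HDA and exchanges prefix with suffix quotients, so $\pref(L)$ is finite iff $\suff(L^{\mathrm{op}})$ is. The main obstacle throughout is the hard direction's lower face maps: isolating exactly the concurrency data that residuals discard, packaging it into an equivalence $\Lapprox$ that is simultaneously fine enough to make $\delta^0_A$ well defined and the precubical identities hold, yet coarse enough to remain finite. The subsumption-closure of $L$ is what ultimately reconciles the non-uniqueness of histories reaching a cell with language correctness.
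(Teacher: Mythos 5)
Your argument for (a)$\implies$(b) (via the sets $R(P)$ of cells reached while reading $P$) and your reduction of (c) to (b) by time reversal are both correct and coincide with the paper's. The hard direction (b)$\implies$(a) is where your proposal breaks, in two places. The first is the lower face maps. You put $\delta^0_A\eqcl{P}=\eqcl{P\setminus A}$ for \emph{every} $A\subseteq T_P$, including events lying in the source interface $S_P$. This makes the construction accept too much: for $L=\{\bullet\,a\,\bullet\}$ (subsumption-closed, with finite $\suff(L)$), the cell $\eqcl{\bullet\,a\,\bullet}=\eqcl{\id_{[a]}}$ is accepting, its lower face $\delta^0_a\eqcl{\bullet\,a\,\bullet}=\eqcl{\epsilon}=\eqcl{\id_\emptyset}$ is a start cell, and so the one-step path $\eqcl{\epsilon}\arrO{a}\eqcl{\bullet\,a\,\bullet}$ is accepting but reads the starter $a\,\bullet\notin L$: it ``unstarts'', then starts, an event which $L$ insists is already running at time $0$. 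Since a precubical set must have \emph{all} lower faces, you cannot simply omit these faces either. This is exactly why the paper restricts genuine deletion to $A\subseteq\rfin(P)=T_P-S_P$, introduces the subsidiary junk cells $w_U$ as the values $\delta^0_A\eqcl{P}$ for all other $A$ (and proves them inaccessible, Lemma \ref{l:AccNeg}), and builds the signature $\fin(P)$ --- which records $T_P\cap S_P$ --- into the equivalence, so that the case distinction does not depend on the representative. Your relation records only the loset $T_P$ and cannot even express this distinction.

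The second gap is the inclusion $\Lang(X_L)\subseteq L$. Your claim that an accepting path reading $Q$ ends in a cell $\eqcl{R}$ with $R\in L$ and $Q\subsu R$ is false: a subsumption is a bijection, so it would force $|Q|=|R|$, and the claim already fails in the example above ($a\,\bullet\not\subsu\bullet\,a\,\bullet$, since subsumptions respect interfaces); moreover, when $\MN(L)$ has loops (Example \ref{exa:hda-loop}), accepting paths read arbitrarily long ipomsets while ending in finitely many cells. Note also that a path reading $Q$ need not end at $\eqcl{Q}$ at all: distinct cells can share a lower face (this is the non-determinism of Example \ref{ex:nondet}), so an upstep may land in a cell that is not the class of the prefix read so far. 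The invariant that actually closes the argument is weaker: the paper's Lemma \ref{l:Limit} shows, by induction along the path, that $\tgt(\alpha)\backslash L\subseteq\ev(\alpha)\backslash L$, with the upstep case resting on Lemmas \ref{l:MinExt} and \ref{l:ExtQ}; this is where subsumption-closure of $L$ genuinely enters, not where you invoked it. Acceptance of $\tgt(\alpha)$ then yields $\ev(\alpha)\in L$. Your dimension bound and class-counting sketch are reasonable in spirit (the paper obtains finiteness in Lemma \ref{l:FiniteEss} from the fact that the residual determines the underlying loset of the signature), but they cannot rescue a construction whose recognised language is already wrong.
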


We prove only the equivalence between (a) and (b);
equivalence between (a) and (c) is symmetric.
First we prove the implication (a)$\implies$(b).
Let $X$ be an HDA with $\Lang(X)=L$.
For $x\in X$ define
languages $\mathsf{Pre}(x)=\Lang(X_\bot^x)$
and $\mathsf{Post}(x)=\Lang(X_x^\top)$.

\begin{lemma}
  For every $P\in\iiPoms$,
    $P\lquo L = \bigcup\{\mathsf{Post}(x)\mid x\in X,\; P\in \mathsf{Pre}(x)  \}$.
\end{lemma}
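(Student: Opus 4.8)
The plan is to prove the two inclusions separately, relying on one structural observation: a path $\alpha$ ending at a cell $x$ has an event ipomset whose target interface is exactly $\ev(x)$, and dually a path starting at $x$ has source interface $\ev(x)$. This is what will make the gluings appearing on both sides well-defined. Throughout I unfold the definitions: $P\in\mathsf{Pre}(x)$ means there is a path $\alpha\in\Path(X)_\bot^x$ with $\ev(\alpha)=P$, and $Q\in\mathsf{Post}(x)$ means there is a path $\beta\in\Path(X)_x^\top$ with $\ev(\beta)=Q$ (here I use that $\Lang$ is defined as the set of event ipomsets of accepting paths).

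For the inclusion $\supseteq$, suppose $Q\in\mathsf{Post}(x)$ for some $x$ with $P\in\mathsf{Pre}(x)$. Pick witnessing paths $\alpha\in\Path(X)_\bot^x$ and $\beta\in\Path(X)_x^\top$ with $\ev(\alpha)=P$ and $\ev(\beta)=Q$. Since $\tgt(\alpha)=x=\src(\beta)$, the concatenation $\alpha*\beta$ is a well-defined accepting path, and by the recursive definition of event ipomsets $\ev(\alpha*\beta)=\ev(\alpha)*\ev(\beta)=P*Q$, the gluing being defined because both factors carry the interface $\ev(x)$ at $x$. Hence $PQ=\ev(\alpha*\beta)\in\Lang(X)=L$, so $Q\in P\backslash L$.

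For the converse inclusion $\subseteq$, let $Q\in P\backslash L$, so $PQ\in L$ and there is an accepting path $\gamma$ with $\ev(\gamma)=P*Q$. Here the real content is the path-division result, Lemma~\ref{l:PathDivision}: applied to $\gamma$ (whose source lies in $\bot$ and target in $\top$) together with the factorisation $\ev(\gamma)=P*Q$, it yields paths $\alpha\in\Path(X)_\bot$ and $\beta\in\Path(X)^\top$ with $\ev(\alpha)=P$, $\ev(\beta)=Q$ and a common meeting cell $z=\tgt(\alpha)=\src(\beta)$. Then $\alpha\in\Path(X)_\bot^z$ witnesses $P\in\mathsf{Pre}(z)$ and $\beta\in\Path(X)_z^\top$ witnesses $Q\in\mathsf{Post}(z)$, so $Q\in\mathsf{Post}(z)\subseteq\bigcup\{\mathsf{Post}(x)\mid x\in X,\ P\in\mathsf{Pre}(x)\}$.

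I expect the $\supseteq$ direction to be entirely routine. The only genuine obstacle is the $\subseteq$ direction, and it is precisely the appeal to Lemma~\ref{l:PathDivision}: the nontrivial fact that a single accepting path realising the glued ipomset $P*Q$ can be cut at an actual cell $z$ of $X$ into a $P$-path and a $Q$-path. Everything else is unwinding definitions, together with the interface-matching remark that guarantees the gluings on both sides are defined. I would therefore present the proof as two short paragraphs, citing Lemma~\ref{l:PathDivision} for the hard half.
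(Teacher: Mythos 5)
Your proof is correct, but it is packaged differently from the paper's. The paper proves this lemma as a single chain of equivalences at the level of HDA-maps from track objects: $PQ\in L$ iff there is a map $f:\sq^{PQ}\to X_\bot^\top$ (Lemma~\ref{l:PathTrack}), iff there exist a cell $x$ and maps $g:\sq^P\to X_\bot^x$ and $h:\sq^Q\to X_x^\top$ (the restriction/gluing property of track objects), iff $P\in\mathsf{Pre}(x)$ and $Q\in\mathsf{Post}(x)$ for some $x$. You instead argue at the level of paths and split the statement into two inclusions: path concatenation for $\supseteq$, and Lemma~\ref{l:PathDivision} for $\subseteq$. Since Lemma~\ref{l:PathDivision} is itself derived from Lemma~\ref{l:PathTrack} and the track-object decomposition, the underlying mathematics coincides; what your route buys is modularity --- the hard half becomes a one-line citation of a lemma the paper has already established, so no track-object reasoning needs to be repeated --- together with your explicit interface-matching remark (the target interface of $\ev(\alpha)$ is $\ev(\tgt(\alpha))$, and dually for sources), which the paper leaves implicit but which is indeed what makes both gluings $P*Q$ well-defined. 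What the paper's version buys is compactness: each step of the biconditional chain handles both directions symmetrically, rather than via two separately-run arguments.
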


\begin{proof}
  We have
  \begin{align*}
    Q\in P\lquo L
    \iff
    P Q\in L
   	\;\;\overset{\text{Lem.\@ \ref{l:PathTrack}}}\iff\;\; &
    \exists\; f: \sq^{P Q}\to X = X_\bot^\top \\
    \;\;\overset{\text{Lem.\@ \ref{l:PathDivision}}}\iff\;\; &
    \exists\; x\in X, g: \sq^{P}\to X_\bot^x,\; h: \sq^{Q}\to X_x^\top \\
   	\;\;\overset{\text{Lem.\@ \ref{l:PathTrack}}}\iff\;\; &
    \exists\; x\in X:
    P\in \Lang(X_\bot^x),\;
    Q\in \Lang(X_x^\top) \\
    \;\;\overset{\phantom{\text{Lem.\@ \ref{l:PathTrack}}}}\iff\;\; &
    \exists\; x\in X:
    P\in \mathsf{Pre}(x),\;
    Q\in \mathsf{Post}(x).
  \end{align*}
  The last condition says that $Q$ belongs to the right-hand side of the equation. \qed
\end{proof}

\begin{varproof}[of Theorem~\ref{t:MN}, {\rm (a)$\implies$(b)}]
  The family of languages $\{P\lquo L\mid P\in \iiPoms\}$
  is a subfamily of
  $ \{\bigcup_{x\in Y} \mathsf{Post}(x)\bigmid Y\subseteq X\}$
  which is finite. \qed
\end{varproof}

\subsection{HDA construction}
\label{se:MN(L)}

Now we show that (b) implies (a).
Fix a language $L\in\Langs$, with $\suff(L)$ finite or infinite.
We will construct an HDA $\MN(L)$ that recognises $L$
and show that if $\suff(L)$ is finite,
then the essential part $\MN(L)^\ess$ is finite.
The cells of $\MN(L)$ are equivalence classes of ipomsets
under a relation $\Lapprox$ induced by $L$ which we will introduce below.
The relation $\Lapprox$ is defined using prefix quotients,
but needs to be stronger than prefix quotient equivalence.
This is because events may be concurrent and because ipomsets have interfaces.
We give examples just after the construction.

\medskip
For an ipomset $\ilo{S}{P}{T}$ define its \emph{(target) signature}
to be the starter $\fin(P)
=\starter{T}{T-S}$.
Thus $\fin(P)$ collects all target events of $P$,
and its source interface contains those events that are also in the source interface of $P$.
We also write $\rfin(P)=T-S\subseteq \fin(P)$:
the set of all target events of $P$ that are not source events.
An important property is that removing elements of $\rfin(P)$
does not change the source interface of $P$.
For example,
\[
  \fin
  \left(
    \loset{\!\ibullet\!\!&a&\!\!\ibullet\! \\
      \!\ibullet\!\!&a\\
      &c&\!\!\ibullet\!		
    }
  \right)
  =
  \loset{\!\ibullet\!\!&a&\!\!\ibullet\! \\
    &c&\!\!\ibullet\!		
  }
  ,
  \quad
  \fin
  \left(
    \loset{\!\ibullet\!\!&a c&\!\!\ibullet\!\\\!\ibullet\!\!&b&\!\!\ibullet\!}
  \right)
  =
  \loset{&c&\!\!\ibullet\!\\ \!\ibullet\!\!&b&\!\!\ibullet\!},
  \quad
  \fin
  \left(
    \loset{a c&\!\!\ibullet\!\\b&\!\!\ibullet\!}
  \right)
  =
  \loset{c&\!\!\ibullet\!\\ b&\!\!\ibullet\!};
\]
$\rfin$ is $\{c\}$ in the first two examples and equal to $\loset{c\\b}$ in the last.

\eject
We define two equivalence relations on $\iiPoms$ induced by $L$:
\begin{itemize}
\item
  Ipomsets $P$ and $Q$ are \emph{weakly equivalent} ($P\Lsim Q$)
  if $\fin(P)\cong \fin(Q)$ and $P\lquo L=Q\lquo L$.
  Obviously, $P\Lsim Q$ implies $T_P\cong T_Q$
  and $\rfin(P)\cong \rfin(Q)$.
\item
  Ipomsets $P$ and $Q$ are \emph{strongly equivalent} ($P\Lapprox Q$)
  if
  $P\Lsim Q$ and
  for all $A\subseteq \rfin(P)\cong\rfin(Q)$
  we have $(P-A)\lquo L=(Q-A)\lquo L$.
\end{itemize}
Evidently $P\Lapprox Q$ implies $P\Lsim Q$, but the inverse does not always hold.
We explain in Example \ref{ex:strongeq} below
why $\Lapprox$, and not $\Lsim$, is the proper relation to use for constructing $\MN(L)$.

\begin{lemma}
\label{l:StrongEqDef}
  If $P\Lapprox Q$,
  then $P-A\Lapprox Q-A$ for all $A\subseteq \rfin(P)\cong\rfin(Q)$.
\end{lemma}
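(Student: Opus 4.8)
The plan is to verify the two defining clauses of $\Lapprox$ for the ipomsets $P'=P-A$ and $Q'=Q-A$ directly, reducing each to an instance of the corresponding clause already known for $P\Lapprox Q$. The whole argument rests on a few elementary identities describing how deletion of target events interacts with the signature constructions, and I would record these first. Since $A\subseteq\rfin(P)=T_P-S_P$ is disjoint from $S_P$, deleting $A$ leaves the source interface untouched and only shrinks the target interface: $S_{P'}=S_P$ and $T_{P'}=T_P-A$. Consequently $\fin(P')=\starter{T_P-A}{(T_P-A)-S_P}$, which is exactly $\fin(P)$ with the events $A$ removed from its carrier and target interface (the source interface $T_P\cap S_P$ being unaffected as $A\cap S_P=\emptyset$); I abbreviate this as $\fin(P')=\fin(P)-A$. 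The same computation gives $\rfin(P')=T_{P'}-S_{P'}=\rfin(P)-A$, and for any further set $B$ one has the associativity $(P'-B)=P-(A\cup B)$. With these in hand the rest is purely formal.

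Next I would establish $P'\Lsim Q'$. For the signature condition, $P\Lsim Q$ provides a (unique) isomorphism $\fin(P)\cong\fin(Q)$, and it is along this isomorphism that $A\subseteq\rfin(P)$ is identified with a subset of $\rfin(Q)$. Restricting it to the complements of $A$ yields $\fin(P)-A\cong\fin(Q)-A$, i.e.\ $\fin(P')\cong\fin(Q')$ by the identities above; uniqueness of isomorphisms between iposets guarantees that this restriction is again the canonical isomorphism, so no choices intervene. For the quotient condition, $P'\backslash L=(P-A)\backslash L=(Q-A)\backslash L=Q'\backslash L$, the middle equality being precisely the strong-equivalence clause of $P\Lapprox Q$ instantiated at the subset $A$. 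Hence $P'\Lsim Q'$.

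It remains to check the strengthening clause for $P'$ and $Q'$: for every $B\subseteq\rfin(P')\cong\rfin(Q')$ we must show $(P'-B)\backslash L=(Q'-B)\backslash L$. Using $\rfin(P')=\rfin(P)-A$, such a $B$ is a subset of $\rfin(P)$ disjoint from $A$, so $A\cup B$ is again a subset of $\rfin(P)$. By associativity $(P'-B)=P-(A\cup B)$ and likewise $(Q'-B)=Q-(A\cup B)$, whence $(P'-B)\backslash L=(P-(A\cup B))\backslash L=(Q-(A\cup B))\backslash L=(Q'-B)\backslash L$, the middle equality being the strong-equivalence clause of $P\Lapprox Q$ instantiated at $A\cup B$. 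Combined with the previous paragraph this yields $P'\Lapprox Q'$.

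There is no deep step here; the only point demanding care is the bookkeeping around the canonical identification $\rfin(P)\cong\rfin(Q)$ induced by $\fin(P)\cong\fin(Q)$. One must be sure that the subsets $A$, $B$, and $A\cup B$ correspond to one another consistently on the two sides, and that $\fin$ and $\rfin$ restrict compatibly along this isomorphism so that the identification $\rfin(P')\cong\rfin(Q')$ is literally the restriction of $\rfin(P)\cong\rfin(Q)$. This is exactly where uniqueness of isomorphisms between iposets is used, and it is the one detail I would spell out in full; everything else is the formal manipulation of the deletion identities recorded at the outset.
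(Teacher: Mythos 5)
Your proof is correct and follows essentially the same route as the paper's: the same deletion identities $\fin(P-A)=\fin(P)-A$ and $(P-A)-B = P-(A\cup B)$, with the weak-equivalence clause obtained by instantiating strong equivalence at $A$ and the strengthening clause at $A\cup B$. The extra care you take with the canonical identification along the unique isomorphism $\fin(P)\cong\fin(Q)$ is sound and is left implicit in the paper.
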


\begin{proof}
	For every $A$ we have $(P-A)\lquo L=(Q-A)\lquo L$, and
	\[
		\fin(P-A)=\fin(P)-A\cong \fin(Q)-A=\fin(Q-A),
	\]
	Thus, $P-A\Lsim Q-A$.
	Further, for every $B\subseteq \rfin(P-A)\cong \rfin(Q-A)$,
	\[
		((P-A)-B)\lquo L
		=
		(P-(A\cup B))\lquo L
		=
		(Q-(A\cup B))\lquo L
		=
		((Q-A)-B)\lquo L,
	\]
	which shows that $P-A\Lapprox Q-A$. \qed
\end{proof}

Now define an HDA $\MN(L)$ as follows.
For $U\in\sq$, write $\iiPoms_U=\{P\in\iiPoms\mid T_P\cong U\}$ and let
\[
  \MN(L)[U] = \iiPoms_U/{\Lapprox} \cup \{w_U\},
\]
where the $w_U$ are new \emph{subsidiary} cells which are introduced
solely to define some lower faces. (They will not affect the language of $\MN(L)$).

\medskip
The $\Lapprox$-equivalence class of $P$
will be denoted by $\eqcl{P}$ (but often just $P$ in examples).
Face maps are defined as follows,
for $A\subseteq U\in\sq$ and $P\in \iiPoms_U$:
\begin{equation}
\label{e:FaceMaps}
	\delta^0_A(\eqcl{P})
	=
	\begin{cases}
		\eqcl{P-A} & \text{if $A\subseteq\rfin(P)$}, \\
		w_{U-A} & \text{otherwise},
	\end{cases}
	\qquad
	\delta^1_A(\eqcl{P})
	=
	\eqcl{P*\terminator{U}{A}},
\end{equation}
\[
	\delta^0_A(w_U)=\delta^1_A(w_U)=w_{U-A}.
\]
In other words,
if $A$ has no source events of $P$, then
$\delta^0_A$ removes $A$ from $P$ (the source interface of $P$ is unchanged).
If $A$ contains any source event, then $\delta^0_A(P)$ is a subsidiary cell.

\medskip
Finally, start and accept cells are given by
\[
  \bot_{\MN(L)} = \{\eqcl{\id_U}\}_{U\in \sq},
  \qquad
  \top_{\MN(L)}=\{\eqcl{P}\mid P\in L\}.
\]
The cells $\eqcl{P}$ will be called \emph{regular}.
They are $\Lapprox$-equivalence classes of ipomsets,
lower face maps unstart events, and upper face maps terminate events.
All faces of subsidiary cells $w_U$ are subsidiary,
and upper faces of regular cells are regular.
Below we present several examples,
in which we show only the essential part $\MN(L)^\ess$ of $\MN(L)$.

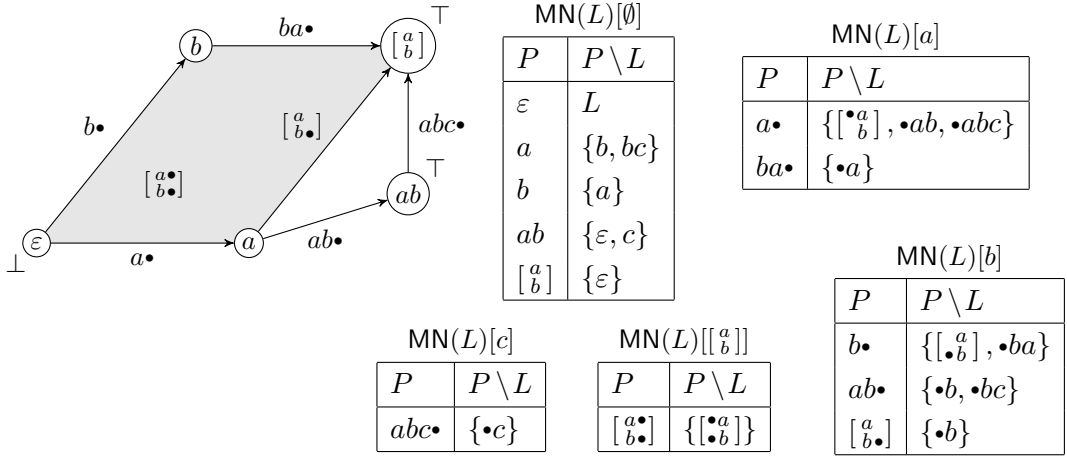
\begin{figure}[tbp]
  \centering
  \begin{tikzpicture}[x=1.4cm, y=1.3cm]
    \filldraw[color=black!10!white] (0,0)--(2,0)--(3.5,2)--(1.5,2)--(0,0);
    \node[state] (eps) at (0,0) {$\epsilon$};
    \node[below left] at (eps) {$\bot$};
    \node[state] (a) at (2,0) {$a$};
    \node[state] (b) at (1.5,2) {$b$};
    \node[state] (ba) at (3.5,2) {$\loset{a\\b}$};
    \node[above right=1.5mm] at (ba) {$\top$};
    \node[state] (ab) at (3.5,.5) {$a b$};
    \node[above right=0.9mm] at (ab) {$\top$};
    \path (eps) edge node[swap] {$a\ibullet$} (a);
    \path (eps) edge node {$b\ibullet$} (b);
    \path (a) edge node {$\loset{a\\b&\!\!\ibullet\!}$\!\!} (ba);
    \path (b) edge node {$b a\ibullet$} (ba);
    \path (a) edge node[below] {$a b\ibullet$} (ab);
    \path (ab) edge node[swap] {$a b c\ibullet$} (ba);
    \node at (1.2,.6) {$\loset{a&\!\!\ibullet\!\\b&\!\!\ibullet\!}$};
    \begin{scope}[shift={(5,0)}]
      \node at (0.2,2.3) {$\MN(L)[\emptyset]$};
      \node[below] at (0.2,2.2) {$%
        \normalsize%
        \begin{array}[t]{|l|l|}
          \hline
          P & P\lquo L \\\hline
          \epsilon & L \\
          a & \{b, b c\} \\
          b & \{a\} \\
          a b & \{\epsilon, c\} \\
          \loset{a\\b} & \{\epsilon\}\\
          \hline
        \end{array}
        $};
      \node at (3,2.1) {$\MN(L)[a]$};
      \node[below] at (3,2) {$%
        \normalsize%
        \begin{array}[t]{|l|l|}
          \hline
          P & P\lquo L \\\hline
          a\ibullet & \{\loset{\!\ibullet\!\!&a\\&b}, \ibullet a b, \ibullet a b c\} \\
          b a\ibullet & \{\ibullet a\}\\
          \hline
        \end{array}
        $};
      \node at (3.6,-0.15) {$\MN(L)[b]$};
      \node[below] at (3.6,-0.25) {$%
        \normalsize%
        \begin{array}[t]{|l|l|} \hline
          P & P\lquo L \\\hline
          b\ibullet & \{\loset{&a\\\!\ibullet\!\!&b}, \ibullet b a\} \\
          a b\ibullet & \{\ibullet b, \ibullet b c\} \\
          \loset{a\\b&\!\!\ibullet\!} & \{\ibullet b\}\\
          \hline
        \end{array}
        $};
      \node at (-1,-1.0) {$\MN(L)[c]$};
      \node[below] at (-1,-1.1) {$%
        \normalsize%
        \begin{array}[t]{|l|l|}\hline
          P & P\lquo L \\\hline
          a b c\ibullet & \{\ibullet c\}\\ \hline
        \end{array}
        $};
      \node at (1.1,-1.0) {$\MN(L)[\loset{a\\b}]$};
      \node[below] at (1.1,-1.1) {$%
        \normalsize%
        \begin{array}[t]{|l|l|}\hline
          P & P\lquo L \\\hline
          \loset{a&\!\!\ibullet\!\\b&\!\!\ibullet\!} & \{\loset{\!\ibullet\!\!&a\\\!\ibullet\!\!&b}\}\\ \hline
        \end{array}
        $};
    \end{scope}
  \end{tikzpicture}
  \caption{HDA $\MN(L)$ of Example \ref{ex:nondet}, showing names of cells instead of labels
    (labels are target interfaces of names).
    Tables show essential cells together with prefix quotients.}
  \label{fig:ex.nondet}\vspace*{-2mm}
\end{figure}

\begin{example}
  \label{ex:nondet}
  Let $L=\{\loset{a\\b}, a b c\}\down=\{\loset{a\\b}, a b, b a, a b c\}$.
  Figure \ref{fig:ex.nondet} shows the HDA $\MN(L)^\ess$
  together with a list of essential cells of $M(L)$
  and their prefix quotients in $L$.
  Note that the state $\eqcl{a}$ has \emph{two} outgoing $b$-labelled edges:
  $\eqcl{a b\ibullet}$ and $\eqcl{\loset{a\\b&\!\!\ibullet\!}}$.
  The generating ipomsets have different prefix quotients
  because of $\{\loset{a\\b}, a b c\}\subseteq L$, but the same lower face $\eqcl{a}$.
  (Note that $\eqcl{b a\ibullet}=\eqcl{\loset{a\ibullet\\b\phantom{\ibullet}}}$.)

\medskip
  Intuitively, $\MN(L)^\ess$ is thus \emph{non-deterministic};
  this is interesting because the standard Myhill-Nerode theorem
  for finite automata constructs deterministic automata.
  We will give a precise definition of determinism for HDAs in the next section
  and show in Example \ref{ex:NonDetLanguage} that no deterministic HDA $X$ exists with $\Lang(X)=L$.
\end{example}

\begin{example}
  \label{ex:strongeq}
  Here we explain why we need to use $\Lapprox$-equivalence classes
  and not $\Lsim$-equivalence classes.
  The example is one-dimensional, which means that it applies to standard finite automata.
  The reason one does not see the problem in the standard Myhill-Nerode construction
  for finite automata is that this operates only on states and not on transitions.

\medskip
  Let $L=\{aa, ab, ba\}$, then $\MN(L)^\ess$ is as below.
  \begin{equation*}
    \begin{tikzpicture}[y=1cm]
      \node[state] (eps) at (0,0) {$\epsilon$};
      \node[below left] at (eps) {$\bot$};
      \node[state] (a) at (2,-1) {$a$};
      \node[state] (b) at (2,1) {$b$};
      \node[state] (aa) at (4,0) {$aa$};
      \node[above right=.1] at (aa) {$\top$};
      \path (eps) edge node[swap] {$a\ibullet$} (a);
      \path (eps) edge node {$b\ibullet$} (b);
      \path (a) edge[bend right] node[swap] {$aa\ibullet$} (aa);
      \path (a) edge[bend left] node {$ab\ibullet$} (aa);
      \path (b) edge[bend left] node {$ba\ibullet$} (aa);
    \end{tikzpicture}
  \end{equation*}
  We have $aa\ibullet\lquo L=ba\ibullet\lquo L=\{\ibullet a\}$,
  thus $aa\ibullet\Lsim ba\ibullet$.
  Yet $aa\ibullet$ and $ba\ibullet$ are not strongly equivalent,
  because $a\lquo L=\{a,b\}\ne \{a\}=b\lquo L$.
  This provides an example of weakly equivalent ipomsets whose lower faces are not weakly equivalent
  and shows why we cannot use $\Lsim$ to construct $\MN(L)$.
\end{example}

\begin{remark}
  \label{re:MNonedim}
  As the previous example indicates, if $L$ is one-dimensional and all words in $L$ have empty interfaces,
  then $\ess(\MN(L))$ is the standard Myhill-Nerode finite automaton for $L$.
\end{remark}

\begin{example}
  \label{ex:aa}
  The language $L=\{\loset{\!\ibullet\!\!& aa&\!\!\ibullet\!\\ \!\ibullet\!\!& a &\!\!\ibullet\! }\}$
  is recognised by the HDA  $\MN(L)^\ess$ below:
  \[
    \begin{tikzpicture}[x=1cm, y=.8cm]
      \filldraw[color=black!10!white] (0,0)--(4,0)--(4,2)--(0,2)--(0,0);
      \node[state] (eps) at (0,0) {$w_\epsilon$};
      \node[state] (a) at (2,0) {$w_\epsilon$};
      \node[state] (b) at (0,2) {$w_\epsilon$};
      \node[state] (ab) at (2,2) {$y$};
      \node[state] (aa) at (4,0) {$w_\epsilon$};
      \node[state] (aab) at (4,2) {$y$};
      \path (eps) edge node[swap] {$w_a$} (a);
      \path (eps) edge node {$w_a$} (b);
      \path (a) edge node[swap] {$w_a$} (aa);
      \path (b) edge node {$y_{\ibullet a\ibullet}$} (ab);
      \path (a) edge(ab);
      \path (ab) edge node {$y_{a\ibullet}$} (aab);
      \path (aa) edge node[swap] {$y_{\ibullet a\ibullet}$} (aab);
      \node at (1,1) {${\loset{ \!\ibullet\!\!& a&\!\!\ibullet\!  \\ \!\ibullet\!\!& a &\!\!\ibullet\!}}$};
      \node at (3,1) {${\loset{ \!\ibullet\!\!& aa&\!\!\ibullet\!  \\ \!\ibullet\!\!& a &\!\!\ibullet\!}}$};
      \node at (0.5,0.7) {$\bot$};
      \node at (3.6,1.3) {$\top$};
    \end{tikzpicture}
  \]
  Cells with the same names are identified.
  Here we see subsidiary cells $w_\epsilon$ and $w_a$,
  and regular cells that are not coaccessible (denoted by $y$ indexed with their signature).
  The middle vertical edge is $\eqcl{\loset{\!\ibullet\!\!& a \\ \!\ibullet\!\!& a&\!\!\ibullet\! }}$,
  $y_{\ibullet a\ibullet}
  =\eqcl{\loset{\!\ibullet\!\!& a&\!\!\ibullet\!\\ \!\ibullet\!\!& a}}
  =\eqcl{\loset{\!\ibullet\!\!& aa\\ \!\ibullet\!\!& a&\!\!\ibullet\! }}$,
  $y_{a\ibullet}=\eqcl{\loset{\!\ibullet\!\!& aa&\!\!\ibullet\!\\ \!\ibullet\!\!& a}}$,
  and $y=\eqcl{\loset{\!\ibullet\!\!& a\\ \!\ibullet\!\!& a}}=\eqcl{\loset{\!\ibullet\!\!& aa\\ \!\ibullet\!\!& a}}$.
\end{example}

\subsection{$\MN(L)$ is well-defined}

We need to show that $\MN(L)$ is well-defined,
\ie~that the formulas \eqref{e:FaceMaps}
do not depend on the choice of a representative in $\eqcl{P}$
and that the precubical identities are satisfied.

\begin{lemma}
  \label{l:ExtQ}
  Let $P$, $Q$ and $R$ be ipomsets with $T_P=T_Q=S_R$.  Then
  \[
    P\lquo L \subseteq Q\lquo L \implies (P R)\lquo L
    \subseteq (Q R)\lquo L.
  \]
  In particular,
  $P\lquo L = Q\lquo L$ implies $(P R)\lquo L = (Q R)\lquo L$.
\end{lemma}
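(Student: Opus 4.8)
The plan is to reduce the statement to the associativity of gluing together with the bookkeeping of interfaces. The key observation is that, for any ipomset $W$, the gluing $(PR)*W$ is defined exactly when $(QR)*W$ is: both $PR$ and $QR$ have the same target interface, since $T_{P*Q}=T_Q$ forces $T_{PR}=T_R=T_{QR}$. Whenever these gluings are defined, associativity lets me rewrite $(PR)*W=P*(R*W)$ and $(QR)*W=Q*(R*W)$, where $R*W$ is one and the same well-defined ipomset. This turns the comparison of $(PR)\backslash L$ and $(QR)\backslash L$ into the given comparison of $P\backslash L$ and $Q\backslash L$, read off at the element $R*W$.

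Carrying this out, I would fix an arbitrary $W\in(PR)\backslash L$, which by definition means $(PR)*W\in L$; in particular this gluing is defined, so $T_R=S_W$ and hence $R*W$ is also defined. By associativity, $(PR)*W=P*(R*W)\in L$, which says precisely that $R*W\in P\backslash L$. The hypothesis $P\backslash L\subseteq Q\backslash L$ then gives $R*W\in Q\backslash L$, \ie $Q*(R*W)\in L$. Associating the other way, $Q*(R*W)=(QR)*W\in L$, so $W\in(QR)\backslash L$. As $W$ was arbitrary, this proves $(PR)\backslash L\subseteq(QR)\backslash L$. The ``in particular'' clause follows by applying the implication once in each direction after interchanging the roles of $P$ and $Q$, which is legitimate because the hypothesis $T_P=T_Q$ is symmetric and keeps both $PR$ and $QR$ defined.

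The only delicate point is the interface bookkeeping underlying these rewrites: I rely on the well-definedness and associativity of $*$ (standard properties of gluing, recorded in the construction above and in the cited prior work) and on $T_{P*Q}=T_Q$. I expect the main — though minor — obstacle to be checking that the hypothesis $T_P=T_Q=S_R$ is exactly what makes both $PR$ and $QR$ defined, and that the condition $T_R=S_W$ obtained from $(PR)*W\in L$ propagates identically through both associativity chains, so that passing between $(PR)*W$ and $P*(R*W)$, and between $Q*(R*W)$ and $(QR)*W$, never silently violates a matching condition. Once this is verified, the argument is a direct chain of equivalences.
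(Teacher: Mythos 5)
Your proposal is correct and follows essentially the same route as the paper's proof: an element-chase through the chain $W\in(PR)\backslash L \iff P(RW)\in L \implies Q(RW)\in L \iff W\in(QR)\backslash L$, using associativity of gluing. The interface bookkeeping you flag as delicate is indeed the only thing the paper leaves implicit, and your verification of it is sound.
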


\begin{proof}
  For $N\in \iiPoms$ we have
  \begin{multline*}
    N\in (P R)\lquo L \iff P R N\in L \iff
    R N\in P\lquo L \\
    \implies R N\in Q\lquo L \iff Q R N\in L \iff N\in
    (Q R)\lquo L.\quad  \qed
  \end{multline*}

  \vspace*{-7mm}
\end{proof}

The next lemma shows an operation to ``add order'' to an ipomset $P$.
This is done by first removing some points $A\subseteq T_P$
and then adding them back in, forcing arrows from all other points in $P$.
The result is obviously subsumed by~$P$.

\begin{lemma}
  \label{l:MinExt}
  For $P\in\iiPoms$ and $A\subseteq \rfin(P)$, $(P-A)*\starter{T_P}{A}\subsu P$. \qed
\end{lemma}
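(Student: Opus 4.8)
The plan is to exhibit an explicit subsumption $f\colon (P-A)*\starter{T_P}{A}\to P$ and verify the three defining conditions (respecting interfaces and labels, reflecting precedence, preserving essential event order). First I check that the gluing is defined and identify its carrier. Since $A\subseteq\rfin(P)=T_P-S_P$, removing $A$ leaves the source interface untouched and gives $T_{P-A}=T_P-A$, which matches the source interface $T_P-A$ of the starter $\starter{T_P}{A}=\ilo{T_P-A}{T_P}{T_P}$, so $(P-A)*\starter{T_P}{A}$ exists. Under the gluing the shared points $T_P-A$ are identified, so the carrier consists of the non-target points $P-T_P$ and the identified points $T_P-A$ (both from the first factor) together with a fresh copy of $A$ (from the starter). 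This is in canonical bijection with the underlying set of $P$, and I take $f$ to be that bijection. It respects labels by construction, sends $S_{(P-A)*\starter{T_P}{A}}=S_P$ to $S_P$ and $T_{(P-A)*\starter{T_P}{A}}=T_P$ to $T_P$, so interfaces are respected.

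Next I compute the precedence of the gluing from its definition, $x<_{P'*Q'}y$ iff $x<_{P'}y$, or $x<_{Q'}y$, or $x\in P'-T_{P'}$ and $y\in Q'-S_{Q'}$. With $P'=P-A$ and $Q'=\starter{T_P}{A}$ one has $<_{Q'}=\emptyset$ (starters are discrete), $P'-T_{P'}=P-T_P$, and $Q'-S_{Q'}=A$. Hence $x<_{(P-A)*\starter{T_P}{A}}y$ holds exactly when either $x<_P y$ with both points in $P-A$, or $x\in P-T_P$ and $y$ is the fresh copy of a point of $A$. To see that $f$ reflects precedence I must show that $f(x)<_P f(y)$ implies $x<_{(P-A)*\starter{T_P}{A}}y$. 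The only case needing care is when $f(y)$ lies in $A$: since every element of $T_P$ is $<_P$-maximal, the relation $f(x)<_P f(y)$ forces $f(x)\notin T_P$, so $f(x)\in P-T_P$, and then the third clause above yields $x<_{(P-A)*\starter{T_P}{A}}y$. In every other case $f(x)<_P f(y)$ already gives the first clause (note that if $f(x)\in A$ then $f(x)$ is maximal and $f(x)<_P f(y)$ is impossible).

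Finally I treat event order. The event order of the gluing is the transitive closure of $\evord_{P-A}\cup\evord_{\starter{T_P}{A}}$; the first relation is the restriction of $\evord_P$ to $P-A$, and the second is the event order of the loset $T_P$, which is exactly the restriction of $\evord_P$ to $T_P$ (recall $T_P$ is totally ordered by $\evord_P$ because its elements are $<_P$-maximal and hence pairwise concurrent). Both relations therefore lie inside the transitive relation $\evord_P$, so their transitive closure does as well, and $x\evord_{(P-A)*\starter{T_P}{A}}y\implies f(x)\evord_P f(y)$ holds unconditionally; in particular essential event order is preserved. This verifies all three conditions, giving $(P-A)*\starter{T_P}{A}\subsu P$. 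The one delicate step is the reflection of precedence, and the entire argument rests on the single observation that the re-added points $A$ are $<_P$-maximal, which is why the extra order created by the gluing never conflicts with $<_P$.
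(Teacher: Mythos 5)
Your proof is correct, and it follows the only natural route: the paper itself gives no argument for this lemma (it is stated with an immediate \qed, the surrounding text merely remarking that the gluing ``adds order'' to $P$ and that the result is ``obviously subsumed by $P$''). Your verification -- identifying the carrier of $(P-A)*\starter{T_P}{A}$ with $P$ via the canonical bijection and checking interfaces, reflection of precedence (using that elements of $A\subseteq T_P$ are $<_P$-maximal), and preservation of event order (both constituent event orders restrict $\evord_P$, so their transitive closure does too) -- is precisely the detail the paper leaves implicit.
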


The next two lemmas, whose proofs are again obvious,
state that events may be unstarted or terminated in any order.

\begin{lemma}
  \label{l:TerminatorComp}
  Let $U$ be a conclist and $A, B\subseteq U$ disjoint subsets.
  Then
  \[
    \terminator{U}{B}*\terminator{(U-B)}{A}=\terminator{U}{A\cup B}=\terminator{U}{A}*\terminator{(U-A)}{B}. \quad\qed
  \]
\end{lemma}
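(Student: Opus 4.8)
The final statement is Lemma~\ref{l:TerminatorComp}, which asserts the associativity/commutativity-style identity
$$\terminator{U}{B}*\terminator{(U-B)}{A}=\terminator{U}{A\cup B}=\terminator{U}{A}*\terminator{(U-A)}{B}$$
for disjoint $A,B\subseteq U$.

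Let me understand the objects. A terminator $\terminator{U}{A}$ is the ipomset $\ilo{U}{U}{U-A}$: carrier set $U$, empty precedence, event order inherited from the loset $U$ (total), source interface all of $U$, target interface $U-A$. So $\terminator{U}{B}$ is $\ilo{U}{U}{U-B}$ and $\terminator{(U-B)}{A}$ is $\ilo{(U-B)}{(U-B)}{(U-B)-A}$. Since $A,B$ are disjoint subsets of $U$, we have $(U-B)-A = U-(A\cup B)$, and $\terminator{U}{A\cup B}=\ilo{U}{U}{U-(A\cup B)}$.

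The plan is to unfold the definition of gluing and check the three components (carrier, interfaces, orders) on each side. The gluing $\terminator{U}{B}*\terminator{(U-B)}{A}$ is defined because $T_{\terminator{U}{B}}=U-B=S_{\terminator{(U-B)}{A}}$, so the matching condition holds via the identity isomorphism. By the gluing definition, the carrier is $(U\sqcup(U-B))_{/x\equiv f(x)}$ where $f$ identifies the target copy $U-B$ of the first factor with the source copy $U-B$ of the second; this quotient is just $U$ (the elements of $U-B$ are shared, and $B$ appears only in the first factor). The target interface is $T_{\terminator{(U-B)}{A}}=(U-B)-A=U-(A\cup B)$, and the source interface is $S_{\terminator{U}{B}}=U$, matching $\terminator{U}{A\cup B}$. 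For the orders: both factors have empty precedence, and the gluing adds precedence $x<y$ only for $x\in P-T_P$ and $y\in Q-S_Q$; but $Q-S_Q=\emptyset$ here since the second factor's source interface is all of its carrier, so no new precedence arises and $<$ stays empty. The event order is the transitive closure of the two factors' event orders, both of which are the restriction of the total order on $U$; this closure is again that total order on $U$. The symmetric computation handles the right-hand equality. Since isomorphisms of iposets are unique (noted in the excerpt), matching all components yields equality of ipomsets.

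I do not expect a genuine obstacle here: the lemma is flagged in the text as having an ``again obvious'' proof, and indeed every verification is a direct unfolding of the gluing definition on discrete ipomsets with empty precedence. The only point requiring a moment's care is confirming that gluing introduces no spurious precedence relations — this is exactly where one must use that the second factor is a terminator whose source interface is its entire carrier, so the ``$x\in P-T_P$ and $y\in Q-S_Q$'' clause is vacuous. Everything else is bookkeeping on set differences of disjoint subsets. Accordingly I would keep the proof to a single sentence, as the authors do, simply noting that both gluings have carrier $U$, empty precedence, the total event order of $U$, source interface $U$, and target interface $U-(A\cup B)$, hence coincide with $\terminator{U}{A\cup B}$.
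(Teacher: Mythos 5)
Your proof is correct and matches the paper's approach: the paper declares this lemma's proof ``again obvious'' and omits it entirely, and your direct unfolding of the gluing definition (carrier $U$, empty precedence since $Q-S_Q=\emptyset$, total event order, interfaces $U$ and $U-(A\cup B)$) is exactly the verification being left to the reader. The one point you rightly flag --- that no spurious precedence arises because the second factor's source interface is its whole carrier --- is the only non-trivial observation, and you handle it correctly.
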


\begin{lemma}
  \label{l:Comm}
  Let $P\in \iiPoms$ and $A, B\subseteq T_P$ disjoint subsets.
  Then
  \begin{equation*}
    (P*\terminator{T_P}{B})-A
    =
    (P-A)*\terminator{(T_P-A)}{B}. \quad\qed
  \end{equation*}
\end{lemma}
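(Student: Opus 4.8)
The plan is to observe that both sides of the equation are iposets carried by the same underlying set, namely the elements of $P$ not lying in $A$, and then to check that they carry identical precedence, event order, labelling and the two interfaces; since the equation is an identity of ipomsets, exhibiting the identity bijection as an isomorphism suffices.

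First I would unwind the left-hand gluing $P*\terminator{T_P}{B}$. The terminator $\terminator{T_P}{B}=\ilo{T_P}{T_P}{T_P-B}$ is discrete, and its source interface is all of its carrier $T_P$, so in the gluing its carrier is identified with the target events of $P$ and nothing new is created: its precedence is empty, its event order is exactly the restriction of $\evord_P$ to $T_P$, and the clause requiring $x\in P-T_P$ and $y\in Q-S_Q$ in the definition of $<_{P*Q}$ is vacuous because here $Q-S_Q=T_P-T_P=\emptyset$. Hence $P*\terminator{T_P}{B}$ has the same carrier, precedence, event order, labelling and source interface $S_P$ as $P$, with only the target interface shrunk from $T_P$ to $T_P-B$. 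The same computation applied to the inner gluing on the right shows that $(P-A)*\terminator{(T_P-A)}{B}$ is just $P-A$ with its target interface shrunk from $T_P-A$ to $(T_P-A)-B$.

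It then remains to see that deleting $A$ and shrinking the target commute. The restriction operation $(\,\cdot\,)-A$ intersects every piece of structure with the carrier $P\setminus A$, so both sides have carrier $P\setminus A$, precedence and event order the restrictions of $<_P$ and $\evord_P$, labelling the restriction of $\lambda_P$, and source interface $S_P\cap(P\setminus A)=S_P\setminus A$. For the target interfaces I would use that $A,B\subseteq T_P$ are disjoint, whence $(T_P-B)-A=T_P-(A\cup B)=(T_P-A)-B$; this is the single place where disjointness enters. Thus both sides coincide with the iposet on $P\setminus A$ whose interfaces are $S_P\setminus A$ and $T_P-(A\cup B)$ and whose precedence, event order and labelling are inherited from $P$, which proves the identity. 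I do not expect a genuine obstacle: the only substantive observation is the vacuity of the third gluing clause for a terminator, and everything else is interface bookkeeping, which is why the statement is flagged as obvious.
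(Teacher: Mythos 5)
Your proof is correct and spells out exactly the definition-unwinding that the paper leaves implicit (the paper states this lemma with no proof, declaring it obvious). One small imprecision: the set identity $(T_P-B)-A=T_P-(A\cup B)=(T_P-A)-B$ holds for arbitrary sets, so disjointness of $A$ and $B$ is not what makes it true; rather, disjointness is needed to guarantee $B\subseteq T_P-A$, i.e., so that the terminator $\terminator{(T_P-A)}{B}$ on the right-hand side is well-defined in the first place.
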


\begin{lemma}
  \label{l:Approx}
  Assume that $P\Lapprox Q$ for $P,Q\in\iiPoms_U$.
  Then $P*\terminator{U}{B}\Lapprox Q*\terminator{U}{B}$ for every $B\subseteq U$.
\end{lemma}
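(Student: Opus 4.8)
The plan is to verify directly the two conditions defining $\Lapprox$ for the glued ipomsets $P' \mathrel{:=} P*\terminator{U}{B}$ and $Q' \mathrel{:=} Q*\terminator{U}{B}$. Note first that gluing a terminator only shrinks the target interface: $S_{P'}=S_P$ and $T_{P'}=U-B$, and likewise for $Q'$. Both required facts---weak equivalence $P'\Lsim Q'$, and equality of quotients $(P'-A)\backslash L=(Q'-A)\backslash L$ for all $A\subseteq\rfin(P')$---will be reduced to the corresponding hypotheses on $P$ and $Q$ by pushing the terminator to the right of prefix quotients via Lemma~\ref{l:ExtQ} and commuting it past the removal of target events via Lemma~\ref{l:Comm}.

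For the weak equivalence, the signature condition $\fin(P')\cong\fin(Q')$ follows from $\fin(P)\cong\fin(Q)$, which holds since $P\Lsim Q$. Indeed $\fin(P')$ is the loset $U-B$ with source-marked elements $(U-B)\cap S_P$, i.e.\ it is obtained from $\fin(P)$ by deleting the positions $B$ while keeping the source marks. As $\fin(P)\cong\fin(Q)$ are discrete ipomsets whose carrier is the totally ordered loset $U$, they carry the same source interface, $U\cap S_P=U\cap S_Q$; deleting the same positions $B$ from both then gives $\fin(P')=\fin(Q')$. For the quotient equality $P'\backslash L=Q'\backslash L$ I would apply Lemma~\ref{l:ExtQ} with $R=\terminator{U}{B}$, which is legitimate because $T_P=T_Q=U=S_R$, to the hypothesis $P\backslash L=Q\backslash L$.

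The main step is the second condition. Fix $A\subseteq\rfin(P')=(U-B)\setminus S_P$; then $A$ and $B$ are disjoint subsets of $T_P=U$, so Lemma~\ref{l:Comm} gives $P'-A=(P-A)*\terminator{(U-A)}{B}$ and similarly for $Q'$. Since $A\subseteq\rfin(P')\subseteq\rfin(P)$, the hypothesis $P\Lapprox Q$ supplies $(P-A)\backslash L=(Q-A)\backslash L$; applying Lemma~\ref{l:ExtQ} once more with $R=\terminator{(U-A)}{B}$---here $T_{P-A}=T_{Q-A}=U-A=S_R$---then yields $(P'-A)\backslash L=(Q'-A)\backslash L$, as required. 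The substantive point, and the only place where genuine interface bookkeeping is needed, is exactly this commutation of the terminator past the removal of $A$: it is what reduces the claim about $P',Q'$ back to the strong equivalence of $P$ and $Q$, the remainder being routine checks that the interfaces match so the two cited lemmas apply. Taking $A=\emptyset$ recovers the quotient equality used above, so the two conditions are established together.
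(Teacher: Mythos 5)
Your proof is correct and follows essentially the same route as the paper's: both establish the signature condition via $\fin(P*\terminator{U}{B})=\fin(P)-B$, and both reduce the quotient conditions to the hypothesis $P\Lapprox Q$ by combining Lemma~\ref{l:Comm} (to commute $-A$ past the terminator) with Lemma~\ref{l:ExtQ} (to append $\terminator{(U-A)}{B}$ to equal quotients). The only cosmetic difference is that the paper treats all $A\subseteq\rfin(P)-B$, including $A=\emptyset$, in a single sweep, whereas you handle the weak-equivalence quotient separately and then observe it is the $A=\emptyset$ instance.
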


\begin{proof}
  Obviously $\fin(P*\terminator{U}{B})=\fin(P)-B\cong \fin(Q)-B=\fin(Q*\terminator{U}{B})$.
  For every $A\subseteq \rfin(P)-B\simeq \rfin(Q)-B$ we have
  \[
    ((P-A)*\terminator{(U-A)}{B})\lquo L
    =
    ((Q-A)*\terminator{(U-A)}{B})\lquo L
  \]
  by assumption and Lemma \ref{l:ExtQ}.
  But
  $(P*\terminator{U}{B})-A=(P-A)*\terminator{(U-A)}{B}$
  and
  $(Q*\terminator{U}{B})-A=(Q-A)*\terminator{(U-A)}{B}$
  by Lemma \ref{l:Comm}. \qed
\end{proof}

\begin{proposition}
  $\MN(L)$ is a well-defined HDA.
\end{proposition}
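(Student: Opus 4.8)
The plan is to verify two things: that the formulas \eqref{e:FaceMaps} descend to well-defined maps on $\Lapprox$-classes, and that these maps satisfy the precubical identities. Throughout I would use the observation that $P\Lapprox Q$ with $P,Q\in\iiPoms_U$ forces $\rfin(P)=\rfin(Q)$ as literal \emph{subsets} of $U$, not merely up to isomorphism. Indeed, $P\Lsim Q$ gives an isomorphism of signatures $\fin(P)\cong\fin(Q)$, and both signatures are starters $\starter{U}{\rfin(P)}$, $\starter{U}{\rfin(Q)}$ on the common carrier loset $U$; since isomorphisms of iposets are unique and the event order on $U$ is total, the underlying bijection $U\to U$ is order-preserving, hence the identity. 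Comparing the source interfaces $U-\rfin(P)$ and $U-\rfin(Q)$ then yields $\rfin(P)=\rfin(Q)$. Consequently the case distinction ``$A\subseteq\rfin(P)$'' in the definition of $\delta^0_A$ does not depend on the chosen representative.

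First I would treat well-definedness. For the upper faces, $P\Lapprox Q$ gives $P*\terminator{U}{A}\Lapprox Q*\terminator{U}{A}$ by Lemma~\ref{l:Approx}, so $\delta^1_A(\eqcl P)$ is independent of the representative. For the lower faces, when $A\subseteq\rfin(P)=\rfin(Q)$ I invoke Lemma~\ref{l:StrongEqDef} to get $P-A\Lapprox Q-A$, whence $\eqcl{P-A}=\eqcl{Q-A}$; otherwise both representatives send the class to the same subsidiary cell $w_{U-A}$. The faces of subsidiary cells are defined directly and carry no choice.

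Next, the precubical identities $\delta^\nu_A\delta^\mu_B=\delta^\mu_B\delta^\nu_A$ for disjoint $A,B\subseteq U$. On subsidiary cells both sides collapse to $w_{U-A-B}$, so I only check a regular cell $\eqcl P$, splitting on $(\nu,\mu)$. The case $(1,1)$ reduces, via Lemma~\ref{l:TerminatorComp}, to $\eqcl{P*\terminator{U}{A\cup B}}$ on both sides. For $(0,1)$ (and symmetrically $(1,0)$) the key computation is $\rfin(P*\terminator{U}{B})=\rfin(P)-B$, so that, using $A\cap B=\emptyset$, the condition $A\subseteq\rfin(P*\terminator{U}{B})$ is equivalent to $A\subseteq\rfin(P)$; when it holds, both composites equal $\eqcl{(P-A)*\terminator{U-A}{B}}$ by Lemma~\ref{l:Comm}, and when it fails both give $w_{U-A-B}$. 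The case $(0,0)$ is bookkeeping using $\rfin(P-A)=\rfin(P)-A$ (removing target-non-source events leaves the source interface intact) and $(P-A)-B=P-(A\cup B)$, tracking in each subcase whether both composites stay among regular cells or fall into a subsidiary cell.

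I expect the main obstacle to be the $(0,0)$ case together with the underlying representative-independence, because there the two possible outcomes of $\delta^0$ (regular versus subsidiary) must be shown to agree on both sides under every combination of the conditions $A\subseteq\rfin(P)$ and $B\subseteq\rfin(P)$. The observation that makes this routine is that removing target events not in the source never enlarges the set of source events, so $\rfin$ behaves additively under successive removals and, once the subsidiary branch is entered, all further faces remain subsidiary; verifying this compatibility carefully—rather than any single algebraic identity—is where the real work lies.
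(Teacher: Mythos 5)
Your proposal is correct and takes essentially the same route as the paper's proof: well-definedness of $\delta^0_A$ via Lemma~\ref{l:StrongEqDef} and of $\delta^1_A$ via Lemma~\ref{l:Approx}, then the precubical identities via Lemma~\ref{l:TerminatorComp} for $\nu=\mu=1$ and Lemma~\ref{l:Comm} for the mixed case, with $\nu=\mu=0$ being routine. Your extra observation that $\rfin(P)=\rfin(Q)$ holds as literal subsets of $U$ (via uniqueness of iposet isomorphisms), and your explicit tracking of the regular-versus-subsidiary branches, are details the paper leaves implicit rather than a different argument.
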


\begin{proof}
  The face maps are well-defined:
  for $\delta^0_A$ this follows from
  Lemma \ref{l:StrongEqDef},
  for $\delta^1_B$ from Lemma \ref{l:Approx}.
  The precubical identities $\delta^\nu_A \delta^\mu_B=\delta^\mu_B\delta^\nu_A$
  are clear for $\nu=\mu=0$,
  follow from Lemma \ref{l:TerminatorComp} for $\nu=\mu=1$,
  and from Lemma \ref{l:Comm} for $\{\nu,\mu\}=\{0,1\}$.\! \qed
\end{proof}

\subsection{Paths and essential cells of $\MN(L)$}

The next lemma provides paths in $\MN(L)$.

\begin{lemma}
  \label{l:PathDetail}
  For every $N,P\in \iiPoms$ such that $T_N\cong S_P$
  there exists a path $\alpha\in \Path(\MN(L))_{\eqcl{N}}^{\eqcl{N P}}$
  such that $\ev(\alpha)=P$.
\end{lemma}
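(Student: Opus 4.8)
The plan is to lift a step decomposition of $P$ into starters and terminators to a path in $\MN(L)$, realising each starter by an upstep and each terminator by a downstep. By Proposition \ref{p:SparsePresentation} we may write $P = P_1 * \dotsm * P_n$ as a gluing of starters and terminators; since $T_N \cong S_P$, all the gluings below are defined. Set $M_0 = N$ and $M_i = N * P_1 * \dotsm * P_i$, so that $M_n = N P$ and each regular cell $\eqcl{M_i}$ lies in $\MN(L)[T_{M_i}]$, that is, $\ev(\eqcl{M_i}) = T_{M_i}$. I shall connect $\eqcl{M_{i-1}}$ to $\eqcl{M_i}$ by a single step whose event ipomset is $P_i$ and then concatenate.

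If $P_i = \terminator{U}{B}$ is a terminator with $U = T_{M_{i-1}}$, the upper face map in \eqref{e:FaceMaps} gives directly $\delta^1_B(\eqcl{M_{i-1}}) = \eqcl{M_{i-1} * \terminator{U}{B}} = \eqcl{M_i}$, so $(\eqcl{M_{i-1}} \arrI{B} \eqcl{M_i})$ is a valid downstep, and its event ipomset is $\terminator{U}{B} = P_i$. If instead $P_i = \starter{U}{A}$ is a starter with $U = T_{M_i}$ (hence with source $U - A = T_{M_{i-1}}$), I read the lower face map backwards: I claim that $\delta^0_A(\eqcl{M_i}) = \eqcl{M_{i-1}}$, which makes $(\eqcl{M_{i-1}} \arrO{A} \eqcl{M_i})$ a valid upstep with event ipomset $\starter{U}{A} = P_i$.

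Proving this claim is the only non-formal point, and the main obstacle, since it is precisely where the definition of the lower face map of $\MN(L)$ enters. The events $A$ are freshly started by the starter, so they do not belong to $S_{M_{i-1}} = S_{M_i}$; hence $A \subseteq T_{M_i} - S_{M_i} = \rfin(M_i)$, and \eqref{e:FaceMaps} yields $\delta^0_A(\eqcl{M_i}) = \eqcl{M_i - A}$, a regular (not subsidiary) cell. It then remains to verify the equality of ipomsets $M_i - A = (M_{i-1} * \starter{U}{A}) - A = M_{i-1}$: by the definition of gluing the starter only adjoins the $<$-maximal events $A$ together with the precedence and event order relating them to the rest of $M_i$, and all of these disappear upon deleting $A$, so the target restores to $T_{M_{i-1}}$ and nothing else changes (compare Lemma \ref{l:MinExt}). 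Hence $\delta^0_A(\eqcl{M_i}) = \eqcl{M_{i-1}}$, as claimed.

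Concatenating the $n$ steps produces $\alpha \in \Path(\MN(L))_{\eqcl{N}}^{\eqcl{N P}}$, and since event ipomsets compose under concatenation, $\ev(\alpha) = P_1 * \dotsm * P_n = P$. The degenerate case $n = 0$ (that is, $P = \id_{T_N}$ and $N P = N$) is covered by the length-zero path $(\eqcl{N})$, whose event ipomset is $\id_{T_N} = P$.
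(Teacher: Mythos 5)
Your proof is correct and follows essentially the same route as the paper's: decompose $P$ into starters and terminators, define the intermediate cells $\eqcl{N*P_1*\dotsm*P_i}$, and realise each starter as an upstep and each terminator as a downstep via the face-map formulas \eqref{e:FaceMaps}. Your explicit verification that $A\subseteq\rfin(M_i)$ (so the lower face is regular rather than subsidiary) and that $(M_{i-1}*\starter{U}{A})-A=M_{i-1}$ spells out a step the paper leaves implicit, which is a welcome addition rather than a deviation.
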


\begin{proof}
  Choose a decomposition
  $P=Q_1*\dotsm*Q_n$ into starters and terminators.
  Denote $U_k=T_{Q_k}=S_{Q_{k+1}}$
  and define
  \begin{equation*}
    x_k=\eqcl{N*Q_1*\dotsm*Q_k}, \qquad
    \phi_k=
    \begin{cases}
      \arrO{A} & \text{if $Q_k=\starter{U_k}{A}$}, \\
      \arrI{B} & \text{if $Q_k=\terminator{U_{k-1}}{B}$}
    \end{cases}
  \end{equation*}
  for $k=1,\dotsc,n$.
  If $\phi_k=\arrO{A}$ and $Q_k=\starter{U_k}{A}$, then
  \begin{equation*}
    \delta^0_A(x_k)
    =
    \eqcl{N*Q_1*\dotsm*Q_{k-1}*\starter{U_k}{A}-A} \\
    =
    \eqcl{N*Q_1*\dotsm*Q_{k-1}*\id_{U_k-A}}
    =
    x_{k-1}.
  \end{equation*}
  If $\phi_k=\arrI{B}$ and $Q_k=\terminator{U_{k-1}}{B}$, then
  \[
    \delta^1_B(x_{k-1})
    =
    \eqcl{N*Q_1*\dotsm*Q_{k-1}*\terminator{U_{k-1}}{B}}
    =
    x_{k}.
  \]
  Thus, $\alpha=(x_0,\phi_1,x_1,\dotsc,\phi_n,x_n)$
  is a path
  with $\ev(\alpha)=P$, $\src(\alpha)=\eqcl{N}$ and $\tgt(\alpha)=\eqcl{N*P}$. \qed
\end{proof}

Our goal is now to describe essential cells of $\MN(L)$.

\begin{lemma}
  \label{l:AccPos}
  All regular cells of $\MN(L)$ are accessible.
  If $P\lquo L\ne \emptyset$, then $\eqcl{P}$ is coaccessible.
\end{lemma}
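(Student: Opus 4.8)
The plan is to obtain both claims as immediate applications of Lemma~\ref{l:PathDetail}, which for any $N,P$ with $T_N\cong S_P$ supplies a path $\eqcl{N}\to\eqcl{N P}$ whose event ipomset is $P$. The only work in each case is to choose $N$ so that the source (resp.\ target) of the resulting path is a start (resp.\ accept) cell, and then to identify the endpoint $\eqcl{N P}$ with the cell of interest using the unit and gluing laws for ipomsets.

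For accessibility of a regular cell $\eqcl{P}$, I would note that $\eqcl{P}\in\MN(L)[T_P]$ and that the source interface $S_P$ is a loset, so $\eqcl{\id_{S_P}}$ is a start cell of $\MN(L)$. Since $T_{\id_{S_P}}=S_P\cong S_P$, Lemma~\ref{l:PathDetail} with $N=\id_{S_P}$ yields a path $\alpha\in\Path(\MN(L))_{\eqcl{\id_{S_P}}}^{\eqcl{\id_{S_P}*P}}$ with $\ev(\alpha)=P$. Because $\id_{S_P}$ is a unit for gluing, $\id_{S_P}*P=P$, so $\tgt(\alpha)=\eqcl{P}$ while $\src(\alpha)=\eqcl{\id_{S_P}}\in\bot_{\MN(L)}$; hence $\eqcl{P}$ is accessible.

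For coaccessibility, I would assume $P\backslash L\neq\emptyset$ and pick $Q$ with $P Q\in L$; in particular the gluing $P Q$ is defined, so $T_P\cong S_Q$. Applying Lemma~\ref{l:PathDetail} with $N=P$ then gives a path $\beta\in\Path(\MN(L))_{\eqcl{P}}^{\eqcl{P*Q}}$ with $\ev(\beta)=Q$. Since $P*Q=P Q\in L$, the target $\eqcl{P Q}$ lies in $\top_{\MN(L)}=\{\eqcl{R}\mid R\in L\}$, and therefore $\eqcl{P}$ is coaccessible.

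I expect no serious obstacle here: the substance is entirely in Lemma~\ref{l:PathDetail}, and what remains is bookkeeping. The one point requiring care is verifying the interface hypotheses of that lemma ($T_N\cong S_P$) and correctly identifying the endpoints of the constructed paths via the unit law $\id_{S_P}*P=P$ and the fact that the endpoint $\eqcl{N P}$ produced by the lemma is literally the class of the glued ipomset $N*P$.
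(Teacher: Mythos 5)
Your proof is correct and matches the paper's own argument exactly: both claims are reduced to Lemma~\ref{l:PathDetail}, taking $N=\id_{S_P}$ (a start cell, with $\id_{S_P}*P=P$) for accessibility, and taking $N=P$ with some $Q\in P\backslash L$ (so that $\eqcl{PQ}\in\top_{\MN(L)}$) for coaccessibility.
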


\begin{proof}
  Both claims follow from Lemma \ref{l:PathDetail}.
  For every $P$ there exists a path
  from $\eqcl{\id_{S_P}}$ to $\eqcl{\id_{S_P}*P}=\eqcl{P}$.
  If $Q\in P\lquo L$,
  then there exists a path $\alpha\in \Path(\MN(L))_{\eqcl{P}}^{\eqcl{P Q}}$,
  and $P Q\in L$ entails that $\eqcl{P Q}\in \top_{\MN(L)}$. \qed
\end{proof}

\begin{lemma}
  \label{l:AccNeg}
  Subsidiary cells of $\MN(L)$ are not accessible.
  If $P\lquo L=\emptyset$,
  then the cell $\eqcl{P}$ is not coaccessible.	
\end{lemma}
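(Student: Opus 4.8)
The plan is to prove both assertions from a single structural observation: in $\MN(L)$, \emph{regular cells are closed under taking steps forward}. If $(x,\phi,y)$ is a single step with $x$ regular, then $y$ is regular. For a downstep, $y=\delta^1_B(\eqcl{P})=\eqcl{P*\terminator{U}{B}}$ is regular by definition. For an upstep, $x=\delta^0_A(y)$; were $y=w_U$ subsidiary, then $x=\delta^0_A(w_U)=w_{U-A}$ would be subsidiary too, contradicting the regularity of $x$, so $y$ is regular. Since every start cell $\eqcl{\id_U}$ is regular, induction on path length shows that every accessible cell is regular; hence no subsidiary cell is accessible, which is the first claim. The same closure property guarantees that any path emanating from a regular cell — in particular any path witnessing coaccessibility of $\eqcl{P}$ — passes only through regular cells.

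For the second claim I argue by contraposition: assuming $\eqcl{P}$ coaccessible, I show $P\backslash L\neq\emptyset$. The prefix quotient is well-defined on $\Lapprox$-classes, since $P\Lapprox Q$ implies $P\Lsim Q$ and thus $P\backslash L=Q\backslash L$; write $V(\eqcl{P})=P\backslash L$. An accept cell is a class $\eqcl{R}$ with $R\in L$, so $\id_{T_R}\in R\backslash L$ and $V(\eqcl{R})\neq\emptyset$. I will show that nonemptiness of $V$ propagates \emph{backward} along every step between regular cells; reading a coaccessibility path $(\eqcl{P}=x_0,\phi_1,\dots,x_n=\eqcl{R})$ from its accepting endpoint back to $x_0$ then yields $V(\eqcl{P})\neq\emptyset$. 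Concretely, for each step $(x_{i-1},\phi_i,x_i)$ I must verify that $V(x_i)\neq\emptyset$ implies $V(x_{i-1})\neq\emptyset$.

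The downstep case is routine: if $x_{i-1}=\eqcl{P}$ and $x_i=\eqcl{P*\terminator{U}{B}}$, then any $Q\in(P*\terminator{U}{B})\backslash L$ gives $P*(\terminator{U}{B}*Q)\in L$, so $\terminator{U}{B}*Q\in P\backslash L$ and $V(x_{i-1})\neq\emptyset$. The upstep case is the main obstacle, because here $x_{i-1}=\delta^0_A(x_i)=\eqcl{P'-A}$ for a representative $P'$ of $x_i$ with $A\subseteq\rfin(P')$, and the target is \emph{not} simply $\eqcl{(P'-A)*(\text{step event})}$: re-adding $A$ through a starter only produces an ipomset subsumed by $P'$, not $P'$ itself. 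The remedy is to combine subsumption with quotient monotonicity. By Lemma \ref{l:MinExt}, $(P'-A)*\starter{T_{P'}}{A}\subsu P'$, so Lemma \ref{l:QuotIncl} gives $P'\backslash L\subseteq\bigl((P'-A)*\starter{T_{P'}}{A}\bigr)\backslash L$. Hence if $V(x_i)=P'\backslash L\neq\emptyset$, choosing $Q$ in the larger quotient yields $(P'-A)*(\starter{T_{P'}}{A}*Q)\in L$, whence $\starter{T_{P'}}{A}*Q\in(P'-A)\backslash L=V(x_{i-1})$, so $V(x_{i-1})\neq\emptyset$. This completes the backward induction and thereby the proof.
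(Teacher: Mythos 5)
Your proof is correct and follows essentially the same route as the paper's: both parts rest on the observation that steps out of regular cells lead only to regular cells, and on showing that nonemptiness of the prefix quotient propagates backward along each step (directly for downsteps, and via Lemma \ref{l:MinExt} together with Lemma \ref{l:QuotIncl} for upsteps). The paper merely packages the second part as a contradiction at the step where emptiness would have to flip, whereas you run the same argument as a backward induction along the coaccessibility path; the content is identical.
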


\begin{proof}
  If $\alpha\in\Path(\MN(L))_\bot^{w_U}$,
  then it contains a step $\beta$ from a regular cell to a subsidiary cell
  (since all start cells are regular).
  Yet $\beta$ can be neither an upstep (since lower faces of subsidiary cells are subsidiary)
  nor a downstep (since upper faces of regular cells are regular).
  This contradiction proves the first claim.

\medskip
  To prove the second part we use a similar argument.
  If $P\lquo L=\emptyset$, then a path $\alpha\in\Path(\MN(L))_{\eqcl{P}}^\top$
  contains only regular cells (as shown above).
  Given that $R\lquo L\ne\emptyset$ for all $\eqcl{R}\in \top_{\MN(L)}$,
  $\alpha$ must contain a step $\beta$ from $\eqcl{Q}$ to $\eqcl{R}$
  such that $Q\lquo L=\emptyset$ and $R\lquo L\ne\emptyset$.
  If $\beta$ is a downstep, \ie $\beta=(\eqcl{Q}\arrI{A} \eqcl{Q*\terminator{U}{A}})$,
  and $N\in R\lquo L=(Q*\terminator{U}{A})\lquo L$,
  then $\terminator{U}{A}*N\in Q\lquo L\ne\emptyset$: a contradiction.	
  If $\beta=(\eqcl{R-A}\arrO{A} \eqcl{R})$ is an upstep and $N\in R\lquo L$,
  then, by Lemma \ref{l:MinExt},
  \[
    (R-A)*\starter{U}{A}*N\subsu
    R*N
    \in L,
  \]
  implying that $Q\lquo L=(R-A)\lquo L\ne \emptyset$ by Lemma \ref{l:QuotIncl}:
  another contradiction. \qed
\end{proof}

Lemmas \ref{l:AccPos} and \ref{l:AccNeg} together immediately imply the following.

\begin{proposition}
  \label{p:Acc}
  $\ess(\MN(L))=\{\eqcl{P}\mid P\lquo L\ne \emptyset\}.$ \qed
\end{proposition}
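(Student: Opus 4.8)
The plan is to read the claim off directly from Lemmas~\ref{l:AccPos} and~\ref{l:AccNeg}, using the definition that a cell is essential precisely when it is both accessible and coaccessible. The cells of $\MN(L)$ come in two kinds, regular cells $\eqcl{P}$ and subsidiary cells $w_U$, so I would first dispose of the subsidiary cells and then analyse the regular ones.

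First I would observe that no subsidiary cell can be essential: by the first part of Lemma~\ref{l:AccNeg} the cells $w_U$ are not accessible, and essentiality requires accessibility. Hence every essential cell is regular, and so $\ess(\MN(L))\subseteq\{\eqcl{P}\mid P\in\iiPoms\}$.

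Next I would pin down which regular cells are essential. By the first part of Lemma~\ref{l:AccPos}, every regular cell $\eqcl{P}$ is accessible, so for regular cells essentiality reduces to coaccessibility. The two remaining halves of the lemmas provide exactly the required dichotomy: the second part of Lemma~\ref{l:AccPos} gives coaccessibility of $\eqcl{P}$ whenever $P\backslash L\neq\emptyset$, while the second part of Lemma~\ref{l:AccNeg} rules out coaccessibility whenever $P\backslash L=\emptyset$. Combining these, a regular cell $\eqcl{P}$ is coaccessible if and only if $P\backslash L\neq\emptyset$, hence essential if and only if $P\backslash L\neq\emptyset$. Putting the two parts together yields $\ess(\MN(L))=\{\eqcl{P}\mid P\backslash L\neq\emptyset\}$.

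There is no genuine obstacle here, as the real work has already been carried out in Lemmas~\ref{l:AccPos} and~\ref{l:AccNeg}; the only thing to check is that their four clauses jointly determine accessibility on all cells and coaccessibility on the regular ones, which they do. The one point worth stating explicitly is that the indexing of cells by $\Lapprox$-classes is harmless for this criterion, since $P\backslash L$ is an invariant of the class: equality of prefix quotients is built into $\Lsim$ and hence into $\Lapprox$, so the condition $P\backslash L\neq\emptyset$ does not depend on the chosen representative, and the right-hand set is well defined.
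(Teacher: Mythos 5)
Your proof is correct and matches the paper's approach exactly: the paper derives Prop.~\ref{p:Acc} as an immediate consequence of Lemmas~\ref{l:AccPos} and~\ref{l:AccNeg}, which is precisely the combination you spell out (subsidiary cells inaccessible, regular cells accessible, coaccessibility governed by $P\backslash L\neq\emptyset$). Your added remark that $P\backslash L$ is an invariant of the $\Lapprox$-class is a sensible bit of diligence the paper leaves implicit.
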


\subsection{$\MN(L)$ recognises $L$}

We are finally ready to show that $\Lang(\MN(L))=L$.
One inclusion follows directly from Lemma \ref{l:PathDetail}:

\begin{lemma}
  \label{l:Path}
  $L\subseteq\Lang(\MN(L))$.
\end{lemma}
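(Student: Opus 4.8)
The plan is to produce, for each $P\in L$, an accepting path of $\MN(L)$ whose event ipomset is exactly $P$; this is essentially a direct instance of Lemma~\ref{l:PathDetail}. Fix $P\in L$ and apply that lemma with $N=\id_{S_P}$, which is legitimate since $T_{\id_{S_P}}=S_P$ matches the source interface of $P$. This yields a path $\alpha\in\Path(\MN(L))_{\eqcl{\id_{S_P}}}^{\eqcl{\id_{S_P}*P}}$ with $\ev(\alpha)=P$.

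The remaining step is to check that $\alpha$ is accepting. For the source, $\eqcl{\id_{S_P}}$ is one of the cells $\eqcl{\id_U}$, hence lies in $\bot_{\MN(L)}$ by the definition of the start cells. For the target, recall that discrete ipomsets $\id_U$ are identities for gluing, so $\id_{S_P}*P=P$ and therefore $\tgt(\alpha)=\eqcl{P}$. Since $P\in L$, the class $\eqcl{P}$ belongs to $\top_{\MN(L)}=\{\eqcl{Q}\mid Q\in L\}$, so $\eqcl{P}$ is an accept cell. Hence $\alpha$ is an accepting path with $\ev(\alpha)=P$, which gives $P\in\Lang(\MN(L))$; as $P\in L$ was arbitrary, $L\subseteq\Lang(\MN(L))$.

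I do not anticipate any genuine obstacle here: the only points needing care are the gluing identity $\id_{S_P}*P=P$ and the fact that $\eqcl{\id_{S_P}}$ is a start cell, both of which are immediate from the constructions above. All the real work has already been done in Lemma~\ref{l:PathDetail}, which built the path realising $P$; the present statement merely records that its endpoints are a start and an accept cell.
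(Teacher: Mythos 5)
Your proof is correct and follows exactly the paper's argument: both apply Lemma~\ref{l:PathDetail} with $N=\id_{S_P}$ to get a path from $\eqcl{\id_{S_P}}$ to $\eqcl{P}$ realising $P$, then observe that the endpoints are a start cell and (since $P\in L$) an accept cell.
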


\begin{proof}
  For every $P\in \iiPoms$
  there exists a path $\alpha\in\Path(\MN(L))_{\eqcl{\id_{S_P}}}^{\eqcl{P}}$
  such that $\ev(\alpha)=P$.
  If $P\in L$, then $\epsilon\in P\lquo L$,
  \ie $\eqcl{P}$ is an accept cell.
  Thus $\alpha$ is accepting and
  $P=\ev(\alpha)\in \Lang(\MN(L))$. \qed
\end{proof}

The converse inclusion requires more work.
For a regular cell $\eqcl{P}$ of $\MN(L)$ denote $\eqcl{P}\lquo L=P\lquo L$
(this obviously does not depend on the choice of $P$).

\begin{lemma}
\label{l:Limit}
	If $S\in\sq$ and $\alpha\in\Path(\MN(L))_{\eqcl{\id_S}}$,
	then $\tgt(\alpha)\lquo L\subseteq \ev(\alpha)\lquo L$.	
\end{lemma}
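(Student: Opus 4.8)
The plan is to induct on the number of steps of $\alpha$. Before starting the induction I would record that every cell visited by $\alpha$ is regular: the source $\eqcl{\id_S}$ is a (regular) start cell, and the reasoning of Lemma~\ref{l:AccNeg} shows that an accessible cell can never be subsidiary, since an upstep cannot cross from a subsidiary lower face up to a regular cell, and a downstep out of a regular cell lands on a regular cell. Consequently $\tgt(\alpha)=\eqcl{P}$ for some $P$, and $\tgt(\alpha)\backslash L=P\backslash L$ is well-defined.

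The base case $\alpha=(\eqcl{\id_S})$ is immediate, since then $\ev(\alpha)=\id_S$ and $\tgt(\alpha)=\eqcl{\id_S}$, so the two quotients coincide. For the inductive step I would split $\alpha=\alpha'\beta$ at its last step $\beta$, set $\eqcl{Q}=\tgt(\alpha')$, and assume the inclusion $Q\backslash L\subseteq\ev(\alpha')\backslash L$ for the shorter path $\alpha'$. The step $\beta$ is then either a downstep or an upstep, and I would treat these separately.

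For a downstep $\beta=(\eqcl{Q}\arrI{B}\eqcl{Q*\terminator{U}{B}})$ with $U=T_Q$, the target is $\eqcl{Q*\terminator{U}{B}}$ and $\ev(\alpha)=\ev(\alpha')*\terminator{U}{B}$, where also $T_{\ev(\alpha')}=U$. Here I would simply invoke Lemma~\ref{l:ExtQ} with $R=\terminator{U}{B}$ to promote the induction hypothesis to $(Q*\terminator{U}{B})\backslash L\subseteq(\ev(\alpha')*\terminator{U}{B})\backslash L$; this is the easy case. The upstep $\beta=(\eqcl{P-A}\arrO{A}\eqcl{P})$, with $A\subseteq\rfin(P)$, $\eqcl{Q}=\eqcl{P-A}$, and $U=T_P$, is the crux. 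Now $\ev(\alpha)=\ev(\alpha')*\starter{U}{A}$, and the obstacle is that unstarting $A$ loses information, so I cannot recover $P$ from $P-A$ on the nose. The idea is to take any $N\in P\backslash L$ (so $P*N\in L$) and use Lemma~\ref{l:MinExt} to get $(P-A)*\starter{U}{A}\subsu P$; since gluing $N$ on the right preserves subsumption and $L$ is closed under refinement (Definition~\ref{de:langs}), this gives $(P-A)*\starter{U}{A}*N\in L$, i.e.\ $\starter{U}{A}*N\in(P-A)\backslash L=Q\backslash L$. Feeding this into the induction hypothesis yields $\ev(\alpha')*\starter{U}{A}*N\in L$, that is $N\in\ev(\alpha)\backslash L$, completing the inclusion $P\backslash L\subseteq\ev(\alpha)\backslash L$. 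I expect this replacement of a missing equality by the subsumption of Lemma~\ref{l:MinExt} to be the only real subtlety, and it is precisely what forces the conclusion to be an inclusion rather than an equality.
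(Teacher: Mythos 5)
Your proof is correct and follows essentially the same route as the paper's: induction on the length of $\alpha$, with the downstep case handled by Lemma~\ref{l:ExtQ} and the upstep case handled by Lemma~\ref{l:MinExt} together with refinement-closure of $L$, after first noting that all cells along $\alpha$ are regular. One minor slip in that preliminary observation: the upstep clause is stated backwards --- upsteps from a subsidiary lower face up to a regular cell \emph{do} exist in $\MN(L)$ (lower faces of regular cells may be subsidiary); the fact you need is that an upstep cannot go from a \emph{regular} cell up to a \emph{subsidiary} one, since lower faces of subsidiary cells are subsidiary --- but as this regularity claim is precisely the first part of Lemma~\ref{l:AccNeg}, which you may simply cite, the argument stands.
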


\begin{proof}
  By Lemma \ref{l:AccNeg},
  all cells appearing along $\alpha$ are regular.
  We proceed by induction on the length of $\alpha$.
  For $\alpha=(\eqcl{\id_S})$ the claim is obvious.
  If $\alpha$ is non-trivial, we have two cases.
  \begin{itemize}
  \item
    $\alpha=\beta*(\delta^0_A(\eqcl{P})\arrO{A} \eqcl{P})$,
    where $\eqcl{P}\in\MN(L)[U]$ and $A\subseteq \rfin(P)\subseteq U\cong T_P$.
    By the induction hypothesis,
    \[
      (P-A)\lquo L
      =\delta^0_A(\eqcl{P})\lquo L
      =\tgt(\beta)\lquo L
      \subseteq \ev(\beta)\lquo L.
    \]
    For $Q\in \iiPoms$ we have
    \begin{align*}
      Q\in P\lquo L
      \iff P Q\in L
      &\implies (P-A)*\starter{U}{A}*Q\in L
      \tag{Lemma \ref{l:MinExt}}\\
      &\iff \starter{U}{A}*Q \in (P-A)\lquo L\\
      &\implies \starter{U}{A}*Q \in \ev(\beta)\lquo L
      \tag{induction hypothesis}\\
      &\iff \ev(\beta)*\starter{U}{A}*Q \in L\\
      &\iff \ev(\alpha)*Q \in L
      \iff  Q \in \ev(\alpha)\lquo L.
    \end{align*}
    Thus, $\eqcl{P}\lquo L=P\lquo L\subseteq \ev(\alpha)\lquo L$.
  \item	
    $\alpha=\beta*(\eqcl{P}\arrI{B}\delta^1_B(\eqcl{P}))$,
    where $\eqcl{P}\in\MN(L)[U]$ and $B\subseteq U\cong T_P$.
    By inductive assumption,
    $
      P\lquo L=\tgt(\beta)\lquo L \subseteq \ev(\beta)\lquo L
    $.
    Thus,
                \begin{equation*}
			\tgt(\alpha)\lquo L
			=
			\delta^1_B(\eqcl{P})\lquo L
			=
			\eqcl{P*\terminator{U}{B}}\lquo L
			\subseteq
			(\ev(\beta)*\terminator{U}{B})\lquo L
			=
			\ev(\alpha)\lquo L.
                \end{equation*}
	\end{itemize}
	The inclusion above follows from Lemma \ref{l:ExtQ}. \qed
\end{proof}

\begin{proposition}
  \label{p:MNLang}
  $\Lang(\MN(L))=L$.
\end{proposition}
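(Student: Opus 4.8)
The plan is to prove the two inclusions separately. The inclusion $L\subseteq\Lang(\MN(L))$ is already established in Lemma~\ref{l:Path}, so the whole task reduces to the converse $\Lang(\MN(L))\subseteq L$, and the key tool for it is Lemma~\ref{l:Limit}. Concretely, I would fix an arbitrary accepting path $\alpha$ in $\MN(L)$ and show $\ev(\alpha)\in L$.

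First I would unwind what ``accepting'' means here. Since $\src(\alpha)\in\bot_{\MN(L)}=\{\eqcl{\id_U}\}_{U\in\sq}$, the source is of the form $\eqcl{\id_S}$ with $S=\ev(\src(\alpha))$, so that $\alpha\in\Path(\MN(L))_{\eqcl{\id_S}}$ and Lemma~\ref{l:Limit} applies, giving $\tgt(\alpha)\backslash L\subseteq\ev(\alpha)\backslash L$. Next, since $\tgt(\alpha)\in\top_{\MN(L)}=\{\eqcl{P}\mid P\in L\}$, I may choose a representative $P\in L$ with $\tgt(\alpha)=\eqcl{P}$; writing $U=\ev(\tgt(\alpha))$ for the type of this cell, we have $T_P=U$.

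The crux is then a short membership computation. Because $P\in L$ and gluing with the identity is trivial, $P*\id_{T_P}=P\in L$, so $\id_U\in P\backslash L=\tgt(\alpha)\backslash L$ (the quotient $\eqcl{P}\backslash L$ being independent of the representative). Feeding this through the inclusion from Lemma~\ref{l:Limit} yields $\id_U\in\ev(\alpha)\backslash L$, \ie $\ev(\alpha)*\id_U\in L$. Here one uses the basic fact that the target interface of the event ipomset of a path equals the type of its target cell, so that $T_{\ev(\alpha)}=U$ and $\ev(\alpha)*\id_U=\ev(\alpha)$; hence $\ev(\alpha)\in L$, which proves $\Lang(\MN(L))\subseteq L$.

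I do not expect a genuine obstacle: all the real work---tracking how prefix quotients evolve along a path, and that subsidiary cells never occur on paths starting from regular cells---has already been absorbed into Lemma~\ref{l:Limit} (via Lemma~\ref{l:AccNeg}). The only points demanding care are bookkeeping: checking that $\id_U$ lands in the accept cell's quotient, which is immediate from $P\in L$ once one matches the interface $T_P=U$, and confirming $T_{\ev(\alpha)}=U$ so that the trailing identity can be cancelled. Thus the proposition is essentially a corollary of Lemma~\ref{l:Limit} together with Lemma~\ref{l:Path}.
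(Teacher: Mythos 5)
Your proof is correct and follows essentially the same route as the paper's: both reduce to Lemma~\ref{l:Path} for one inclusion and Lemma~\ref{l:Limit} for the other, concluding that an accepting path's target quotient contains the trivial extension and hence $\ev(\alpha)\in L$. The only difference is cosmetic: you spell out the identity $\id_U$ and the interface matching $T_{\ev(\alpha)}=U$ explicitly, where the paper abbreviates this as ``$\epsilon\in\tgt(\alpha)\backslash L$''.
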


\begin{proof}
  The inclusion $L\subseteq \Lang(\MN(L))$ is shown in Lemma \ref{l:Path}.
  For the converse, let $S\in\sq$ and $\alpha\in\Path(\MN(L))_{\eqcl{\id_S}}$,
  then Lemma \ref{l:Limit} implies
  \[
    \tgt(\alpha)\in \top_{\MN(L)}
    \iff
    \epsilon\in \tgt(\alpha)\lquo L
    \implies
    \epsilon\in \ev(\alpha)\lquo L
    \iff
    \ev(\alpha)\in L,
  \]
  that is, if $\alpha$ is accepting, then $\ev(\alpha)\in L$. \qed
\end{proof}

\subsection{Finiteness of $\MN(L)$}

The HDA $\MN(L)$ is not finite,
since it contains infinitely many subsidiary cells $w_U$.
Below we show that its essential part $\MN(L)^\ess$ is finite
if $L$ has finitely many prefix quotients.

\begin{lemma}
  \label{l:FiniteEss}
  If $\suff(L)$ is finite,
  then $\ess(\MN(L))$ is finite.
\end{lemma}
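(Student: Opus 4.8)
The plan is to count the essential cells of $\MN(L)$ directly. By Proposition~\ref{p:Acc} we have $\ess(\MN(L))=\{\eqcl{P}\mid P\backslash L\neq\emptyset\}$, so it suffices to bound the number of $\Lapprox$-classes of ipomsets $P$ with $P\backslash L\neq\emptyset$. Unwinding the definition of $\Lapprox$, I would record that such a class is pinned down by two pieces of data: the isomorphism type of the signature $\fin(P)=\starter{T_P}{T_P-S_P}$ (which encodes the loset $T_P$ together with the marked subset $S_P\cap T_P$), and the function $\Phi_P\colon A\mapsto(P-A)\backslash L$ defined on subsets $A\subseteq\rfin(P)$, whose value at $A=\emptyset$ is $P\backslash L$ itself. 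The strategy is to show that each of these ranges over a finite set.

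The crux, and the only real obstacle, is bounding the signatures; the rest is bookkeeping. First I would observe that whenever $P\backslash L\neq\emptyset$ the target loset $T_P$ is \emph{recoverable} from the quotient $P\backslash L$ alone. Indeed, any $Q\in P\backslash L$ satisfies $PQ\in L$, and for the gluing $PQ$ to be defined one needs $T_P\cong S_Q$; hence every element of $P\backslash L$ has source interface isomorphic to $T_P$. Thus $P\backslash L\mapsto T_P$ is a well-defined assignment on nonempty prefix quotients. Since $\suff(L)$ is finite by hypothesis, only finitely many losets arise as $T_P$ among essential cells, so in particular $|T_P|$ is bounded by some $d$. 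As $\fin(P)$ is simply the starter on $T_P$ marking $S_P\cap T_P$, it follows that there are only finitely many isomorphism types of $\fin(P)$, and that $|\rfin(P)|=|T_P-S_P|\le d$.

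It then remains to count the functions $\Phi_P$. For a fixed signature type the domain $2^{\rfin(P)}$ has at most $2^d$ elements, and each value $(P-A)\backslash L$ lies in the finite set $\suff(L)$ because $P-A$ is again an ipomset; hence there are at most $|\suff(L)|^{2^{d}}$ such functions. Combining the finitely many signature types with these finitely many functions yields a finite bound on the number of essential $\Lapprox$-classes, which is exactly $\ess(\MN(L))$. The one subtlety to handle carefully is that, to compare $\Phi_P$ and $\Phi_Q$, one must identify subsets of $\rfin(P)$ with subsets of $\rfin(Q)$ through the isomorphism underlying $\fin(P)\cong\fin(Q)$; but this identification is precisely the data already built into the definition of $\Lapprox$, so no extra argument is needed.
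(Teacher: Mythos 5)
Your proposal is correct and follows essentially the same route as the paper's proof: both identify an essential $\Lapprox$-class with the finite fingerprint consisting of the signature $\fin(P)$ and the family $((P-A)\backslash L)_{A\subseteq\rfin(P)}$, and both hinge on the observation that a nonempty quotient $P\backslash L$ determines $T_P$ (since all its elements have source interface $\cong T_P$), so that finiteness of $\suff(L)$ bounds the possible signatures and hence everything else.
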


\begin{proof}
  For $\eqcl{P},\eqcl{Q}\in\ess(L)$,
  we have $\eqcl{P}=\eqcl{Q}\iff f(\eqcl{P})=f(\eqcl{Q})$,
  where
  \[
    f(\eqcl{P})
    =
    (
    P\lquo L,
    \fin(P),
    ((P-A)\lquo L)_{A\subseteq \rfin(P)}
    ).
  \]
  We will show that $f$ takes only finitely many values on $\ess(L)$.
  Indeed, $P\lquo L$ belongs to the finite set $\suff(L)$.
  Further, all ipomsets in $P\lquo L$ have source interfaces equal to $T_P$.
  Since $P\lquo L$ is non-empty, $\fin(P)$ is a starter with $T_P$ as underlying conclist.
  Yet, there are only finitely many starters on any conclist.
  The last coordinate also may take only finitely many values,
  since $\rfin(P)$ is finite and $(P-A)\lquo L\in\suff(L)$. \qed
\end{proof}

\begin{varproof}[of Theorem~\ref{t:MN}, {\rm (b)$\implies$(a)}]	
  By Lemma \ref{l:FiniteEss} and Lemma \ref{l:HDAGen},
  $\MN(L)^\ess$ is a finite HDA.
  With Proposition~\ref{p:MNLang}, $\Lang(\MN(L)^\ess)=\Lang(\MN(L))=L$. \qed
\end{varproof}

\begin{example}
  \label{exa:hda-loop}
  We finish this section with another example,
  which shows some subtleties related to higher-dimensional loops.
  Let $L$ be the language of the HDA shown to the left of Figure~\ref{fig:hda-loop}
  (a looping version of the HDA of Figure~\ref{fig:ab*cb*cd}),
  then
  \begin{equation*}
    L = \{\ibullet a\ibullet\} \cup \{\loset{\!\ibullet\!\!& aa&\!\!\ibullet\! \\ &b}^n\mid n\ge 1\}\down.
  \end{equation*}
  Our construction yields $\MN(L)^\ess$ as shown on the right of the figure.
  Here, $e=\eqcl{\loset{\!\ibullet\!\!&a \\ &b&\!\!\ibullet\!}}$, and
  the two $e$-labelled edges and their corresponding faces are identified.
  These identifications follow from the fact that
  $\loset{ \!\ibullet\!\!& aa \\ &bb&\!\!\ibullet\!}
  \Lapprox \loset{\!\ibullet\!\!&a \\ &b&\!\!\ibullet\!}$,
  $\loset{ \!\ibullet\!\!& aa \\ &bb }
  \Lapprox \loset{\!\ibullet\!\!&a \\ &b}$, and
  $\loset{ \!\ibullet\!\!& aa \\ &b }
  \Lapprox \ibullet a$.
  Note that $\loset{ \!\ibullet\!\!&a&\!\!\ibullet\! \\ &b&\!\!\ibullet\!}$
  and $\loset{ \!\ibullet\!\!& aa&\!\!\ibullet\! \\ &bb&\!\!\ibullet\!}$
  are not strongly equivalent, since they have different signatures:
  $\loset{ \!\ibullet\!\!& a&\!\!\ibullet\! \\ &b&\!\!\ibullet\!}$
  and $\loset{ a&\!\!\ibullet\! \\ b&\!\!\ibullet\!}$, respectively.
\end{example}

\begin{figure}[h]
  \centering
  \begin{tikzpicture}[x=1.2cm, y=.91cm]
    \filldraw[color=black!10!white] (0,0)--(4,0)--(4,4)--(2,4)--(2,2)--(0,2)--(0,0);
    \node[state] (eps) at (0,0) {$w_\epsilon$};
    \node[state] (a) at (2,0) {${\ibullet a}$};
    \node[state] (b) at (0,2) {$w_\epsilon$};
    \node[state] (ab) at (2,2) {${\loset{ \!\ibullet\!\!& a \\ & b }}$};
    \node[state] (aa) at (4,0) {${\ibullet aa}$};
    \node[state] (aab) at (4,2) {${\ibullet a}$};
    \node[state] (abb) at (2,4) {${\loset{\!\ibullet\!\!& a\\ &b} b}$};
    \node[state] (aabb) at (4,4) {${\loset{ \!\ibullet\!\!& a \\ &b }}$};
    \path (eps) edge node[swap] {${\ibullet a\ibullet}$} (a);
    \node[above] at (1,0) {$\bot\top$};
    \path (eps) edge node {$w_a$} (b);
    \path (a) edge node[swap] {${\ibullet aa\ibullet}$} (aa);
    \path (b) edge node {${\loset{ \!\ibullet\!\!& a&\!\!\ibullet\! \\ &b }}$} (ab);
    \path (a) edge node[swap] {$e$} (ab);
    \path (2,.4) edge[color=blue,-, very thick] (2,1.2);
    \path (ab) edge node {${\loset{ \!\ibullet\!\!& aa&\!\!\ibullet\! \\ &b }}$} (aab);
    \path (abb) edge node {${\loset{ \!\ibullet\!\!& aa&\!\!\ibullet\! \\ &bb }}$} (aabb);
    \node[below] at (3,2) {$\top$};
    \path (aa) edge node[swap] {${\loset{ \!\ibullet\!\!& aa&\!\!\ibullet\! \\ &b&\!\!\ibullet\!}}$} (aab);
    \path (aab) edge node[swap] {$e$} (aabb);
    \path (4,2.4) edge[color=blue,-, very thick] (4,3.2);
    \path (ab) edge node {$ {\loset{\!\ibullet\!\!& a \\ &b} b\ibullet}$} (abb);
    \node at (1,1) {${\loset{ \!\ibullet\!\!& a&\!\!\ibullet\! \\ &b&\!\!\ibullet\!}}$};
    \node at (3,1) {${\loset{ \!\ibullet\!\!& aa&\!\!\ibullet\! \\ &b&\!\!\ibullet\!}}$};
    \node at (3,3) {${\loset{ \!\ibullet\!\!& aa&\!\!\ibullet\! \\ &bb&\!\!\ibullet\!}}$};
    \begin{scope}[x=1cm, y=.8cm, shift={(-6.3,2.0)}]
      \filldraw[color=black!10!white] (0,0)--(4,0)--(4,2)--(0,2)--(0,0);			
      \filldraw (0,0) circle (0.05);
      \filldraw (2,0) circle (0.05);
      \filldraw (4,0) circle (0.05);
      \filldraw (0,2) circle (0.05);
      \filldraw (2,2) circle (0.05);
      \filldraw (4,2) circle (0.05);
      \path (0,0) edge (1.95,0);
      \path (.5,0) edge[color=blue,-, very thick] (1.5,0);
      \path (2,0) edge node[below] {$a$} (3.95,0);
      \path (0,2) edge node[above] {$a$} (1.95,2);
      \path (2,2) edge (3.95,2);
      \path (2.5,2) edge[-,color=blue,very thick] (3.5,2);
      \path (0,0) edge node[left] {$b$} (0,1.95);
      \draw (2,0) edge (2,1.95);
      \path (4,0) edge node[right] {$b$} (4,1.95);
      \node[below] at (1,0) {$\bot\top$};
      \node[above] at (3,2) {$\bot\top$};
    \end{scope}
  \end{tikzpicture}
  \caption{Two HDAs recognising the language of Example \ref{exa:hda-loop}.
    On the left side, start/accept edges are identified;
    on the right, $e$-labelled edges are identified.}
  \label{fig:hda-loop}\vspace*{-2mm}
\end{figure}

\section{Determinism}
\label{se:det}

We now make precise our notion of determinism and show that not all HDAs may be determinised.
Recall that we do not assume finiteness.

\begin{definition}
  An HDA $X$ is \emph{deterministic} if
  \begin{enumerate}
  \item
    for every $U\in \sq$ there is at most one initial cell in $X[U]$, and
  \item
    for all $V\in \sq$, $A\subseteq V$ and any essential cell $x\in X[V-A]$
    there exists at most one essential cell $y\in X[V]$
    such that $x=\delta^0_A(y)$.
  \end{enumerate}
\end{definition}

That is, in any essential cell $x$ in a deterministic HDA $X$ and for any set $A$ of events,
there is at most one way to start $A$ in $x$ and remain in the essential part of $X$
(recall that termination of events is always deterministic).
We allow multiple initial cells because ipomsets in $\Lang(X)$ may have different source interfaces;
for each source interface in $\Lang(X)$, there can be at most one matching start cell in $X$.
Note that we must restrict our definition to essential cells
as inessential cells may not always be removed
(in contrast to the case of standard automata).

A language is deterministic
if it is recognised by a deterministic HDA.
We develop a language-internal criterion for being deterministic.

\begin{definition}
  A language $L$ is \emph{swap-invariant}
  if it holds for all $P, Q, P', Q'\in\iiPoms$
  that $P P'\in L$, $Q Q'\in L$ and $P\subsu Q$ imply $Q P'\in L$.
\end{definition}

That is, if the $P$ prefix of $P P'\in L$
is subsumed by $Q$ (which is, thus, ``more concurrent'' than $P$),
and if $Q$ itself may be extended to an ipomset in $L$,
then $P$ may be swapped for $Q$ in the ipomset $P P'$ to yield $Q P'\in L$.

\begin{lemma}
  \label{le:swapi}
  $L$ is swap-invariant if and only if $P\subsu Q$ implies
  $P\lquo L = Q\lquo L$ for all $P, Q\in \iiPoms$,
  unless $Q\lquo L=\emptyset$.
\end{lemma}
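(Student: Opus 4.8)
The plan is to prove both implications directly, leaning on Lemma~\ref{l:QuotIncl}, which tells us that $P\subsu Q$ \emph{always} gives the inclusion $Q\backslash L\subseteq P\backslash L$. This halves the work: the claimed equality $P\backslash L=Q\backslash L$ then reduces to the single reverse inclusion $P\backslash L\subseteq Q\backslash L$, and this reverse inclusion is exactly what swap-invariance is designed to supply once a witness for $Q\backslash L\neq\emptyset$ is in hand.

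For the ``if'' direction I would assume the quotient condition and verify the definition of swap-invariance. Given $PP'\in L$, $QQ'\in L$ and $P\subsu Q$, the element $Q'$ witnesses $Q\backslash L\neq\emptyset$, so the hypothesis yields $P\backslash L=Q\backslash L$. Since $PP'\in L$ says $P'\in P\backslash L$, we conclude $P'\in Q\backslash L$, i.e.\ $QP'\in L$, as required. The only bookkeeping is to check that the gluing $QP'$ is defined: from $P\subsu Q$ we have $T_P\cong T_Q$, and $PP'$ being defined gives $T_P\cong S_{P'}$, whence $T_Q\cong S_{P'}$.

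For the ``only if'' direction I would fix $P\subsu Q$ with $Q\backslash L\neq\emptyset$ and establish $P\backslash L\subseteq Q\backslash L$ (the opposite inclusion being Lemma~\ref{l:QuotIncl}). Choose any witness $Q'\in Q\backslash L$, so that $QQ'\in L$. For an arbitrary $R\in P\backslash L$ we have $PR\in L$, and applying swap-invariance to the data $(P,R,Q,Q')$ gives $QR\in L$, that is $R\in Q\backslash L$. As $R$ was arbitrary, the inclusion follows, and the two directions together prove the equivalence. The argument is essentially symmetric and presents no real obstacle beyond the interface check for gluing; the crux is simply to recognise that swap-invariance \emph{is} the statement $P\backslash L\subseteq Q\backslash L$ after a witness for $Q\backslash L\neq\emptyset$ has been fixed.
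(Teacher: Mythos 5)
Your proof is correct and follows essentially the same route as the paper's: one inclusion from Lemma~\ref{l:QuotIncl}, the reverse inclusion by applying swap-invariance to a witness of $Q\backslash L\neq\emptyset$, and the converse by using that witness to invoke the quotient equality. The interface check for the gluing $QP'$ is a nice extra detail the paper leaves implicit.
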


\begin{proof}
  Assume that $L$ is swap-invariant and let
  $P\subsu Q$.
  The inclusion $Q\lquo L\subseteq P\lquo L$ follows from Lemma \ref{l:QuotIncl},
  and
  \[
    R\in Q\lquo L,\; R'\in P\lquo L
    \iff
    QR, PR'\in L
    \implies
    QR'\in L
    \iff
    R'\in Q\lquo L
  \]
  implies that $P\lquo L\subseteq Q\lquo L$.
  The calculation
  \begin{equation*}
    PP',QQ'\in L,\; P\subsu Q
    \iff
    P'\in P\lquo L,\; Q'\in Q\lquo L,\; P\subsu Q
    \implies
    P'\in Q\lquo L
    \iff
    QP'\in L
  \end{equation*}
  shows the converse.
\end{proof}

Our main goal is to show the following criterion,
which will be implied by Propositions~\ref{l:DetImpliesSInv} and \ref{l:SInvImpliesDet} below.

\begin{theorem}
  \label{t:swapdet}
  A language $L$ is deterministic if and only if it is swap-invariant.
\end{theorem}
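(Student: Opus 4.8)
The plan is to split the biconditional into the two implications announced as Propositions~\ref{l:DetImpliesSInv} and~\ref{l:SInvImpliesDet}, proving them by genuinely different means: the forward direction by a uniqueness-of-endpoints argument inside a recognizing deterministic HDA, and the converse by checking that the Myhill--Nerode automaton $\MN(L)$ is itself deterministic. Throughout I would route the combinatorics through the reformulation in Lemma~\ref{le:swapi}, so that swap-invariance becomes the statement that $P\subsu Q$ and $Q\backslash L\neq\emptyset$ force $P\backslash L=Q\backslash L$.

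For ``deterministic $\implies$ swap-invariant'', fix a deterministic $X$ with $\Lang(X)=L$. The first step is to show that reading an ipomset pins down its endpoint: if $\alpha$ is an \emph{essential} path from a start cell with $\ev(\alpha)=P$, then $\tgt(\alpha)$ depends only on $P$. I would prove this by induction on the unique sparse representative, using uniqueness of the sparse step decomposition (Prop.~\ref{p:SparsePresentation}): the base case is determinism condition~(1) (at most one start cell of type $S_P$); an upstep $\arrO{A}$ is pinned down by condition~(2) applied to the essential cells involved; and a downstep $\arrI{B}$ is automatic since face maps are functions. This yields a partial function $\tau$ with $\tau(P)=\tgt(\alpha)$ exactly when $P\backslash L\neq\emptyset$, and a routine application of Lemma~\ref{l:PathDivision} gives $P\backslash L=\mathsf{Post}(\tau(P))$.

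The heart of this direction is to prove $\tau(P)=\tau(Q)$ whenever $P\subsu Q$ and $Q\backslash L\neq\emptyset$. Here I would invoke track objects: by Lemma~\ref{l:PathTrack} an essential path from a start cell reading $Q$ and reaching $\tau(Q)$ is an HDA-map $g\colon\sq^Q\to X$ with $g(c_\bot^Q)\in\bot_X$ and $g(c^\top_Q)=\tau(Q)$. A subsumption $P\subsu Q$ induces a precubical map $\iota\colon\sq^P\to\sq^Q$ preserving the distinguished start and accept cells (geometrically, the refined track $\sq^P$ sits inside the more-concurrent cube $\sq^Q$); precomposing, $g\circ\iota$ is an essential path reading $P$ with the same endpoints, so $\tau(P)=\tau(Q)$ by the uniqueness just established, whence $P\backslash L=\mathsf{Post}(\tau(P))=\mathsf{Post}(\tau(Q))=Q\backslash L$ and Lemma~\ref{le:swapi} applies. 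I expect the existence of $\iota$ to be the \emph{main obstacle}, as it is the one place relying on external properties of track objects rather than on the combinatorics developed here; should a ready-made citation be unavailable, I would instead reduce $\subsu$ to elementary subsumptions and reorder one step of the $Q$-path at a time, landing at the same essential cell by condition~(2).

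For ``swap-invariant $\implies$ deterministic'', I would show that $\MN(L)$, which recognizes $L$ by Prop.~\ref{p:MNLang}, is deterministic; finiteness is irrelevant. Condition~(1) is immediate since the start cells $\eqcl{\id_U}$ are one per type. For condition~(2), suppose $\eqcl{Q_1},\eqcl{Q_2}\in\MN(L)[V]$ are essential with coinciding lower faces $\delta^0_A(\eqcl{Q_i})=\eqcl{Q_i-A}$; I must derive $Q_1\Lapprox Q_2$. The key computation uses Lemma~\ref{l:MinExt}: $(Q_i-A)*\starter{V}{A}\subsu Q_i$, and since $(Q_1-A)\backslash L=(Q_2-A)\backslash L$ these two refinements have equal quotients, so swap-invariance (Lemma~\ref{le:swapi}, applicable because $Q_i\backslash L\neq\emptyset$) forces $Q_1\backslash L=Q_2\backslash L$. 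Combining with Lemma~\ref{l:StrongEqDef}, which upgrades $Q_1-A\Lapprox Q_2-A$ to $Q_1-D\Lapprox Q_2-D$ for all $D\supseteq A$, and re-running the same argument on $Q_1-B,\,Q_2-B$ for arbitrary $B\subseteq\rfin(Q_i)$, gives $(Q_1-B)\backslash L=(Q_2-B)\backslash L$ for every such $B$; together with $\fin(Q_1)\cong\fin(Q_2)$ (inherited from $Q_1-A\Lsim Q_2-A$ and $A\subseteq\rfin(Q_i)$) this is precisely $Q_1\Lapprox Q_2$. The remaining work is purely the interface bookkeeping, notably $\rfin(Q_i-B)=\rfin(Q_i)-B$ and the matching of source interfaces, which I would keep to the minimum needed to justify the displayed quotient identities.
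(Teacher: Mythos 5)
Your proposal follows essentially the same route as the paper: the forward direction is the paper's Lemmas~\ref{l:DetEquiPaths} and~\ref{l:DetSubPaths} (including the reliance on a subsumption-induced map between track objects, cited from the prior work) repackaged via an endpoint function $\tau$ with $P\backslash L=\mathsf{Post}(\tau(P))$, and the converse is the paper's proof that $\MN(L)$ is deterministic, using Lemmas~\ref{l:MinExt}, \ref{l:ExtQ}, \ref{l:StrongEqDef} and swap-invariance in the same way. The only detail you leave implicit is that in the re-run for general $B\subseteq\rfin(Q_i)$ you must know $(Q_i-B)\backslash L\neq\emptyset$ before Lemma~\ref{le:swapi} applies, which is exactly what the paper's Lemma~\ref{l:EssentialLowerFaces} together with Prop.~\ref{p:Acc} supplies.
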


\begin{example}
\label{ex:NonDetLanguage}
  The regular language $L=\{\loset{a\\b}, a b, b a, a b c\}$ from Example \ref{ex:nondet}
  is not swap-invariant:
  using Lemma \ref{le:swapi},
  $ab\ibullet\subsu\loset{a\\b&\!\!\ibullet\!}$, but
  $\{ab\ibullet\}\lquo L
  =
  \{\ibullet b, \ibullet bc\}
  \ne
  \{\ibullet b\}
  =
  \{\loset{a\\b&\!\!\ibullet\!}\}\lquo L$.
  Hence $L$ is not deterministic.
\end{example}

The next examples explain why we need to restrict to essential cells
in the definition of deterministic HDAs.

\begin{example}
  The HDA in Example \ref{ex:aa} is deterministic.
  There are two different $a$-labelled edges starting at $w_\epsilon$
  ($w_a$ and $\eqcl{\loset{\!\ibullet\!\!& a \\ \!\ibullet\!\!& a&\!\!\ibullet\! }}$),
  yet it does not disturb determinism since $w_\epsilon$ is not accessible.
\end{example}

\begin{example}
  Let $L=\{ab, \loset{a&\!\!\ibullet\!\\ b&\!\!\ibullet\!}\}$.
  Then $\MN(L)^{\ess}$ is as follows:\vspace*{-1mm}
  \[
    \begin{tikzpicture}[x=1cm, y=.8cm]
      \filldraw[color=black!10!white] (0,0)--(2,0)--(2,2)--(0,2)--(0,0);
      \node[state] (eps) at (0,0) {$\epsilon$};
      \node at (-.3,-.2) {$\bot$};
      \node[state] (a) at (2,0) {$a$};
      \node[state] (b) at (0,2) {$y$};
      \node[state] (ab) at (2,2) {$y$};
      \node[state] (aa) at (4,0) {$ab$};
      \node[above right=0.1] at (aa) {$\top$};
      \path (eps) edge node[swap] {$a\ibullet$} (a);
      \path (eps) edge node {$b\ibullet$} (b);
      \path (a) edge node {$ab\ibullet$} (aa);
      \path (b) edge node {$y_{a\ibullet}$} (ab);
      \path (a) edge node[swap] {$y_{b\ibullet}$} (ab);
      \node (m) at (1,1) {${\loset{ a&\!\!\ibullet\! \\ b&\!\!\ibullet\!}}$};
      \node at (1.4,1.2) {$\top$};
    \end{tikzpicture}
  \]	
  It is deterministic: there are two $b$-labelled edges leaving $a$,
  namely $y_{b\ibullet}$ and $ab\ibullet$,
  but only the latter is coaccessible.
\end{example}

The next lemma shows that up to path equivalence,
paths on deterministic HDAs are determined by their labels.
That is, up to path equivalence, deterministic HDAs are \emph{unambiguous}.
Note that \cite{DBLP:conf/ictac/AmraneBFZ23} shows that
non-deterministic HDAs may exhibit unbounded ambiguity.

\begin{lemma}
  \label{l:DetEquiPaths}
  Let $X$ be a deterministic HDA and $\alpha,\beta\in\Path(X)_\bot$
  with $\tgt(\alpha),$ $\tgt(\beta)\in\ess(X)$.
  If $\ev(\alpha)=\ev(\beta)$,
  then $\alpha\simeq \beta$.
\end{lemma}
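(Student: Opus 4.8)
The plan is to reduce both paths to their sparse representatives and then induct on the common number of steps, using the two clauses of determinism to force the intermediate cells to coincide one step at a time.

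First I would replace $\alpha$ and $\beta$ by the unique sparse paths $\alpha_s\simeq\alpha$ and $\beta_s\simeq\beta$ in their path-equivalence classes. Since path equivalence is generated by merging consecutive upsteps or downsteps, it preserves sources and targets, so $\src(\alpha_s)=\src(\alpha)=\src(\beta)=\src(\beta_s)$ and $\tgt(\alpha_s)=\tgt(\alpha)$, $\tgt(\beta_s)=\tgt(\beta)$; moreover $\ev(\alpha_s)=\ev(\alpha)=\ev(\beta)=\ev(\beta_s)=:P$. Because $\alpha_s$ and $\beta_s$ are sparse, their event ipomsets are sparse step decompositions of $P$, and by Proposition \ref{p:SparsePresentation} these decompositions coincide: there is a single alternating sequence $P=Q_1*\dotsm*Q_n$ of starters and terminators, and step $i$ of each sparse path is an upstep exactly when $Q_i$ is a starter and a downstep exactly when $Q_i$ is a terminator. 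Thus the two sparse paths have the same length $n$ and matching step types.

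Next I would record that every cell occurring along $\alpha_s$ (and likewise $\beta_s$) is essential: it is accessible because the path starts in $\bot$, and it is coaccessible because the remaining segment reaches $\tgt(\alpha_s)=\tgt(\alpha)\in\ess(X)$, which is itself coaccessible. This is precisely what makes the determinism hypothesis applicable at each intermediate cell. The core of the argument is then an induction on $n$ showing that the two sparse paths pass through the same cells. The base case is the common source. For the inductive step, suppose the paths agree up to a common essential cell $x$. If the next step is a terminator $\terminator{U}{B}$, both paths perform the downstep $x\arrI{B}\delta^1_B(x)$, and since $\delta^1_B$ is a function the next cells coincide. If the next step is a starter $\starter{U}{A}$, both paths perform an upstep into some cell of type $U$ whose $\delta^0_A$-face equals $x$; both candidate cells are essential by the previous observation, so clause~(2) of determinism, applied with $V=U$ and the essential base cell $x\in X[U-A]$, forces them to be equal. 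In either case the paths still agree after the step, and after $n$ steps we obtain $\tgt(\alpha_s)=\tgt(\beta_s)$, i.e.\ $\tgt(\alpha)=\tgt(\beta)$.

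The main obstacle is the bookkeeping in the reduction to sparse form: one must verify that passing to sparse representatives preserves both endpoints and the event ipomset, and then invoke uniqueness of the sparse step decomposition to guarantee that $\alpha_s$ and $\beta_s$ are \emph{step-synchronised}, that is, have the same length and the same pattern of upsteps and downsteps. Once this synchronisation is secured, the two determinism conditions apply cleanly step-by-step and the remainder is a routine induction.
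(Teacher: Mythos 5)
Your proof is correct and follows essentially the same route as the paper's: reduce to sparse representatives, invoke uniqueness of sparse step decompositions (Prop.~\ref{p:SparsePresentation}) to synchronise the two paths step by step, and then induct, using functionality of $\delta^1_B$ for downsteps and clause~(2) of determinism (applicable because all intermediate cells are essential) for upsteps. The only difference is that you spell out the bookkeeping the paper leaves implicit, namely that passing to sparse representatives preserves endpoints and event ipomsets, and why every cell along the paths is essential.
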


\begin{proof}
  We can assume that $\alpha=\alpha_1*\dotsm*\alpha_n$ and $\beta=\beta_1*\dotsm*\beta_m$ are sparse;
  note that all of these cells are essential.
  We show that $\alpha_k=\beta_k$ for all $k$ which implies the claim.

\medskip
  Denote $P=\ev(\alpha)=\ev(\beta)$,
  then
  \[
    P=\ev(\alpha_1)*\dotsm*\ev(\alpha_n)
  \]
  is a sparse step decomposition of $P$.
  Similarly, $P=\ev(\beta_1)*\dotsm*\ev(\beta_m)$ is a sparse step decomposition.
  Yet sparse step decompositions are unique by Proposition~\ref{p:SparsePresentation};
  hence, $m=n$ and $\ev(\alpha_k)=\ev(\beta_k)$ for every $k$.

\medskip
  We show by induction that $\alpha_k=\beta_k$.
  First, $\ev(\alpha_0)=\ev(\beta_0)$ implies $\alpha_0=\beta_0$ by determinism.
  Now assume that $\alpha_{k-1}=\beta_{k-1}$.
  Let $x=\src(\alpha_k)=\tgt(\alpha_{k-1})=\tgt(\beta_{k-1})=\src(\beta_k)$.
  If $P_k=\ev(\alpha_k)=\ev(\beta_k)$ is a terminator $\terminator{U}{B}$,
  then
  $\alpha_k = \delta_B^1(x) = \beta_k$.
  If $P_k$ is a starter $\starter{U}{A}$,
  then there are $y, z\in X$ such that $\delta^0_A(y)=\delta^0_A(z)=x$.
  As $y$ and $z$ are essential
  and $X$ is deterministic,
  this implies $y=z$ and $\alpha_k=\beta_k$. \qed
\end{proof}

\begin{lemma}
  \label{l:DetSubPaths}
  Let $\alpha$ and $\beta$ be essential paths on a deterministic HDA $X$.
  Assume that $\src(\alpha)=\src(\beta)$ and $\ev(\alpha)\subsu\ev(\beta)$.
  Then $\tgt(\alpha)=\tgt(\beta)$.
\end{lemma}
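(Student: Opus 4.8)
The plan is to reduce the subsumption hypothesis $\ev(\alpha)\subsu\ev(\beta)$ to the equality hypothesis of Lemma~\ref{l:DetEquiPaths}. Write $x=\src(\alpha)=\src(\beta)$, $y=\tgt(\beta)$, $P=\ev(\alpha)$ and $Q=\ev(\beta)$, so that $P\subsu Q$. The key idea is to produce a third path $\gamma$ running between the \emph{same} two cells as $\beta$, namely from $x$ to $y$, but whose event ipomset is the refinement $P$ rather than $Q$. Once such a $\gamma$ is available, $\alpha$ and $\gamma$ share both their source $x$ and their label $P=\ev(\alpha)=\ev(\gamma)$, and Lemma~\ref{l:DetEquiPaths} will force $\tgt(\alpha)=\tgt(\gamma)=y=\tgt(\beta)$.

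To obtain $\gamma$ I would regard $X_x^y$ as an HDA in its own right. Then $\beta$ is an accepting path of $X_x^y$, so $Q=\ev(\beta)\in\Lang(X_x^y)$. Since languages of HDAs are closed under subsumption \cite[Prop.~10]{conf/concur/FahrenbergJSZ22} and $P\subsu Q$, we get $P\in\Lang(X_x^y)$, \ie there is a path $\gamma\in\Path(X)_x^y$ with $\ev(\gamma)=P$. (Equivalently, one may invoke Lemma~\ref{l:PathTrack} to get an HDA-map $\sq^Q\to X_x^y$, precompose it with the map $\sq^P\to\sq^Q$ induced by the subsumption $P\subsu Q$ on track objects \cite{Hdalang}, and read off $\gamma$ from Lemma~\ref{l:PathTrack} again.) This $\gamma$ is essential, since its source $x=\src(\alpha)$ and target $y=\tgt(\beta)$ are both essential because $\alpha$ and $\beta$ are; in particular $x$ is accessible.

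Finally I would bridge the gap between Lemma~\ref{l:DetEquiPaths}, which is phrased for paths out of $\bot_X$, and our $\alpha,\gamma$, which start at the possibly interior cell $x$. As $x$ is accessible, choose a path $\delta\in\Path(X)_\bot^x$ and form the concatenations $\delta\alpha$ and $\delta\gamma$. Both lie in $\Path(X)_\bot$, both have essential targets $\tgt(\alpha)$ and $y$, a common source $\src(\delta)\in\bot_X$, and a common event ipomset $\ev(\delta)*P$. Lemma~\ref{l:DetEquiPaths} then yields $\tgt(\delta\alpha)=\tgt(\delta\gamma)$, that is $\tgt(\alpha)=\tgt(\gamma)=\tgt(\beta)$, as desired.

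The main obstacle is the middle step: realising the refinement $P$ as a genuine path with the \emph{prescribed} endpoints $x$ and $y$, not merely as some path with the right label somewhere in $X$. This is exactly where the subsumption-closure of HDA languages (equivalently, functoriality of track objects under subsumption) enters, and one must be careful to apply that closure to the HDA $X_x^y$, with single start and accept cells, rather than to $X$ itself. The reduction to $\Path(X)_\bot$ by prepending $\delta$, and the concluding appeal to Lemma~\ref{l:DetEquiPaths}, are then routine.
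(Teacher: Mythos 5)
Your proof is correct and follows essentially the same route as the paper's: both arguments manufacture a path from $x$ to $y=\tgt(\beta)$ with event ipomset $\ev(\alpha)$ — you via subsumption-closure of $\Lang(X_x^y)$, the paper via composing HDA-maps $\sq^{\ev(\alpha)}\to\sq^{\ev(\beta)}\to X_x^y$ from Lemma~\ref{l:PathTrack} and \cite[Lemma 63]{Hdalang}, which is exactly your parenthetical alternative — and then conclude with Lemma~\ref{l:DetEquiPaths}. Your additional step of prepending $\delta\in\Path(X)_\bot^x$ is a careful touch the paper skips: the paper applies Lemma~\ref{l:DetEquiPaths} directly to paths sourced at the essential cell $x$ rather than at a start cell, which is justified by that lemma's proof (it only needs all cells involved to be essential) but not by its literal statement about $\Path(X)_\bot$.
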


\begin{proof}
  By Lemma \ref{l:SubsuPath},
  there exists a path $\gamma\in\Path(X)_{\src(\beta)}^{\tgt(\beta)}$ such that $\ev(\gamma)=\ev(\alpha)$.
  Lemma \ref{l:DetEquiPaths} implies that $\gamma\simeq \alpha$
  and then $\tgt(\alpha)=\tgt(\gamma)=\tgt(\beta)$.
\end{proof}

\begin{proposition}
  \label{l:DetImpliesSInv}
  If $L$ is deterministic, then $L$ is swap-invariant.
\end{proposition}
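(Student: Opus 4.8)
The plan is to use a deterministic HDA $X$ with $\Lang(X)=L$ together with the characterisation of swap-invariance from Lemma~\ref{le:swapi}. Thus it suffices to fix $P\subsu Q$ with $Q\backslash L\neq\emptyset$ and prove $P\backslash L=Q\backslash L$. One inclusion, $Q\backslash L\subseteq P\backslash L$, is immediate from Lemma~\ref{l:QuotIncl}, so the work lies in showing $P\backslash L\subseteq Q\backslash L$. I would take an arbitrary $R\in P\backslash L$ (so $PR\in L$) and, using $Q\backslash L\neq\emptyset$, fix a witness $R'\in Q\backslash L$ (so $QR'\in L$); the goal then becomes $QR\in L$.

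Next I would realise these two ipomsets as accepting paths in $X$ and cut them at the gluing point. Since $PR\in L=\Lang(X)$ there is an accepting path with event ipomset $P*R$, and Lemma~\ref{l:PathDivision} splits it into a prefix $\gamma'$ with $\ev(\gamma')=P$ and a suffix $\delta'$ with $\ev(\delta')=R$ meeting at $\tgt(\gamma')=\src(\delta')$. Analogously $QR'\in L$ yields a prefix $\gamma$ with $\ev(\gamma)=Q$ and a suffix $\delta$ with $\ev(\delta)=R'$. All four paths are segments of accepting paths, hence essential, and $\src(\gamma),\src(\gamma')\in\bot_X$.

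The crux is to force $\gamma$ and $\gamma'$ to share their starting cell so that Lemma~\ref{l:DetSubPaths} applies. Because $P\subsu Q$ their source interfaces are isomorphic losets, hence the same type $U\in\sq$; and a path starting in a bottom cell has that cell's event list equal to the source interface of its event ipomset, so both $\src(\gamma)$ and $\src(\gamma')$ lie in $X[U]\cap\bot_X$. Determinism condition~(1) (at most one initial cell per type) then gives $\src(\gamma)=\src(\gamma')$. Now $\gamma,\gamma'$ are essential paths with the same source and $\ev(\gamma')=P\subsu Q=\ev(\gamma)$, so Lemma~\ref{l:DetSubPaths} yields $\tgt(\gamma')=\tgt(\gamma)$. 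Consequently $\tgt(\gamma)=\src(\delta')$, the concatenation $\gamma*\delta'$ is legitimate (the matching interface $T_Q\cong S_R$ holding since $T_P\cong S_R$ and $T_P\cong T_Q$), and it is an accepting path with $\ev(\gamma*\delta')=Q*R=QR$; hence $QR\in L$, i.e.\ $R\in Q\backslash L$, as desired. I expect the main obstacle to be exactly this source-cell identification: pinning down that subsumption preserves the source loset and that a bottom-starting path determines its initial cell via determinism condition~(1) is what upgrades the weaker hypothesis $\src(\gamma),\src(\gamma')\in\bot_X$ to genuine equality and unlocks Lemma~\ref{l:DetSubPaths}.
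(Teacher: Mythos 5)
Your proof is correct and takes essentially the same route as the paper's: both reduce to the criterion of Lemma~\ref{le:swapi}, use Lemma~\ref{l:QuotIncl} for one inclusion, split the accepting paths for $PR\in L$ and for the witness of $Q\backslash L$ via Lemma~\ref{l:PathDivision}, identify the targets of the $P$- and $Q$-prefixes via Lemma~\ref{l:DetSubPaths}, and concatenate to conclude $QR\in L$. Your explicit identification of the common start cell via determinism condition~(1) (using that subsumption preserves the source loset) just spells out what the paper leaves implicit by writing both paths as starting from the unique start cell of type $U$.
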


\begin{proof}
  Let $X$ be a deterministic HDA that recognises $L$
  and fix ipomsets $P\subsu Q$.
  From Lemma \ref{l:QuotIncl} follows that $Q\lquo L\subseteq P\lquo L$.
  It remains to prove that if $Q\lquo L\ne\emptyset$,
  then $P\lquo L\subseteq Q\lquo L$.
  Denote $U\cong S_P\cong S_Q$.
	
  Let $R\in Q\lquo L$
  and let $\omega\in\Path(X)_{\eqcl{\id_U}}^\top$ be an accepting path that recognises $QR$.
  By Lemma \ref{l:PathDivision}, there exists a path $\beta\in\Path(X)_{\eqcl{\id_U}}$
  such that $\ev(\beta)=Q$.	
	
  Now assume that $R'\in P\lquo L$,
  and let $\omega'\in\Path(X)_{\eqcl{\id_U}}^\top$ be a path such that $\ev(\omega')=PR'$.
  By Lemma \ref{l:PathDivision},
  there exist paths $\alpha\in\Path(X)_{\eqcl{\id_U}}$ and
  $\gamma\in\Path(X)^{\tgt(\omega')}$
  such that $\tgt(\alpha)=\src(\gamma)$,
  $\ev(\alpha)=P$ and $\ev(\gamma)=R'$.
  From Lemma \ref{l:DetSubPaths} and $P\subsu Q$ follows that
  $\tgt(\alpha)=\tgt(\beta)$.
  Thus, $\beta$ and $\gamma$ may be concatenated
  to an accepting path $\beta*\gamma$.
  By $\ev(\beta*\gamma)=QR'$
  we have $QR'\in L$, \ie $R'\in Q\lquo L$. \qed
\end{proof}

\begin{lemma}
  \label{l:EssentialLowerFaces}
  If $\eqcl{P}\in\ess(\MN(L))$ and $A\subseteq \rfin(P)$, then $\eqcl{P-A}\in\ess(\MN(L))$.
\end{lemma}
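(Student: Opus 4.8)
The plan is to reduce the statement to a purely language-theoretic condition via Proposition~\ref{p:Acc} and then exhibit an explicit witness. Since $A\subseteq\rfin(P)$, the lower face $\delta^0_A(\eqcl{P})$ is $\eqcl{P-A}$, so $\eqcl{P-A}$ is a \emph{regular} cell. By Proposition~\ref{p:Acc} an essential cell of $\MN(L)$ is exactly a regular cell whose prefix quotient is non-empty; hence the hypothesis $\eqcl{P}\in\ess(\MN(L))$ gives $P\backslash L\neq\emptyset$, and it suffices to prove that $(P-A)\backslash L\neq\emptyset$.

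The key device will be Lemma~\ref{l:MinExt}, which yields the subsumption $(P-A)*\starter{T_P}{A}\subsu P$. Applying Lemma~\ref{l:QuotIncl} to this subsumption gives $P\backslash L\subseteq\bigl((P-A)*\starter{T_P}{A}\bigr)\backslash L$, so the latter quotient is non-empty as well. Picking any $N$ in it, we get $(P-A)*\starter{T_P}{A}*N\in L$, and by associativity of gluing this reads $\starter{T_P}{A}*N\in(P-A)\backslash L$. Thus $(P-A)\backslash L\neq\emptyset$, which by Proposition~\ref{p:Acc} establishes $\eqcl{P-A}\in\ess(\MN(L))$.

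I expect no serious obstacle here, as this is essentially the upstep case already used in the proof of Lemma~\ref{l:AccNeg}. The only points requiring care are bookkeeping ones: I should check that the interfaces match so that the composites are defined (the target of $P-A$ is $T_P-A$, which is precisely the source of $\starter{T_P}{A}$, whose target $T_P$ in turn equals $S_N$ since $PN$ is defined), and that removing $A\subseteq\rfin(P)$ leaves the source interface of $P$ untouched, so that $\eqcl{P-A}$ is genuinely the face $\delta^0_A(\eqcl{P})$ and not a subsidiary cell.
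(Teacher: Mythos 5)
Your proof is correct, but it takes a different route from the paper's. The paper's own argument is purely geometric and takes two lines: accessibility of $\eqcl{P-A}$ is automatic because \emph{all} regular cells of $\MN(L)$ are accessible (Lemma~\ref{l:AccPos}), and coaccessibility follows by prepending the one-step path $(\eqcl{P-A}\arrO{A}\eqcl{P})$ — which exists precisely because $A\subseteq\rfin(P)$ makes $\eqcl{P-A}$ the lower face $\delta^0_A(\eqcl{P})$ — to any path witnessing coaccessibility of $\eqcl{P}$. You instead translate the whole statement into prefix quotients via Proposition~\ref{p:Acc} and then show $(P-A)\backslash L\neq\emptyset$ using Lemma~\ref{l:MinExt} and Lemma~\ref{l:QuotIncl}; as you note yourself, this replays the upstep case from the proof of Lemma~\ref{l:AccNeg}. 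Both arguments are sound and rest on the same underlying facts (Proposition~\ref{p:Acc} is itself proved by the path arguments the paper uses here directly), so your version is somewhat more roundabout: it re-derives at the language level what the face-map structure of $\MN(L)$ gives for free. On the other hand, your approach produces an explicit witness $\starter{T_P}{A}*N\in(P-A)\backslash L$, and your bookkeeping checks (that removing $A\subseteq\rfin(P)$ preserves the source interface, so $\eqcl{P-A}$ is regular rather than subsidiary, and that all gluings are defined) are exactly the right points to verify; the only slight imprecision is justifying $S_N\cong T_P$ by ``$PN$ is defined'' when $N$ was chosen from $\bigl((P-A)*\starter{T_P}{A}\bigr)\backslash L$, but since that composite has target $T_P$ as well, the conclusion is unaffected.
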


\begin{proof}
  By Lemma \ref{l:AccPos},
  $\eqcl{P-A}$ is accessible.
  By assumption, $\eqcl{P}$ is coaccessible and $(\eqcl{P-A}\arrO{A}\eqcl{P})$
  is a path, so $\eqcl{P-A}$ is also coaccessible. \qed
\end{proof}

\begin{proposition}
  \label{l:SInvImpliesDet}
  If $L$ is swap-invariant, then $\MN(L)$ and $\MN(L)^\ess$ are deterministic.
\end{proposition}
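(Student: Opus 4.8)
The plan is to check the two defining conditions of determinism for $\MN(L)$ directly, using swap-invariance only where it is genuinely needed. Condition~(1) is immediate: by construction $\bot_{\MN(L)}=\{\eqcl{\id_U}\}_{U\in\sq}$, so each $\MN(L)[U]$ has exactly the one start cell $\eqcl{\id_U}$ (the subsidiary cell $w_U$ is never a start cell). The real content is condition~(2), so I would fix $V\in\sq$, $A\subseteq V$, an essential cell $x\in\MN(L)[V-A]$, and two essential cells $y_1=\eqcl{P_1},y_2=\eqcl{P_2}\in\MN(L)[V]$ with $\delta^0_A(y_i)=x$, and aim to prove $P_1\Lapprox P_2$ (so $y_1=y_2$). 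Choosing representatives with $T_{P_1}=T_{P_2}=V$, essentiality of $x$ makes it regular, so by Prop.~\ref{p:Acc} and the face-map formula~\eqref{e:FaceMaps} the equality $\delta^0_A(\eqcl{P_i})=x$ forces $A\subseteq\rfin(P_i)$ and $x=\eqcl{P_i-A}$; hence $P_1-A\Lapprox P_2-A$ and $(P_i-A)\backslash L=x\backslash L\neq\emptyset$.

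Next I would pin down the signatures. From $P_1-A\Lsim P_2-A$ we get $\fin(P_1-A)\cong\fin(P_2-A)$, which (since $A\subseteq\rfin(P_i)$) reads $\starter{V-A}{\rfin(P_1)-A}\cong\starter{V-A}{\rfin(P_2)-A}$. Because losets are totally ordered by event order they are rigid, so this isomorphism is the identity on $V-A$, forcing $\rfin(P_1)-A=\rfin(P_2)-A$ and therefore, as $A$ lies in both, $R:=\rfin(P_1)=\rfin(P_2)$ and $\fin(P_1)=\fin(P_2)$. It then remains to establish $(P_1-C)\backslash L=(P_2-C)\backslash L$ for every $C\subseteq R$; together with the signature equality this is exactly $P_1\Lapprox P_2$ (and $C=\emptyset$ subsumes $P_1\Lsim P_2$). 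For this I would use the minimal-extension Lemma~\ref{l:MinExt}, which gives $(P_i-(C\cup A))*\starter{V-C}{A\setminus C}\subsu P_i-C$, so that the quotient of the refined ipomset equals $\starter{V-C}{A\setminus C}\backslash\big((P_i-(C\cup A))\backslash L\big)$. Whenever $(P_i-C)\backslash L\neq\emptyset$, swap-invariance (Lemma~\ref{le:swapi}) upgrades this subsumption to the equality
\[
  (P_i-C)\backslash L=\starter{V-C}{A\setminus C}\backslash\big((P_i-(C\cup A))\backslash L\big);
\]
and since $A\subseteq C\cup A\subseteq R$, the strong equivalence $P_1-A\Lapprox P_2-A$ yields $(P_1-(C\cup A))\backslash L=(P_2-(C\cup A))\backslash L$, so the two right-hand sides coincide for $i=1,2$.

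The one gap is that Lemma~\ref{le:swapi} may be invoked only when the coarser ipomset has nonempty prefix quotient, so I must first know $(P_i-C)\backslash L\neq\emptyset$; this is exactly where I expect the main difficulty, since passing to a more concurrent prefix can in principle destroy coaccessibility. The resolution — and the reason the definition of determinism is restricted to \emph{essential} cells — is Lemma~\ref{l:EssentialLowerFaces}: as $\eqcl{P_i}$ is essential and $C\subseteq R=\rfin(P_i)$, the cell $\eqcl{P_i-C}$ is again essential, whence $(P_i-C)\backslash L\neq\emptyset$ for every $C\subseteq R$ and each $i$. Thus the emptiness obstruction vanishes uniformly, the displayed equalities hold for all $C\subseteq R$, and we conclude $P_1\Lapprox P_2$, i.e.\ $y_1=y_2$. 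This verifies condition~(2), and with condition~(1) it shows that $\MN(L)$ is deterministic.
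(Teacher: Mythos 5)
Your proof is correct and follows essentially the same route as the paper's: both arguments reduce condition (2) to showing $P_1\Lapprox P_2$ from $P_1-A\Lapprox P_2-A$, establish that $A\subseteq\rfin(P_i)$ (no source events) and that the signatures agree, and then for each $C\subseteq\rfin(P_i)$ combine Lemma~\ref{l:MinExt} with swap-invariance (Lemma~\ref{le:swapi}), invoking Lemma~\ref{l:EssentialLowerFaces} in exactly the same place to rule out empty prefix quotients. The only cosmetic differences are that you use the identity $(XY)\backslash L=Y\backslash(X\backslash L)$ where the paper cites Lemma~\ref{l:ExtQ}, and that you spell out the rigidity argument giving $\rfin(P_1)=\rfin(P_2)$, which the paper leaves implicit.
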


\begin{proof}
  Since $\MN(L)^{\ess}$ is a sub-HDA of $\MN(L)$, it suffices to prove that $\MN(L)$ is deterministic.
  $\MN(L)$ contains only one start cell $\eqcl{\id_U}$ for every $U\in\sq$.

  Fix $U\in\sq$, $P,Q\in\iiPoms_U$ and $A\subseteq U$.
  Assume that $\delta_A^0(\eqcl{P})=\delta_A^0(\eqcl{Q})$, \ie
  $\eqcl{P-A}=\eqcl{Q-A}$,
  and $\eqcl{P},\eqcl{Q},\eqcl{P-A}\in\ess(\MN(L))$.
  We will prove that $\eqcl{P}=\eqcl{Q}$,
  or equivalently, $P\Lapprox Q$.
	
  We have $\fin(P-A)=\fin(Q-A)=:\starter{(U-A)}{S}$.
  First, notice that $A$, regarded as a subset of $P$ (or $Q$), contains no start events:
  else, we would have $\delta^0_A(\eqcl{P})=w_{U-A}$ (or $\delta^0_A(\eqcl{Q})=w_{U-A}$).
  As a consequence, $\fin(P)=\fin(Q)=\starter{U}{S}$.
	
\medskip
  For every $B\subseteq \rfin(P)=\rfin(Q)$ we have
  \begin{align*}
    P-A \Lapprox Q-A &\implies
    (P-(A\cup B))\lquo L = (Q-(A\cup B))\lquo L\\
    &\implies ((P-(A\cup B))*\starter{U}{(A-B)})\lquo L = ((Q-(A\cup B))*\starter{U}{(A-B)})\lquo L.
  \end{align*}
  The first implication follows from the definition,
  and the second from Lemma \ref{l:ExtQ}.
  From Lemma \ref{l:MinExt} follows that
  \[
    (P-(A\cup B))*\starter{U}{(A-B)}\subsu P-B,\quad
    (Q-(A\cup B))*\starter{U}{(A-B)}\subsu Q-B.
  \]
  Thus, by swap-invariance we have
  $
  (P-B)\lquo L=(Q-B)\lquo L;
  $
  note that Lemma \ref{l:EssentialLowerFaces} guarantees that neither of these languages is empty. \qed
\end{proof}

\section{Higher-dimensional automata with interfaces}
\label{se:ihda}

Higher-dimensional automata with interfaces (iHDAs) were introduced in \cite{conf/concur/FahrenbergJSZ22}
as a tool that allowed to prove a Kleene theorem for HDAs.
Both HDAs and iHDAs recognise the same class of languages,
yet, compared to iHDAs, HDAs have a flaw:
they enforce introducing non-essential cells that serve solely as faces of other cells.
We will show below that essential parts of iHDAs are again iHDAs,
a fact which allows us to give a Myhill-Nerode construction using iHDAs
which proceeds along different lines and, we believe, is more simple and principled.

\medskip
We will also provide a notion of deterministic iHDAs
which, again, is simpler in that it does not have to restrict to essential cells,
and show that the notions of deterministic languages of HDAs and iHDAs agree.

\subsection{Iprecubical sets and iHDAs}

The main difference between HDAs and iHDAs is that events in iHDAs
may be marked as \emph{source events} or \emph{target events}.
Accepting runs may never terminate target events and, similarly,
source events must have been present from the very beginning of an accepting run.

A \emph{concurrency list with interfaces} (\emph{iconclist})
$(U, {\evord}, S, T, \lambda)$ is a conclist $(U, {\evord}, \lambda)$
together with subsets $S, T\subseteq U$.
Equivalently, iconclists are iposets with empty precedence relation;
conclists are iconclists with empty interfaces.
We write $\ilo{S}{U}{T}$
for an iconclist as above.

\medskip
Let $\isq$ denote the set of iconclists.
An \emph{iprecubical set} consists of a set of cells $X$
together with a mapping $\iev: X\to \isq$.
For an iconclist $\ilo{S}{U}{T}$ we write $X[\ilo{S}{U}{T}]=\{x\in X\mid \iev(x)=\ilo{S}{U}{T}\}$.
Face maps in iprecubical sets cannot unstart events in source interfaces
neither terminate events in target interfaces.
That is, for every iconclist $\ilo{S}{U}{T}$ and subsets $A, B\subseteq U$
such that $A\cap S=B\cap T=\emptyset$ there are \emph{face maps}
\begin{equation*}
  \delta_A^0: X[\ilo{S}{U}{T}]\to X[\ilo{S}{(U-A)}{(T-A)}], \qquad
  \delta_B^1: X[\ilo{S}{U}{T}]\to X[\ilo{(S-B)}{(U-B)}{T}].
\end{equation*}
Further, for $A, B\subseteq U$ with $A\cap B=\emptyset$ and $\nu, \mu\in\{0, 1\}$,
$\delta_A^\nu \delta_B^\mu = \delta_B^\mu \delta_A^\nu$
whenever these are defined.

\medskip
A \emph{higher-dimensional automaton with interfaces} (\emph{iHDA})
is an iprecubical set $X$ together with subsets $\bot_X, \top_X\subseteq X$
of \emph{start} and \emph{accept} cells such that
for all $x\in \bot_X$ with $\iev(x)=\ilo{S}{U}{T}$, $S=U$ and
for all $x\in \top_X$ with $\iev(x)=\ilo{S}{U}{T}$, $T=U$.
That is, events in start cells are source events and cannot be unstarted,
and events in accept cells are target events and cannot be terminated.

\begin{remark}
  \label{r:ihdanoni}
  Every precubical set $X$ may be regarded as an iprecubical set $X'$
  such that $X'[\ilo \emptyset U\emptyset]=X[U]$
  and  $X'[\ilo SUT]=\emptyset$ whenever $S\ne\emptyset$ or $T\ne\emptyset$.
  If $X$ is an HDA and all its start and accept cells are vertices (elements of $X[\emptyset]$),
  then $X'$ may be regarded as an iHDA as well.
  This fails in presence of higher-dimensional start or accept cells
  due to the condition on event iconclists of such cells.
\end{remark}

\begin{example}
  \label{ex:aiHDA}
  Let $X$ be the iHDA defined by $X=\{x, e_1,e_2,e_3,e_4\}$,
  $\ev(x)=\emptyset$,
  \[
    \ev(e_1)=\ilo aaa,
    \quad
    \ev(e_2)=\ilo aa\emptyset,
    \quad
    \ev(e_3)=\ilo \emptyset a\emptyset,
    \quad
    \ev(e_4)=\ilo \emptyset aa,
  \]
  $\delta^1_a(e_2)=\delta^0_a(e_3)=\delta^1_a(e_3)=\delta^0_a(e_4)=x$,
  and $\bot_X=\{e_1,e_2\}$, $\top_X=\{e_1,e_4\}$.
  Note that $e_1$ has neither an upper nor a lower face since its only event $a$
  is in both interfaces.
  For the opposite reason, the edge $e_3$ can be neither start nor accept cell.
  \[
    \begin{tikzpicture}[x=1cm, y=.8cm]
      \node[state] (q) at (0,0) {$x$};
      \draw [->] (q) to[out=30, in=0,looseness=2] (0,1) to[out=180,in=150,looseness=2] (q);
      \node[above] at (0,1) {$e_3$};
      \node[state,color=gray,fill=white,dashed,minimum size=10pt] (a) at (-2,0) {};
      \node[state,color=gray,fill=white,dashed,minimum size=10pt] (z) at (2,0) {};
      \draw[-](a) edge[->] node[below] {$\bot$} node[above]{$e_2$} (q);
      \draw[-](q) edge[->] node[above] {$\top$} node[below]{$e_4$} (z);
      \node[state,color=gray,fill=white,dashed,minimum size=10pt] (aa) at (-5,0) {};
      \node[state,color=gray,fill=white,dashed,minimum size=10pt] (zz) at (-3,0) {};
      \draw[-](aa) edge[->] node[below] {$\bot$} node[above]{$\top\rlap{$e_1$}$} (zz);
    \end{tikzpicture}
  \]
\end{example}

\begin{example}
  \label{ex:iTrON}
  Figure \ref{fig:iTrON} shows an example of a two-dimensional iHDA.
  The initial cell has event iconclist $\ilo{a}{a}{\emptyset}$ and hence no lower face.
  This lack of lower face propagates to the left two-dimensional cell,
  with event iconclist $\loset{\!\ibullet\!\!& a \\ &b}$.
  Hence iHDAs are \emph{partial} HDAs in the sense of \cite{%
    DBLP:conf/calco/FahrenbergL15,
    DBLP:conf/fossacs/Dubut19},
  but the notion of partiality is more restricted here,
  given that it is on the level of events.
\end{example}

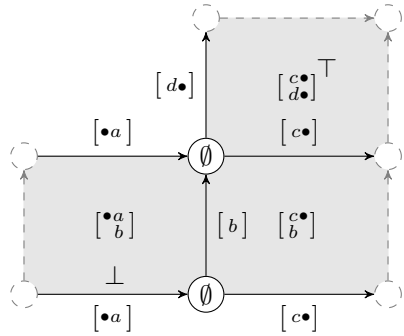
\begin{figure}[h]
  \centering
  \begin{tikzpicture}[x=1.2cm, y=.91cm]
    \filldraw[color=black!10!white] (0,0)--(4,0)--(4,4)--(2,4)--(2,2)--(0,2)--(0,0);
    \node[state,color=gray,fill=white,dashed,minimum size=10pt] (eps) at (0,0) {};
    \node[state] (a) at (2,0) {$\emptyset$};
    \node[state,color=gray,fill=white,dashed,minimum size=10pt] (c) at (0,2) {};
    \node[state] (ac) at (2,2) {$\emptyset$};
    \node[state,color=gray,fill=white,dashed,minimum size=10pt] (aa) at (4,0) {};
    \node[state,color=gray,fill=white,dashed,minimum size=10pt] (aac) at (4,2) {};
    \node[state,color=gray,fill=white,dashed,minimum size=10pt] (acc) at (2,4) {};
    \node[state,color=gray,fill=white,dashed,minimum size=10pt] (aacc) at (4,4) {};
    \path (eps) edge node[swap] {$\loset{\!\ibullet \!\! & a}$} (a);
    \node[above] at (1,0) {$\bot$};
    \node[above right] at (3.1,3) {$\top$};
    \path (eps) edge[color=gray,dashed]  (c);
    \path (a) edge node[swap] {$\loset{c & \!\!\ibullet\! }$} (aa);
    \path (c) edge node {$\loset{\!\ibullet \!\! & a}$} (ac);
    \path (a) edge node[swap] {$\loset{b}$} (ac);
    \path (ac) edge node {$\loset{c & \!\!\ibullet\! }$} (aac);
    \path (acc)  edge[color=gray,dashed] (aacc);
    \path (aa) edge[color=gray,dashed] (aac);
    \path (aac) edge[color=gray,dashed] (aacc);
    \path (ac) edge node {$\loset{d & \!\!\ibullet\! }$} (acc);
    \node at (1,1) {${\loset{ \!\ibullet\!\!& a \\ &b}}$};
    \node at (3,1) {${\loset{ c & \!\!\ibullet\! \\ b& }}$};
    \node at (3,3) {${\loset{ c & \!\!\ibullet\! \\ d&\!\!\ibullet\! }}$};
  \end{tikzpicture}
  \caption{An example of an iHDA. Cells are marked with their event iconclists.}
  \label{fig:iTrON}\vspace*{-2mm}
\end{figure}

\subsection{Paths and languages}

Paths on iHDAs are defined as for HDAs.
Namely, a path is a sequence
$\alpha=(x_0,\varphi_1,x_1,\dotsc,x_n)$
such that each $(x_{i-1},\varphi_i,x_i)$
is either
\begin{itemize}
\item
  an \emph{upstep} $(\delta^0_A(x_i),\arrO{A},x_i)$ for $x_i\in X[\ilo SUT]$, $A\subseteq U-S$, or
\item
  a \emph{downstep} $(x_{i-1},\arrI{B},\delta^1_B(x_{i-1}))$ for $x_{i-1}\in X[\ilo SUT]$, $B\subseteq U-T$.
\end{itemize}
A path $\alpha$ is accepting if $\src(\alpha)=x_0$ is a start cell and $\tgt(\alpha)=x_n$ is an accept cell.

\medskip
For a cell $x\in X$ of an iHDA $X$ we denote by $\ev(x)$ the underlying conclist of $\iev(x)$;
note that
\[
  \iev(x)=(S_{\iev(x)},\ev(x),T_{\iev(x)}).
\]
The \emph{event ipomset} of a path $\alpha$ is defined inductively
as before:
$\ev((x))=\id_{\ev(x)}$,
$\ev(y\arrO{A} x)=\starter{\ev(x)}{A}$,
$\ev(x\arrI{B} y)=\terminator{\ev(x)}{B}$,
and $\ev(\alpha*\beta)=\ev(\alpha)*\ev(\beta)$.
The \emph{language} of an iHDA $X$ is
\begin{equation*}
  \Lang(X) = \{ \ev(\alpha)\mid \alpha \text{ accepting path in } X\}.
\end{equation*}

\begin{example}
  The language of the iHDA from Example \ref{ex:aiHDA} is $\{\ibullet a\ibullet\}\cup \{\ibullet a a^n a\ibullet\mid n\ge 0\}$.
  The language of the iHDA from Example \ref{ex:iTrON} is
  \begin{equation*}
    \left\{ \left[\vcenter{\hbox{\!%
            \begin{tikzpicture}[x=1cm, y=.7cm]
              \node (a) at (0,0) {$a$};
              \node (c) at (1,0) {$c$};
              \node at (1.2,0) {$\ibullet$};
              \node (b) at (0,-.75) {$b$};
              \node at (-.2,0) {$\ibullet$};
              \node (d) at (1,-.75) {$d$};
              \node at (1.2,-.79) {$\ibullet$};
              \path (a) edge (c);
              \path (b) edge (d);
              \path (a) edge (d);
            \end{tikzpicture}
          \!\!}}\right]\right\}\down.
  \end{equation*}
\end{example}

Because of the requirement that events in start cells may not be unstarted
and those in accept cells may not be terminated,
an event in an iHDA carries information whether it will be eventually terminated,
and whether it has been present from the beginning.
This is expressed by the following lemma which shows that
iconclists of cells may be recovered from ipomsets of accepting paths:

\begin{lemma}
  \label{l:EventsofSrcTgt}
  Let $X$ be an iHDA, $\alpha$ be a path in $X$ and $P=\ev(\alpha)$.
  \begin{enumerate}
  \item
    If $\src(\alpha)\in\bot_X$, then $\iev(\tgt(\alpha))\cong(S_P\cap T_P,T_P,Z)$
    for some $Z\subseteq T_P$.
  \item
    If $\tgt(\alpha)\in\top_X$, then $\iev(\src(\alpha))\cong(Y,S_P,S_P\cap T_P)$
    for some $Y\subseteq S_P$.
  \item
    If $\alpha$ is accepting, then
    $\iev(\src(\alpha)\cong (S_P,S_P,S_P\cap T_P)$
    and $\iev(\tgt(\alpha)\cong (S_P\cap T_P,T_P,T_P)$.
  \end{enumerate}
\end{lemma}

\begin{proof}
  It is sufficient to prove 1., 2.\@ is then obtained by reversal and 3.\@ follows from 1.\@ and 2.
  Induction with respect to the length of $\alpha$.
  If $\alpha=(x)$, then $P=\id_{U}=(U,U,U)$ and $\iev(x)=(U,U,Z)$.

\medskip
  If $\alpha=\beta*(\delta^0_A(x)\arrO{A}x)$,	
  $\iev(x)=(S,U,T)$,
  and $A\subseteq U-S$,
  then $\ev(\beta)=P-A$ and
  \[
    \iev(\delta^0_A(x))
    =(S,U-A,T-A)
    =(S_{P-A}\cap T_{P-A},T_{P-A},Z)
    =(S_P\cap T_P, T_P-A,Z)
  \]
  by the inductive hypothesis for $\beta$. Thus, $(S,U,T)=(S_P\cap T_P,T_P,Z)$.

\medskip
  Finally, let $\alpha=\beta*(y\arrI{B}\delta^1_B(y))$,	
  $\iev(y)=(S,U,T)$,
  and $B\subseteq U-T$.
  Denote $\ev(\beta)=Q$, then
  we have $P=Q*\terminator {T_Q}{B}$,
  $S_P=S_Q$ and $T_P=T_Q-B$.
  Therefore,
  \begin{multline*}
    \iev(\tgt(\alpha))
    =
    (S-B,U-B,T)
    =
    (S,U,T)-B
    =
    \iev(\tgt(\beta))-B
    \overset{\text{ind.}}=
    \\
    (S_Q\cap T_Q,T_Q,Z)-B
    =
    (S_Q\cap(T_Q-B),T_Q-B,Z-B)
    =
    (S_P\cap T_P, T_P,Z-B).
  \end{multline*}
  The proof is complete.
\end{proof}

\subsection{HDAs vs.\ iHDAs}

HDAs and iHDAs are related via a pair of adjoint functors:
resolution which maps an HDA $X$ to an iHDA $\Res(X)$
by adjoining all possible assignments of interfaces,
and its left adjoint closure,
which maps an iHDA $X$ to an HDA $\Cl(X)$ by filling in missing faces.
These are introduced in \cite{Kleenearxiv} and have the important property that they preserve languages.
We define them below and develop some lemmas.

The \emph{resolution} of an HDA $X$
is the iHDA $\Res(X)$ defined as follows.
For $\ilo{S}{U}{T}\in\isq$, $A\subseteq U-S$ and $B\subseteq U-T$ we put
\begin{gather*}
  \Res(X)[\ilo{S}{U}{T}]
  =
  \{(x;S,T)\mid x\in X[U]\},
  \\
  \delta^0_{A}((x;S,T))
  =
  (\delta^0_{A}(x); S, T-A),
  \qquad
  \delta^1_{B}((x;S,T))
  =
  (\delta^1_{B}(x); S-B, T).
\end{gather*}
A cell
$(x;S,T)\in\Res(X)[\ilo{S}{U}{T}]$ is a start cell if $x\in X_\bot$
and $S=U$, and an accept cell if $x\in X^\top$ and $T=U$.
Every cell $x\in X[U]$ thus produces
$4^{|U|}$ cells in $\Res(X)$, hence if $X$ is finite, then so is $\Res(X)$.

\begin{example}
  For the precubical set $X$ with $X[a]=\{x\}$ and $X[\emptyset]=\{v, w\}$ we have
  \begin{equation*}
    \Res\left(
      \vcenter{\hbox{
          \begin{tikzpicture}[x=3cm, y=1.4cm]
            \node[state] (0) at (0,0) {$v$};
            \node[state] (1) at (1,1) {$w$};
            \path (0) edge node[above=-0.1mm] {$x$} (1);
          \end{tikzpicture}
        }}
    \right)
    \quad = \quad
    \vcenter{\hbox{
        \begin{tikzpicture}[x=3cm, y=1.4cm]
          \node[state,minimum size=10pt] (0) at (0,0) {};
          \node[state,minimum size=10pt] (1) at (1,1) {};
          \node[state,color=gray,fill=white,dashed,minimum size=10pt] (0a) at (0,1) {};
          \node[state,color=gray,fill=white,dashed,minimum size=10pt] (1a) at (1,0) {};
          \node[state,color=gray,fill=white,dashed,minimum size=10pt] (0b) at (0,-1) {};
          \node[state,color=gray,fill=white,dashed,minimum size=10pt] (1b) at (1,-1) {};
          \path (0) edge node[above left=-1.6mm] {\scriptsize{$(x;\emptyset,\emptyset)$}} (1);
          \path (0a) edge node[above=-0.4mm] {\scriptsize{$(x;a,\emptyset)$}}  (1);
          \path (0) edge  node[above=-0.4mm] {\scriptsize{$(x;\emptyset,a)$}}   (1a);
          \path (0b) edge  node[above=-0.4mm] {\scriptsize{$(x;a,a)$}}   (1b);
          \node at (0) [below left=-0.4mm] {\scriptsize{$(v;\emptyset,\emptyset)$}};
          \node at (1) [above right=-0.4mm] {\scriptsize{$(w;\emptyset,\emptyset)$}};
        \end{tikzpicture}	
      }}
  \end{equation*}
\end{example}

If $((x_0;S_0,T_0),\phi_1,(x_1;S_1,T_1),\phi_2,\dotsc,\phi_n,(x_n;S_n,T_n))$
is an accepting path in $\Res(X)$, then
$(x_0,\phi_1,x_1,\phi_2\dots,\phi_n,x_n)$ is an accepting path in
$X$ with the same event ipomset.
Conversely, for every accepting path
$\alpha=(x_0,\phi_1,\dots,x_n)$ in $X$
there exists unique subsets $S_k,T_k\subseteq \ev(x_k)$
such that  $((x_0;S_0,T_0),\phi_1,\dotsc,(x_n;S_n,T_n))$
is an accepting path in $\Res(X)$.
(Indeed, $S_0=\ev(x_0)$, $T_n=\ev(x_n)$,
$S_k$ and $\phi_k$ determine $S_{k+1}$,
$T_{k+1}$ and $\phi_k$ determine $T_k$).
As a consequence we obtain:

\begin{lemma}
  \label{l:ResEssential}
  Let $X$ be an HDA.
  If $(x;S,T)\in\Res(X)$ is essential, then $x\in X$ is essential.
\end{lemma}

\begin{lemma}[{\cite[Prop.~11.2]{Kleenearxiv}}]
  \label{l:ResPreservesAcceptingTracks}
  For any HDA $X$, $\Lang(\Res(X))=\Lang(X)$.
\end{lemma}

The \emph{closure} of an iHDA $X$ is the HDA $\Cl(X)$
defined, for all $U\in\sq$, by
\begin{itemize}
\item
	$\Cl(X)=\{
	[x;A,B]\mid x\in X,\;
	A\subseteq S_{\iev(x)},\;
	B\subseteq T_{\iev(x)},\;
	A\cap B=\emptyset
	\}$;
\item
	$\ev([x;A,B])=\ev(x)-(A\cup B)$ for $[x;A,B]\in \Cl(X)$;
\item
	 $\delta^0_{C}([x;A,B]) =
	  [\delta^0_{C-S_{\iev(x)}}(x); A\cup (C\cap S_{\iev(x)}), B]$
	  for  $C\subseteq \ev([x;A,B])$;
\item
	 $\delta^1_{D}([x;A,B]) =
	  [\delta^1_{D-T_{\iev(x)}}(x); A, B\cup (D\cap T_{\iev(x)})]$
	  for $D\subseteq \ev([x;A,B])$;
\item
	$\top_{\Cl(X)}=\{[x;\emptyset,\emptyset]\mid x\in \bot_X\}$,
	$\top_{\Cl(X)}=\{[x;\emptyset,\emptyset]\mid x\in \top_X\}$.
\end{itemize}
Intuitively, closure fills in the missing cells of the iHDA $X$.
Lower face maps $\delta^0_C$ of $\Cl(X)$ take as much of
the face map of $X$ as possible, while the remaining events are added to the set $A$;
similarly for upper faces.

\begin{lemma}
  \label{l:PathsvsiPaths}
  Let $X$ be an iHDA and $x,y\in X$.
  The function $\Phi:\Path(X)_x^y\to \Path(\Cl(X))_{[x;\emptyset,\emptyset]}^{[y;\emptyset,\emptyset]}$,
  \[
    \Phi(x_0,\varphi_1,x_1,\dotsc,x_n)
    =
    ([x_0;\emptyset,\emptyset],\varphi_1,[x_1;\emptyset,\emptyset],\dotsc,[x_n;\emptyset,\emptyset])
  \]
  is a bijection. Moreover, $\ev(\alpha)=\ev(\Phi(\alpha))$ for all $\alpha$.
\end{lemma}

\begin{proof}
	Injectivity of $\Phi$ is clear.
	Let $\alpha =([x_0,A_0,B_0],\phi_1,\dotsc,\phi_n,[x_n,A_n, B_n])\in\Path(\Cl(X))_{[x;\emptyset,\emptyset]}^{[y;\emptyset,\emptyset]}$.
	For every step $([x_k;A_k,B_k],\varphi_k,[x_{k+1};A_{k+1},B_{k+1}])$
	it follows from the definition of face maps that $|A_k|\geq |A_{k+1}|$
	and $|B_k|\leq |B_{k+1}|$.
	Thus, $A_k\subseteq A_0=\emptyset$
	and $B_k\subseteq B_n=\emptyset$ for all $k$
	and then $\alpha=\Phi(x_0,\varphi_1,x_1,\dotsc,x_n)$.
	The second claim is obvious.
\end{proof}

\begin{lemma}[{\cite[Prop.~11.4]{Kleenearxiv}}]
  For any iHDA $X$, $\Lang(\Cl(X))=\Lang(X)$.
\end{lemma}

The following two lemmas are analogues to Lemmas \ref{l:SubsuPath} and \ref{l:PathDivision} for iHDAs.

\begin{lemma}
  \label{l:SubsuIPath}
  Let $X$ be an iHDA, $x,y\in X$, $\alpha\in\Path(X)_x^y$ and $P\subsu Q=\ev(\alpha)$.
  Then there exists $\beta\in\Path(X)_x^y$
  such that $\ev(\beta)=P$.
\end{lemma}

\begin{proof}
  This follows from Lemma \ref{l:SubsuPath} applied to $\Cl(X)$ and Lemma \ref{l:PathsvsiPaths}.
\end{proof}

\begin{lemma}
  \label{l:iPathDivision}
  Let $X$ be an iHDA, $x,y\in X$ and $\gamma\in\Path(X)_x^y$.
  Assume that $\ev(\gamma)=P*Q$ for ipomsets $P$ and $Q$.
  Then there exist paths $\alpha\in\Path(X)_x$ and $\beta\in\Path(X)^y$
  such that $\ev(\alpha)=P$, $\ev(\beta)=Q$ and $\tgt(\alpha)=\src(\beta)$.
\end{lemma}

\begin{proof}
  We apply Lemma \ref{l:PathDivision}
  to the path $\Phi(\gamma)$
  and obtain that there are paths $\alpha'$ and $\beta'$ in $\Cl(X)$
  such that $\ev(\alpha')=P$, $\ev(\beta')=Q$ and $\tgt(\alpha')=\src(\beta')$.
  By Lemma \ref{l:PathsvsiPaths}, $\alpha=\Phi^{-1}(\alpha')$ and $\beta=\Phi^{-1}(\beta')$
  satisfy the required conditions.
\end{proof}

\subsection{Essential iHDAs}

As for HDAs, we say that
a cell $x\in X$ of an iHDA $X$ is \emph{essential} if it accessible and coaccessible.
Let $\ess(X)\subseteq X$ be the set of essential cells.
We show below that, contrary to the situation for HDAs,
$\ess(X)$ is itself an iHDA.

Let $\dist(x,y)$ be the minimal length of a path from $x$ to $y$.
A cell $y$ is accessible if $\dist(x,y)<\infty$ for some $x\in \bot_X$
and is coaccessible if $\dist(y,z)<\infty$ for some $z\in\top_X$.
The following follows directly.

\begin{lemma}
  \label{l:DistanceEstimations1}
  Let $y\in X[\ilo SUT]$, $A\subseteq U-S$ and $B\subseteq U-T$.
  \begin{itemize}
  \item For any $x\in X$, $\dist(x, y)\le \dist(x, \delta^0_A(y))+1$ and
    $\dist(x, \delta^1_B(y))\le \dist(x, y)+1$.
  \item For any $z\in X$, $\dist(y, z)\le \dist(\delta^1_B(y), z)+1$ and
    $\dist(\delta^0_A(y), z)\le \dist(y, z)+1$.
  \end{itemize}
\end{lemma}

The next lemma only holds because of the special properties of start and accept cells in iHDAs.

\begin{lemma}
  \label{l:DistanceEstimations2}
  Let $y\in X[\ilo SUT]$, $A\subseteq U-S$ and $B\subseteq U-T$.
  \begin{itemize}
  \item For every $x\in \bot_X$, $\dist(x, \delta^0_A(y))\le \dist(x, y)$.
  \item For every $z\in \top_X$, $\dist(y, z)\ge \dist(\delta^1_B(y), z)$.
  \end{itemize}
\end{lemma}

\begin{proof}
  We only show the first inequality; the second is symmetric.
  We fix $x$ and proceed by
  induction on cells $y$ with respect to $\dist(x,y)$.
  If $\dist(x,y)=0$, then $y=x$ is a start cell.
  Thus, $S=U$, $A=\emptyset$ and $\delta^0_A(y)=y$.

  \medskip
  Now let $n=\dist(x,y)>0$.
  Without loss of generality we may assume that $A=\{a\}$.
  Let $\alpha$ be a path from $x$ to $y$ of length $n$,
  and let $\alpha=\beta*\gamma$ be a decomposition with $\gamma$ having length $1$.
  Clearly, $\dist(x,\src(\gamma))=n-1$.
  Consider three cases:
  \begin{itemize}
  \item
    $\gamma=(\delta^0_B(y)\arrO{B}y)$ and $a\in B$.
    Then $\beta*(\delta^0_B(y)\arrO{B-a}\delta^0_a(y))$
    has length $n$.
  \item
    $\gamma=(\delta^0_B(y)\arrO{B}y)$ and $a\not\in B$.
    Then $\dist(x,\delta^0_a(\delta^0_B(y)))=
    \dist(x, \delta_B^0(\delta^0_a(y)))\le n-1$
    by induction, and then by Lemma~\ref{l:DistanceEstimations1},
    $\dist(x,\delta^0_a(y))\le \dist(x, \delta_B^0(\delta^0_a(y)))+1\le n$.
  \item
    $\gamma=(z\arrI{B}y)$.
    Then $y=\delta^1_B(z)$, and
    $\dist(x,\delta^0_a(z))\le \dist(x,z)=n-1$ by induction.
    By Lemma~\ref{l:DistanceEstimations1},
    $\dist(x,\delta^0_a(y))
    =
    \dist(x,\delta^1_B(\delta^0_a(z)))
    \le
    \dist(x,\delta^0_a(z))+1\le n$.
  \end{itemize}	

  \vspace*{-6mm}
\end{proof}

\begin{proposition}
  For every iHDA $X$, $\ess(X)\subseteq X$ is an iHDA.
\end{proposition}

\begin{proof}
  Let $y\in X[\ilo SUT]$ be essential.
  We show that all faces of $y$ are also essential.
  There exist $x\in\bot_X$ and $z\in \top_X$
  such that $\dist(x,y),\dist(y,z)<\infty$.
  By Lemmas~\ref{l:DistanceEstimations1} and~\ref{l:DistanceEstimations2},
  $\dist(x,\delta^0_A(y)), \linebreak \dist(x,\delta^1_B(y))<\infty$
  and $\dist(\delta^0_A(y),z), \dist(\delta^1_B(y),z)<\infty$
  for all $A\subseteq U-S$, $B\subseteq U-T$ as well.
  Thus, all faces of $y$ are essential, which concludes the proof.
\end{proof}

\section{Myhill-Nerode construction for iHDAs}
\label{se:imn}

We now develop a Myhill-Nerode construction
which for a given regular language $L$ constructs an iHDA $\iMN(L)$.
Our construction proceeds in several steps.
First we construct a universal iHDA $\iFree$ which recognises all ipomsets,
then we restrict $\iFree$ depending on the given language,
and finally we quotient this $\iFree(L)$ by an equivalence relation
which preserves its language
and ensures that the quotient is finite if $L$ is regular.

\subsection{$\iFree$}

The universal iHDA $\iFree$ is defined as follows:
\begin{gather*}
  \iFree=\{(P,Z)\mid P\in\iiPoms,\; Z\subseteq T_P\}, \qquad \iev(P,Z)=(S_P\cap T_P, T_P, Z), \\
  \delta^0_A(P,Z)=(P-A,Z-A) \text{ for } (P,Z)\in\iFree[\ilo S U T], A\subseteq U-S\cong T_P-S_P, \\
  \delta^1_B(P,Z)=(P*\terminator{T_P}{B},Z) \text{ for } (P,Z)\in\iFree[\ilo S U T], B\subseteq U-T\cong T_P-Z, \\
  \bot_\iFree=\{(\id_U,T)\mid T\subseteq U\in\sq\}, \qquad \top_\iFree=\{(P,T_P)\mid P\in\iiPoms\}.
\end{gather*}
It is clear that $\iFree$ is well-defined;
the precubical identities follow easily
from Lemmas \ref{l:TerminatorComp} and \ref{l:Comm}.
We need some lemmas about existence and uniqueness
(up to subsumption) of paths in $\iFree$.

\begin{lemma}
  \label{l:QtoQP}
  Let $P$ and $Q$ be ipomsets such that $T_Q\cong S_P$,
  let $Z\subseteq T_P$, and $Y=S_P\cap Z\cong T_Q\cap Z\subseteq T_Q$.
  There exists a path $\alpha\in\Path(\iFree)_{(Q,Y)}^{(QP,Z)}$
  with $\ev(\alpha)=P$.
\end{lemma}

\begin{proof}
  We use induction on a step decomposition of $P$.
  If $P=\id_{T_Q}$, then $Y=Z$, and $\alpha=((Q,Y))$ satisfies the required conditions.
  If $P=P'*\starter U A$, then $P'=P-A$ and by induction there exists
  $\beta\in \Path(\iFree)_{(P Q,Y)}^{(QP',Z-A)}$
  such that $\ev(\beta)=P'$. Thus,
  $
  \ev(\beta*((QP',Z-A)\arrO{A}(QP,Z))=P
  $.
  Finally, if $P=P'*\terminator{T_{P'}}{B}$, $T_P\cong T_{P'}-B$,
  then $\ev(\beta*((QP',Z)\arrI{B}(QP,Z))=P$.
\end{proof}

\begin{lemma}
  \label{l:PathsInIFree}
  For every $(P,Z)\in\iFree$ and $Y\subseteq S_P$ we have
  \[
    \big\{
    \ev(\alpha)
    \bigmid
    \alpha\in\Path(\iFree)_{(\id_{S_P},Y)}^{(P,Z)}
    \big\}
    =
    \begin{cases}
      \{P\}\down & \text{if $Y=S_P\cap Z$,}\\
      \emptyset & \text{otherwise.}
    \end{cases}
  \]
\end{lemma}
\begin{proof}
  ($\subseteq$).
  It is enough to show that for every path $\alpha\in\Path(\iFree)_{(\id_V,Y)}^{(P,Z)}$ we have
  $V\cong S_P$, $\ev(\alpha)\subsu P$ and $Y=S_P\cap Z$.
  The first statement is clear;
  the rest we prove by induction on the length of $\alpha$.
  If $\alpha$ is constant, then $P=\id_{S_P}$, $Y=Z$,
  and thus $\ev(\alpha)=\id_{S_P}=P$.
  If $\alpha=\beta*((P-A,Z-A)\arrO{A}(P,Z))$ is a concatenation with an upstep,
  then
  \[
    \ev(\alpha)
    =
    \ev(\beta)*\starter {T_P}A
    \overset{\text{ind.}}\subsu
    (P-A)*\starter {T_P}A
    \overset{\text{L.\@ \ref{l:MinExt}}} \subsu
    P
  \]
  and $Y\overset{\text{ind.}}=S_{(P-A)}\cap (Z-A)=S_P\cap Z$, since $A\cap S_P=\emptyset$.
  If $\alpha=\beta*((Q,Z)\arrI{B}(P,Z))$
  and $P\cong Q*\terminator{T_Q}{B}$, then
  \[
    \ev(\alpha)
    =
    \ev(\beta)*\terminator{T_Q}B
    \overset{\text{ind.}}\subsu
    Q*\terminator{T_Q}B	
    =
    P
  \]
  and $Y=S_Q\cap Z=S_P\cap Z$.
	
 \medskip \noindent
  ($\supseteq$).
 This follows from Lemma \ref{l:QtoQP} for $Q=\id_{S_P}$ and Lemma \ref{l:SubsuIPath}.
\end{proof}

\begin{corollary}
  $\Lang(\iFree)=\iiPoms$.
\end{corollary}

\begin{proof}
  For every ipomset $P$
  there is a path $\alpha\in\Path(\iFree)_{(\id_{S_P},S_P\cap T_P)}^{(P,T_P)}$
  such that $\ev(\alpha)=P$.
\end{proof}

\subsection{$\iFree(L)$}

Fix a language $L$;
we will restrict $\iFree$ to an iHDA that recognises $L$.
Let $\top_L=\{(P,T_P)\mid P\in L\}$ and define
\begin{equation*}
  \iFree(L) = \ess((\iFree, \bot_\iFree, \top_L)).
\end{equation*}
That is, we restrict accept cells of $\iFree$ to the ones that accept ipomsets in $L$
and then reduce to the essential part.

\begin{lemma}
  \label{l:PathsInIFreeL}
  $\Lang(\iFree(L))=L$.
\end{lemma}

\begin{proof}
  This follows from Lemma \ref{l:PathsInIFree}:
  \begin{equation*}
    \Lang(\iFree(L))
    =
    \bigcup_{P\in L}
    \big\{
    \ev(\alpha)
    \bigmid
    \alpha\in \Path(\iFree)_{(\id_{S_P},S_P\cap T_P)}^{(P,T_P)}
    \big\}
    =
    \bigcup_{P\in L}\{P\}\down
    =
    L.
  \end{equation*}

  \vspace*{-7mm}
\end{proof}

We provide a description of $\iFree(L)$
in terms of quotient languages.
For an ipomset $P$
and $Z\subseteq T_P$
define the partial quotient language by
\begin{align*}
  P\lquo[Z] L
  & =
  \{Q\in \iiPoms
  \mid
  P Q\in L,
  S_Q\cap T_Q=Z\}
  =
  \{Q\in P\lquo L \mid
  S_Q\cap T_Q=Z\}.
\end{align*}
In other words, $P\lquo[Z] L$ consists of all ``continuations'' of $P$
that do not terminate events of $Z$
(and terminate all other target events of $P$).
Obviously, $P\lquo L = \bigsqcup_{Z\subseteq T_P} P\lquo[Z]  L$.

\begin{example}
  \label{ex:NonResLang}
  Let $L=\{\loset{a& \\b& \!\!\ibullet\!}\}{\downarrow}\cup \{a b\}=\{\loset{a& \\b& \!\!\ibullet\!}, a b\ibullet, a b\}$.
  Then $P\lquo[\emptyset]L=P\lquo L$ whenever $T_P=\emptyset$,
  and
  \begin{alignat*}{2}	
    a\ibullet \lquo[\emptyset]L
    &=\{\loset{\!\ibullet\!\! & a& \\&b& \!\!\ibullet\!}\}{\downarrow}\cup \{\ibullet ab\},
    &
    \qquad
    a\ibullet \lquo[a]L
    &=\emptyset,	
    \\
    b\ibullet \lquo[\emptyset]L
    &=\emptyset,
    &
    \qquad
    b\ibullet \lquo[b]L
    &=\{\loset{& a& \\ \ibullet\!\! &b& \!\!\ibullet}\},
    \\
    ab\ibullet \lquo[\emptyset]L
    &=\{\ibullet b\},
    &
    \qquad
    ab\ibullet \lquo[b]L
    &=\{\ibullet b \ibullet\},
  \end{alignat*}
  $\loset{a&\!\!\ibullet \\ b&\!\!\ibullet}\lquo[b] L
  =\{\loset{\ibullet \!\! &a& \\\ibullet \!\!& b&\!\!\ibullet}\}$,
  and
  $\loset{a&\!\!\ibullet \\ b&\!\!\ibullet}\lquo[\emptyset] L
  =
  \loset{a&\!\!\ibullet \\ b&\!\!\ibullet}\lquo[a] L
  =
  \loset{a&\!\!\ibullet \\ b&\!\!\ibullet}\lquo[a, b] L
  =\emptyset$.	
\end{example}

\begin{lemma}
  A cell $(P,Z)\in\iFree$ belongs to $\iFree(L)$ if and only if $P\lquo[Z] L\ne \emptyset$.
\end{lemma}

\begin{proof}
  By construction, $(P,Z)\in\iFree$ is accessible.
  We show that $(P, Z)$ is coaccessible if and only if $P\lquo[Z] L\ne \emptyset$.
  If $P\lquo[Z] L\ne \emptyset$, then there is $Q$ such that $P Q\in L$ and $S_Q\cap T_Q=Z$.
	By Lemma \ref{l:QtoQP}, there exists a path from $(P,Z)$ to $(P Q,T_Q)$,
	showing that $(P, Z)$ is coaccessible.

   \medskip
  If $(P, Z)$ is coaccessible, then there is a path $\beta$ in $\iFree$ with $\src(\beta)=(P, Z)$ and $\tgt(\beta)\in \top_L$.
  Let $Q=\ev(\beta)$.
  We also have a path $\alpha$ from $\bot_\iFree$ to $(P, Z)$.
  The concatenation $\alpha*\beta$ is a path in $\iFree(L)$ with $\ev(\alpha*\beta)=P Q$.
  Hence $P Q\in L$.
  Further,
  $T_Q=\ev(\tgt(\alpha*\beta))=Z$,
  since $\tgt(\alpha*\beta)$ is an accept cell.
  That is, $Q\in P\lquo[Z] L$.
\end{proof}

\begin{example}
  \label{ex:inondet}
  Let $L=\{\loset{a\\b}, a b c\}\down=\{\loset{a\\b}, a b, b a, a b c\}$
  be the language of Example \ref{ex:nondet}.
  We construct $\iFree(L)$.
  First, note that $\iFree(L)[\ilo{S}{U}{T}]=\emptyset$ if $S\ne \emptyset$,
  given that $P\lquo L=\emptyset$ if $S_P\ne \emptyset$.
  Similarly, $\iFree(L)[\ilo{S}{U}{T}]=\emptyset$ if $T\ne \emptyset$,
  as all ipomsets in $L$ have empty terminating interface.

  \medskip
  That is, $\iFree(L)[U]$ is only non-empty for conclists $U$ (without interfaces).
  For these,
  \begin{align*}
    \iFree(L)[\emptyset] &= \{(\epsilon, \emptyset), (a, \emptyset), (b, \emptyset),
    (a b, \emptyset), (b a, \emptyset), (\loset{a\\b}, \emptyset), (a b c, \emptyset)\}, \\
    \iFree(L)[a] &= \{(a\ibullet, \emptyset), (b a\ibullet, \emptyset), (\loset{a&\!\!\ibullet\!\\b}, \emptyset)\}, \\
    \iFree(L)[b] &= \{(b\ibullet, \emptyset), (a b\ibullet, \emptyset), (\loset{a\\b&\!\!\ibullet\!}, \emptyset)\}, \\
    \iFree(L)[c] &= \{(a b c\ibullet, \emptyset)\}, \\
    \iFree(L)[\loset{a\\b}] &= \{(\loset{a&\!\!\ibullet\!\\b&\!\!\ibullet\!}, \emptyset)\}.
  \end{align*}
  (Compare these with the cells of $\MN(L)$ in Figure \ref{fig:ex.nondet}.)
  Geometrically, $\iFree(L)$ looks as in Figure~9;
  note that it is an HDA in the sense of Remark \ref{r:ihdanoni}.
\end{example}

\begin{figure}[h]
\vspace*{-2mm}
  \centering
  \begin{tikzpicture}[x=1.2cm, y=1.7cm]
    \filldraw[color=black!10!white] (0,0)--(2,-1)--(4,0)--(2,1)--(0,0);
    \node[state] (eps) at (0,0) {$\epsilon$};
    \node[below left] at (eps) {$\bot$};
    \node[state] (a) at (2,-1) {$a$};
    \node[state] (b) at (2,1) {$b$};
    \node[state] (ba) at (4,1) {$b a$};
    \node[above right=.9mm] at (ba) {$\top$};
    \node[state] (a-b) at (4,0) {$\loset{a\\b}$};
    \node[above right=1.2mm] at (a-b) {$\top$};
    \node[state] (ab) at (4,-1) {$a b$};
    \node[above right=.9mm] at (ab) {$\top$};
    \node[state] (abc) at (6,-1) {$a b c$};
    \node[above right=1mm] at (abc) {$\top$};
    \path (eps) edge node[swap] {$a\ibullet$} (a);
    \path (eps) edge node {$b\ibullet$} (b);
    \path (a) edge node {$\loset{a\\b&\!\!\ibullet\!}$\!\!\!\!} (a-b);
    \path (b) edge node {$b a\ibullet$} (ba);
    \path (b) edge node[swap] {$\loset{a&\!\!\ibullet\!\\b}$\!\!\!} (a-b);
    \path (a) edge node[swap] {$a b\ibullet$} (ab);
    \path (ab) edge node[swap] {$a b c\ibullet$} (abc);
    \node at (1.6,0) {$\loset{a&\!\!\ibullet\!\\b&\!\!\ibullet\!}$};
  \end{tikzpicture}\vspace*{2mm}
  \\
  Figure 9: {\ensuremath{\iFree(L)}  for  \ensuremath{L=\{\loset{a\\b}, a b c\}\down}, see Example \ref{ex:inondet}}.\label{fig:9}\vspace*{-2mm}
  \end{figure}

\subsection{$\iMN(L)$}

The iHDA $\iFree(L)$ is infinite as soon as $L$ is.
(It contains at least one accept cell for every element of $L$.)
Analogously to the construction in Section \ref{se:MN(L)},
we introduce an equivalence relation depending on $L$.
Now however, the relation is not defined on ipomsets but directly on $\iFree(L)$.
In order for the quotient iHDA to be well-defined,
we will need our equivalence to be a \emph{congruence}
in the sense that faces of equivalent cells are again equivalent.

\medskip
We say that $(P,Z),(Q,Y)\in \iFree(L)[\ilo S U T]$ are
\emph{weakly equivalent}, and write $(P,Z)\Lsim(Q,Y)$,
if $P\lquo[Z] L=Q\lquo[Y] L$.
(This is the analogue of the relation $\Lsim$ of Section \ref{se:MN(L)}.)
This relation is not necessarily a congruence,
for example, if $L=\{\loset{a\\b}, aa\}\down$ (\cf~Example \ref{ex:strongeq}),
then $(aa\ibullet,\emptyset)\Lsim(ba\ibullet,\emptyset)$
but
\begin{equation*}
  \delta^0_a(aa\ibullet,\emptyset)
  =
  (a,\emptyset)
  \not\Lsim
  (b,\emptyset)
  =
  \delta^0_a(ba\ibullet,\emptyset).
\end{equation*}
Thus we introduce the maximal congruence contained in $\Lsim$.
Say that $(P,Z),(Q,Y)\in \iFree(L)[\ilo S U T]$ are
\emph{strongly equivalent},
denoted $(P,Z)\Lapprox(Q,Y)$, if
\begin{itemize}
\item
	$\delta^0_A(P,Z)\Lsim\delta^0_A(Q,Y)$ for all $A\subseteq U-S$, and
\item
	$\delta^1_B(P,Z)\Lsim\delta^1_B(Q,Y)$ for all $B\subseteq U-T$.
\end{itemize}
The first is equivalent to the condition
$(P-A)\lquo[Z-A] L=(Q-A)\lquo[Y-A] L$
for every $A\subseteq U-S$
(\cf~the conditions for $\Lapprox$ in Section \ref{se:MN(L)});
and the latter is always satisfied.
\medskip
It is obvious that every congruence contained in $\Lsim$
must be contained in $\Lapprox$ as well.
Below we show that $\Lapprox$
is indeed a congruence, hence the biggest congruence contained in $\Lsim$, and describe its quotient iHDA.

\begin{lemma}
  Let $(P,Z),(Q,Y)\in\iFree(L)[\ilo SUT]$. If $(P,Z)\Lapprox(Q,Y)$, then
  \begin{enumerate}
  \item
    $\delta^0_A(P,Z)\Lapprox\delta^0_A(Q,Y)$ for $A\subseteq U-S$;
  \item
    $\delta^1_B(P,Z)\Lapprox\delta^1_B(Q,Y)$ for $B\subseteq U-T$;
  \item
    $(P,Z)\in\bot_{\iFree(L)}\implies (P,Z)=(Q,Y)$,
  \item
    $(P,Z)\in\top_{\iFree(L)}\iff(Q,Y)\in\top_{\iFree(L)}$.
  \end{enumerate}
\end{lemma}

\begin{proof}
  1.\@
  We have $\delta^0_A(P,Z)=(P-A,Z-A), \delta^0_A(Q,Y)=(Q-A,Y-A)\in \iFree[(S, U-A, T-A)]$.
  For every $C\subseteq U-(S\cup A)$,
  \begin{align*}
    (P-A)-C\lquo[(Z-A)-C]L
    &=
    P-(A\cup C)\lquo[Z-(A\cup C)]L
    \\
    &\Lapprox
    Q-(A\cup C)\lquo[Y-(A\cup C)]L
    =
    (Q-A)-C\lquo[(Y-A)-C]L.
  \end{align*}	
  2.\@
  We have $\delta^1_B(P,Z)=(P*\terminator U B,Z), \delta^1_B(Q,Y)=(Q*\terminator U B,Y)\in \iFree[(S-B, U-B, T)]$.
  For every $C\subseteq U-(S\cup B)$,
  \begin{align*}
    (P,Z)\Lapprox (Q,Z)
    &\implies
    (P-C)\lquo[Z-C]L=(Q-C)\lquo[Y-C]L
    \\
    &
    \overset{\text{L.\@ \ref{l:ExtQ}}}\implies
    (P-C)*\terminator{(U-C)}{B}\lquo[Z-C]L=(Q-C)*\terminator{(U-C)}{B}\lquo[Y-C]L	
    \\
    &
    \overset{\text{L.\@  \ref{l:TerminatorComp}}}\iff
    (P*\terminator U B-C)\lquo[Z-C]L=(Q*\terminator U B-C)\lquo[Y-C]L.	
  \end{align*}
  3.\@
  If $(P,Z)\in\bot_{\iFree(L)}$, then
  $P=\id_U$ and $Z=U=T$: the only start cell in $\iFree(L)[\ilo SUT]$.

  \noindent	
  4.\@ If $(P,Z)\in\top_{\iFree(L)}$, then $Z=T_P=T$ and $\id_{T}\in P\lquo[Z]L$.
  Since $(P,Z)\Lapprox(Q,Y)$, we have $\id_{T_Q}\cong \id_{T}\in Q\lquo[Z]L$,
  and $Z=Y=T_Q$. Thus, $(Y,Z)\in\top_{\iFree(L)}$.
\end{proof}

We may thus define the iHDA $\iMN(L)$ as the quotient of $\iFree(L)$ by $\Lapprox$:
\begin{gather*}
  \iMN(L)[\ilo S U T] = \iFree(L)[\ilo S U T]/{\Lapprox}, \\
  \bot_{\iMN(L)}=\{\eqcl{(P,Z)}\mid (P,Z)\in \bot_{\iFree(L)}\}, \qquad
  \top_{\iMN(L)}=\{\eqcl{(P,Z)}\mid (P,Z)\in \top_{\iFree(L)}\}.
\end{gather*}

\begin{remark}
  \label{re:iemptyi}
  If all ipomsets in $L$ have empty interfaces,
  then $\iMN(L)[\ilo S U T]=\emptyset$ unless $S=T=\emptyset$
  (\cf~Example \ref{ex:inondet}).
  Further, $\iMN(L)[\ilo \emptyset U \emptyset]=\ess(\MN(L))[U]$,
  so both constructions effectively coincide.
  We will see below that this is not the case
  if $L$ contains ipomsets with non-empty interfaces.
\end{remark}

\subsection{Examples}

Cells of iHDAs $\iMN(L)$ correspond to equivalence
classes of pairs $(P,Z)$ for $Z\subseteq P$.
For greater clarity, in the examples below we label
a cell $\eqcl{(P,Z)}$ only by the ipomset $P$
but mark target events belonging to $Z$ by asterisks instead of bullets:
for example $(a\ibullet, \{a\})$ is written as $a\ast$
and $(a\ibullet, \emptyset)$ as $a\ibullet$.

\begin{example}
  Let $L=\{\ibullet a\ibullet\}\cup \{\ibullet a a^n a\ibullet\mid n\ge 0\}$.
  Then $\iMN(L)$ is the iHDA from Example \ref{ex:aiHDA}, and
  \begin{gather*}
    e_1=\ibullet a\ast=\eqcl{(\ibullet a\ibullet,a)},\qquad
    e_2=\ibullet a\ibullet=\eqcl{(\ibullet a\ibullet,\emptyset)},\qquad
    e_3=\{(a^na\ibullet,\emptyset)\mid n\ge 0\},\\
    e_4=\{(a^na\ibullet,a)\mid n\ge 0\},\qquad
    x=\{(\ibullet aa^n,\emptyset)\mid n\ge 0\}.
  \end{gather*}
\end{example}

\begin{example}
  \label{ex:NonResolution}
  For the language $L=\{\loset{a& \\b& \!\!\ibullet\!}\}{\downarrow}\cup \{a b\}=\{\loset{a& \\b& \!\!\ibullet\!}, a b\ibullet, a b\}$
  of Example \ref{ex:NonResLang},
  $\iMN(L)$ and $\MN(L)$ are as follows:
  \[
\scalebox{1.08}{
     \begin{tikzpicture}[x=1cm, y=.9cm]
      \filldraw[color=black!10!white] (0,0)--(2,0)--(2,2)--(0,2)--(0,0);
      \node[state] (eps) at (0,0) {$\epsilon$};
      \node at (-.3,-.2) {$\bot$};
      \node[state,minimum size=10pt] (a) at (2,0) {$a$};
      \node[state,color=gray,fill=white,dashed,minimum size=10pt] (b) at (0,2) {};
      \node[state,color=gray,fill=white,dashed,minimum size=10pt] (ab) at (2,2) {};
      \node[state] (aa) at (4,0) {$ab$};
      \path (eps) edge node[swap] {$a\ibullet$} (a);
      \path (eps) edge node {$b\ast$} (b);
      \path (a) edge node[swap] {$ab\ibullet$} (aa);
      \path (b) edge[color=gray,dashed]  (ab);
      \path (a) edge node[swap] {${\loset{ a& \\ b&\!\!\ast\!}}\Lapprox ab\ast$} (ab);
      \node (m) at (1,1) {${\loset{ a&\!\!\ibullet\! \\ b&\!\!\ast\!}}$};
      \node at (2.8,1.3) {$\top$};
      \node[above right=0.1] at (aa) {$\top$};
      \node at (0,3) {$\iMN(L):$};
      \begin{scope}[shift={(6.8,0)}]
        \filldraw[color=black!10!white] (0,0)--(2,0)--(2,2)--(0,2)--(0,0);
        \node[state] (eps) at (0,0) {$\epsilon$};
        \node at (-.3,-.2) {$\bot$};
        \node[state,minimum size=10pt] (a) at (2,0) {$a$};
        \node[state,minimum size=10pt] (b) at (0,2) {$y_\emptyset$};
        \node[state,minimum size=10pt] (ab) at (2,2) {$y_{\emptyset}$};
        \node[state] (aa) at (4,0) {$ab$};
        \path (eps) edge node[swap] {$a\ibullet$} (a);
        \path (eps) edge node {$b\ibullet$} (b);
        \path (a) edge node[swap] {$ab\ibullet$} (aa);
        \path (b) edge node {$y_{a\ibullet}$}  (ab);
        \path (a) edge node[swap] {${\loset{ a& \\ b&\!\!\bullet\!}}$} (ab);
        \node (m) at (1,1) {${\loset{ a&\!\!\ibullet\! \\ b&\!\!\bullet\!}}$};
        \node at (2.8,1.3) {$\top$};
        \node at (3.1,0.2) {$\top$};
        \node[above right=0.1] at (aa) {$\top$};
        \node at (0,3) {$\MN(L):$};
      \end{scope}
    \end{tikzpicture} }\vspace*{-2mm}
  \]
  Note that
  $\loset{ a& \\ b&\!\!\ast\!}
  =(\loset{ a& \\ b&\!\!\ibullet\!}, \{b\})
  \Lapprox
  (ab\ibullet, \{b\})
  =ab\ast$,
  but $\loset{ a& \\ b&\!\!\ibullet\!}\not\Lapprox ab\ibullet$:
  the relation $\Lapprox$ on cells of $\iFree(L)$ is finer than
  the strong equivalence $\Lapprox$ on ipomsets used in the construction of $\MN(L)$ in Section \ref{se:MN(L)}.
\end{example}

\begin{example}
  Let $L=\{ab, \loset{a&\!\!\ibullet\!\\ b&\!\!\ibullet\!}\}$.
  Then $\iMN(L)$ is the same as $\iFree(L)$ and looks as follows:
  \[
  \scalebox{1.08}{
     \begin{tikzpicture}[x=1cm, y=.84cm]
      \filldraw[color=black!10!white] (0,0)--(2,0)--(2,2)--(0,2)--(0,0);
      \node[state] (eps) at (0,0) {$\epsilon$};
      \node at (-.3,-.2) {$\bot$};
	  \node[state,color=gray,fill=white,dashed,minimum size=10pt] (a) at (2,0) {};
	  \node[state,color=gray,fill=white,dashed,minimum size=10pt] (b) at (0,2) {};
	  \node[state,color=gray,fill=white,dashed,minimum size=10pt] (ab) at (2,2) {};
      \node[state] (c) at (1,-1.5) {$a$};
      \node[state] (d) at (3,-1.5) {$ab$};
      \node[above right=0.1] at (d) {$\top$};
      \path (eps) edge node[swap] {$a\ast$} (a);
      \path (eps) edge node {$b\ast$} (b);
      \path (b) edge[color=gray,dashed]  (ab);
      \path (a) edge[color=gray,dashed] (ab);
      \node (m) at (1,1) {${\loset{ a&\!\!\ast \\ b&\!\!\ast}}$};
      \node at (1.4,1.2) {$\top$};
      \path (eps) edge node[swap] {$a\ibullet$} (c);
      \path (c) edge  node[swap] {$ab\ibullet$} (d);
    \end{tikzpicture} } \vspace*{-3mm}
  \] 	
\end{example}

\begin{figure}[tbp]
  \centering
 \scalebox{1.08}{
   \begin{tikzpicture}[x=1.5cm, y=1.1cm]
    \filldraw[color=black!10!white] (0,0)--(4,0)--(4,2)--(6,2)--(6,4)--(2,4)--(2,2)--(0,2)--(0,0);
    \node[state,color=gray,fill=white,dashed,minimum size=10pt] (eps) at (0,0) {};
    \node[state] (a) at (2,0) {${\ibullet a}$};
    \node[state,color=gray,fill=white,dashed,minimum size=10pt] (b) at (0,2) {};
    \node[state] (ab) at (2,2) {${\loset{ \!\ibullet\!\!& a \\ & b }}$};
    \node[state] (aa) at (4,0) {${\ibullet aa}$};
    \node[state] (aab) at (4,2) {${\ibullet a}$};
    \node[state] (abb) at (2,4) {${\loset{\!\ibullet\!\!& a\\ &b} b}$};
    \node[state] (aabb) at (4,4) {${\loset{ \!\ibullet\!\!& a \\ &b }}$};
    \node[state,color=gray,fill=white,dashed,minimum size=10pt] (aaab) at (6,2) {};
    \node[state,color=gray,fill=white,dashed,minimum size=10pt] (aaabb) at (6,4) {};
    \path (eps) edge node[swap] {${\ibullet a\ibullet}$} (a);
    \node[above] at (1,0) {$\bot$};
    \path (eps) edge[color=gray,dashed]  (b);
    \path (a) edge node[swap] {${\ibullet aa\ibullet}$} (aa);
    \path (b) edge node {${\loset{ \!\ibullet\!\!& a&\!\!\ibullet\! \\ &b }}$} (ab);
    \path (a) edge node[swap] {$e$} (ab);
    \path (2,.4) edge[very thick,color=blue,-] (2,1.2);
    \path (ab) edge node {${\loset{ \!\ibullet\!\!& aa&\!\!\ibullet\! \\ &b }}$} (aab);
    \path (abb) edge node {${\loset{ \!\ibullet\!\!& aa&\!\!\ibullet\! \\ &bb }}$} (aabb);
    \path (aaab) edge[color=gray,dashed]  (aaabb);
    \node[below] at (5,4) {$\top$};
    \path (aa) edge node[swap] {${\loset{ \!\ibullet\!\!& aa&\!\!\ibullet\! \\ &b&\!\!\ibullet\!}}$} (aab);
    \path (aab) edge node[swap] {$e$} (aabb);
    \path (4,2.4) edge[very thick,color=blue,-] (4,3.2);
    \path (ab) edge node {$ {\loset{\!\ibullet\!\!& a \\ &b} b\ibullet}$} (abb);
    \path (aab) edge node {$\ibullet aa\ast$} (aaab);
    \path (aabb) edge node {${\loset{ \!\ibullet\!\!& aa&\!\!\ast \\ &b }}$} (aaabb);
    \node at (1,1) {${\loset{ \!\ibullet\!\!& a&\!\!\ibullet\! \\ &b&\!\!\ibullet\!}}$};
    \node at (3,1) {${\loset{ \!\ibullet\!\!& aa&\!\!\ibullet\! \\ &b&\!\!\ibullet\!}}$};
    \node at (3,3) {${\loset{ \!\ibullet\!\!& aa&\!\!\ibullet\! \\ &bb&\!\!\ibullet\!}}$};
    \node at (5,3) {${\loset{ \!\ibullet\!\!& a&\!\!\ast \\ &b&\!\!\ibullet\!}}$};
    \node[state,color=gray,fill=white,dashed,minimum size=10pt] (aa) at (-1.3,4) {};
    \node[state,color=gray,fill=white,dashed,minimum size=10pt] (zz) at (.7,4) {};
    \draw[-](aa) edge[->] node[below] {$\bot$} node[above]{$\top\rlap{$\ibullet a\ast$}$} (zz);
  \end{tikzpicture}  } \vspace*{2mm}
  \\
  {Figure 10: $\iMN(L)$  for  $L = \{\ibullet a\ibullet\} \cup \{\loset{\!\ibullet\!\!& aa&\!\!\ibullet\! \\ &b}^n\mid n\ge 1\}\down$, %
  see Example \ref{exa:ihda-loop}.}\label{fig:ihda-loop}
\end{figure}

\begin{example}
  \label{exa:ihda-loop}
  Let $L = \{\ibullet a\ibullet\} \cup \{\loset{\!\ibullet\!\!& aa&\!\!\ibullet\! \\ &b}^n\mid n\ge 1\}\down$
  be the language of Example \ref{exa:hda-loop},
  then $\iMN(L)$ is displayed in Figure~10.
  Blue arrows marked $e$ are identified,
  as well as their corresponding endpoints.
  We have $e=\eqcl{\loset{ \!\ibullet\!\!& a& \\ &b&\!\!\ibullet\!}}=\eqcl{ab\ibullet}$
  and $\ev(e)=\ilo \emptyset b \emptyset$.
\end{example}

\subsection{$\iMN(L)$ recognises $L$}

For a cell $x\in\iMN(L)$ denote $x\lquo L:= P\lquo[Z]  L$ for any $(P,Z)\in x$.
This clearly does not depend on the choice of a representative.

\begin{lemma}
  \label{l:LanguagesOfLowerFaces}
  Assume that $x\in\iMN(L)[\ilo SUT]$, $A\subseteq U-S$, $B\subseteq U-T$. Then
  \begin{enumerate}
  \item
    $Q\in x\lquo L\implies \starter U A*Q\in \delta^0_A(x) \lquo L$,
  \item
    $\terminator U B*Q\in x\lquo L\iff Q\in \delta^1_B(x) \lquo L$.
  \end{enumerate}
\end{lemma}

\begin{proof}
  Fix $(P,Z)\in x$.
  Recall that
  $(S_Q, S_Q, S_Q\cap T_Q)\cong \ilo U U Z$
  for every $Q\in x\lquo L$.
  For the first part,
  \begin{align*}
    Q\in x\lquo L
    \iff Q\in P\lquo[Z]  L
    & \iff P Q\in L \\
    &\implies
    (P-A)* \starter U A * Q\in L
    \qquad\text{(Lem.\@ \ref{l:QuotIncl} \& \ref{l:MinExt})}
    \\
    &\iff \starter U A * Q\in (P-A)\lquo[Z-A] L \\
    &\iff \starter U A * Q\in \delta^0_A(x)\lquo L.
  \end{align*}
\indent  For the second part of the lemma,
  \begin{align*}
    Q\in \delta^1_B(x)\lquo L
    \iff
    Q\in (P*\terminator U B)\lquo[Z]  L
    &\iff
    P*\terminator U B *Q\in L
    \\
    &\iff
    \terminator U B *Q\in P\lquo[Z]  L
    \iff
    \terminator U B *Q\in x\lquo L.
  \end{align*}

  \vspace*{-7mm}
\end{proof}

\begin{lemma}
  \label{l:SuffixLangInclusion}
  If $\alpha\in \Path(\iMN(L))_\bot$ and $\tgt(\alpha)\in\iMN(L)[\ilo SUT]$,
  then $\tgt(\alpha)\lquo L\subseteq  \ev(\alpha)\lquo[T] L$.
\end{lemma}

\begin{proof}
  Induction on the length of the path $\alpha$.
  If $\alpha=(x)$
  for $x=\eqcl{(\id_U, T)}$,
  then
  \[
    x\lquo L=(\id_U)\lquo[T] L=\ev(\alpha)\lquo[T]L.
  \]
  If $\alpha=\beta*(\delta^0_A(x)\arrO{A}x)$ for
  $x\in\iMN(L)[\ilo SUT]$ and $A\subseteq U-S$, then
  $\ev(\alpha)=\ev(\beta)*\starter U A$ and
  \[
    Q\in x\lquo L
    \overset{\text{Lem.\@ \ref{l:LanguagesOfLowerFaces}}}\implies
    \starter U A * Q\in \delta^0_A(x)\lquo L
    \overset{\text{ind.}}\implies
    \starter U A * Q\in \ev(\beta)\lquo[T-A] L
    \iff
    Q\in \ev(\alpha)\lquo[T] L.
  \]
  If $\alpha=\beta*(x\arrI{B}\delta^1_B(x))$, then
  $\ev(\alpha)=\ev(\beta)*\terminator U B$ and
  \[
    Q\in \delta^1_B(x)\lquo L
    \overset{\text{Lem.\@ \ref{l:LanguagesOfLowerFaces}}}\iff
    \terminator U B*Q\in x\lquo L
    \overset{\text{ind.}}\implies \!
    \terminator \! U B*Q\in \ev(\beta)\lquo[T] L
    \iff
    Q\in \ev(\alpha)\lquo[T]L.
  \]

  \vspace*{-7mm}
\end{proof}

\begin{example}
  \label{ex:itwopaths}
  Let $L = \{\loset{a\\b}\}\down \cup \{baa,cda,cdaa\}$,
  then $\iMN(L)$ is as follows:
  \[
   \scalebox{1.08}{
    \begin{tikzpicture}[x=1.2cm, y=1cm]
      \filldraw[color=black!10!white] (0,0)--(2,0)--(2,2)--(0,2)--(0,0);
      \node[state] (eps) at (0,0) {${\epsilon}$};
      \node[state] (a) at (0,2) {$a$};
      \node[state] (b) at (2,0) {$b$};
      \node[state] (ab) at (2,2) {$\loset{a\\b}$};
      \node[state] (c) at (1,-1.5) {$c$};
      \node[state] (ba) at (3.5,1) {$ba$};
      \path (eps) edge node {$a\ibullet$} (a);
      \path (eps) edge node[swap] {$b\ibullet$} (b);
      \path (b) edge node[swap] {\!${\loset{a&\!\!\ibullet\! \\ b}}$} (ab);
      \path (a) edge node {${\loset{a& \\ b&\!\!\ibullet\!}}$} (ab);
      \path (eps) edge node[left, pos=.65] {$c\ibullet$\,} (c);
      \path (c) edge node[right, pos=.4] {$cd\ibullet$} (b);
      \path (b) edge node[swap] {$ba\ibullet$} (ba);
      \path (ba) edge node[swap] {$baa\ibullet$} (ab);
      \node at (1,1) {${\loset{a&\!\!\ibullet\! \\ b&\!\!\ibullet\!}}$};
      \node at (2.35,2.35) {$\top$};
      \node at (3.85,1.35) {$\top$};
      \node at (-0.35,-0.35) {$\bot$};
    \end{tikzpicture} }
  \]
  Note that there are two paths recognising $cda$.
  One of them ends at $\loset{a\\b}$,
  yet there is no $P$ such that $cda\subsu P$ and $P\Lapprox \loset{a\\b}$.
  This explains why Lemma \ref{l:SuffixLangInclusion}
  cannot be strengthened.
\end{example}

\begin{lemma}
\label{l:iMNId}
  Let $x\in\iMN(L)[\ilo SUT]$. Then
  $x\in\top_{\iMN(L)}$ if and only if
  $\id_T\in x\lquo L$.
\end{lemma}

\begin{proof}
  We have $x\in\top_{\iMN(L)}$ if and only if
  there exists an ipomset $P\in L$ such that $x=\eqcl{(P,T_P)}$ and $T_P\cong U=T$.
  But $P\in L$ if and only if $\id_T\in P\lquo[T] L=x\lquo L$.
\end{proof}

\begin{proposition}
  $\Lang(\iMN(L))=L$.
\end{proposition}

\begin{proof}
  The quotient map $\iFree(L)\to\iMN(L)$ is an iHDA-map,
  hence it induces an inclusion of languages.
  By Lemma \ref{l:PathsInIFreeL}, $L=\Lang(\iFree(L))\subseteq \iMN(L)$.

\medskip
  For the other direction,
  let $\alpha$ be an accepting path in $\iMN(L)$.
  Since $\tgt(\alpha)\in\iMN(L)[\ilo SUU]$ is an accept cell,
  there exists $Q\in L$ such that $\tgt(\alpha)=\eqcl{(Q,T_Q)}$ with $T_Q\cong U$.
  Thus, by Lemma \ref{l:iMNId}, $\id_U\in Q\lquo[U] L=\tgt(\alpha)\lquo L$.
  By Lemma \ref{l:SuffixLangInclusion},
  $\id_{U}\in \tgt(\alpha)\lquo L\subseteq \ev(\alpha)\lquo[U] L$,
  which implies $\ev(\alpha)\in L$.
  This proves $\Lang(\iMN(L))\subseteq L$.
\end{proof}

\section{Determinism in iHDAs}
\label{se:idet}

The notion of determinism for iHDAs is different from the one for HDAs,
given that we do not have to restrict to essential cells.
Yet we will show that languages recognised by deterministic HDAs and deterministic iHDAs are the same.

\begin{definition}
  An iHDA $X$ is \emph{deterministic} if
  \begin{enumerate}
  \item
    for every $\ilo UUT\in\isq$ there is at most one start cell in $X[\ilo UUT]$, and
  \item
    for every $\ilo SUT\in \isq$, $A\subseteq U-S$ and $x\in X[(S, U-A, T-A)]$,
    there is at most one cell $y\in X[\ilo SUT]$
    such that $x=\delta^0_A(y)$.
  \end{enumerate}
\end{definition}

Compared to deterministic \emph{HDAs}, we now allow one start cell for every pair $U\supseteq T$
of source interface and target interface.
That is because the information of which events may be terminated in an accepting path
is already contained in the target interface of its source cell, \cf~Lemma~\ref{l:EventsofSrcTgt}.

\begin{lemma}
  \label{l:HDAiHDADet}
  If HDA $X$ is deterministic,
  then the iHDA $\ERes{X}$ is also deterministic.
\end{lemma}

\begin{proof}
  The first condition is clear.
  To prove the second,
  fix  $\ilo SUT\in\isq$, $A\subseteq U-S$
  and $(x;S,T-A)\in \ERes{X}[(S, U-A, T-A)]$.
  There is at most one essential $y\in X[U]$ such that $\delta^0_A(y)=x$.

  Let $(z;S,T)\in \ERes{X}[\ilo SUT]$
  such that $\delta^0_A(z;S,T)=(x,S,T-A)$.
  By definition, $\delta^0_A(z)=x$,
  and by Lemma \ref{l:ResEssential} we obtain that $z$ is essential;
  as a consequence, $y=z$.
\end{proof}

From Lemmas \ref{l:ResPreservesAcceptingTracks} and \ref{l:HDAiHDADet} we conclude:

\begin{corollary}
  \label{co:detdet}
  If $L$ is recognised by a deterministic HDA, then it is recognised by a deterministic iHDA.
\end{corollary}

The following two lemmas provide analogues to the unambiguity Lemma \ref{l:DetEquiPaths} for deterministic HDAs.

\begin{lemma}
  \label{le:detpath}
  Let $X$ be a deterministic iHDA and $\alpha,\beta\in\Path(X)_\bot^\top$.
  If $\ev(\alpha)=\ev(\beta)$,
  then $\alpha\simeq\beta$.
\end{lemma}

\begin{proof}
  Denote $P=\ev(\alpha)=\ev(\beta)$.
  Without loss of generality we may assume that $\alpha=\alpha_1*\dotsm*\alpha_n$ and $\beta=\beta_1*\dotsm*\beta_m$ are sparse.
  We show that $n=m$ and $\alpha_k=\beta_k$ for all $k$ which implies the claim.

\medskip
  Both $P=\ev(\alpha_1)*\dotsm*\ev(\alpha_n)$
  and $P=\ev(\beta_1)*\dotsm*\ev(\beta_m)$
  are sparse step decomposition of $P$.
  Proposition~\ref{p:SparsePresentation} implies
  that $m=n$ and $\ev(\alpha_k)=\ev(\beta_k)$ for every $k$.

  Denote $x_0=\src(\alpha)=\src(\alpha_1)$,
  $x_k=\tgt(\alpha_k)=\src(\alpha_{k+1})$,
  $x_n=\tgt(\alpha)=\tgt(\alpha_n)$
  and $y_0=\src(\beta)=\src(\beta_1)$,
  $y_k=\tgt(\beta_k)=\src(\beta_{k+1})$,
  $y_n=\tgt(\beta)=\tgt(\beta_n)$.
  Fix $k$ and denote $\iev(x_k)=(S,U,T)$.
  By Lemma \ref{l:EventsofSrcTgt}.1,
  $(S,U)$ is determined by $Q=\ev(\alpha_1*\dotsm*\alpha_k)$,
  and by Lemma \ref{l:EventsofSrcTgt}.2, $(U,T)$ is determined by $R=\ev(\alpha_{k+1}*\dotsm*\alpha_n)$.
  Similarly, $\iev(y_k)$ is determined by $\ev(\beta_1*\dotsm*\beta_k)$
  and $\ev(\beta_{k+1}*\dotsm*\beta_n)$.
  As a consequence, $\iev(x_k)=\iev(y_k)$ for all $k$.
	
  We show by induction that $x_k=y_k$.
  Since $X$ is deterministic, $\iev(x_0)=\iev(y_0)$
  implies $x_0=y_0$.
  For $k>0$ assume that $x_{k-1}=y_{k-1}$.
  If $\ev(\alpha_k)=\ev(\beta_k)=\starter U A$ is a starter,
  then conditions $\iev(x_k)=\iev(y_k)$ and $\delta^0_A(x_k)=\delta^0_A(y_k)$
  imply $x_k=y_k$ by determinism of $X$.
  If $\ev(\alpha_k)=\ev(\beta_k)=\terminator U B$ is a terminator,
  then $x_k=\delta^1_B(x_{k-1})=\delta^1_B(y_{k-1})=y_k$.
\end{proof}

\begin{lemma}
  \label{le:detp}
  Let $X$ be a deterministic iHDA and $\alpha,\beta\in\Path(X)_\bot$.
  If $\ev(\alpha)=\ev(\beta)$
  and $\iev(\tgt(\alpha))=\iev(\tgt(\beta))$,
  then $\alpha\simeq\beta$.
\end{lemma}

\begin{proof}
  Again, we assume that $\alpha$ and $\beta$ are sparse.
  Denote $x=\tgt(\alpha)$, $y=\tgt(\beta)$,
  $\ilo SUT=\iev(x)=\iev(y)$.
  Modify $X$ by adding accept cells
  \[
    x'=\delta^1_{U-T}(x),y'=\delta^1_{U-T}(y)\in X[\ilo SUU].
  \]
  The paths $\alpha'=\alpha*(x\arrI{U-T} x')$ and $\beta'=\beta*(y\arrI{U-T} y')$
  are accepting,
  and $\ev(\alpha')=\ev(\beta')=\ev(\alpha)*\terminator U{U-T}$.
  Let $\alpha''$ and $\beta''$ be sparse paths
  that are equivalent to $\alpha'$ and $\beta'$, respectively.
  By Lemma \ref{le:detpath}, $\alpha''=\beta''$
  and thus both $\alpha'$ and $\beta'$ are refinements of $\gamma:=\alpha''=\beta''$.
  If $U-T=\emptyset$, then $\alpha=\beta=\gamma$.
  Otherwise,
  decompose $\gamma=\gamma'*\omega$, where $\omega$ is the last step of $\gamma$.
  Then $\omega$ is a downstep $(z\arrI{B} x')$
  such that $U-T\subseteq B$
  and $\alpha=\beta=\gamma'*(z\arrI{B-T}\delta^1_{B-T}(z))$.
\end{proof}

\begin{lemma}
  \label{l:iDetSubPaths}
  Let $X$ be a deterministic iHDA and $\alpha,\beta\in\Path(X)_\bot$.
  If $\ev(\alpha)\subsu\ev(\beta)$
  and $\iev(\tgt(\alpha))=\iev(\tgt(\beta))$,
  then $\tgt(\alpha)=\tgt(\beta)$.
\end{lemma}

\begin{proof}
  By Lemma \ref{l:SubsuIPath} there exists $\alpha'\in\Path(X)_{\src(\beta)}^{\tgt(\beta)}$
  such that $\ev(\alpha')=\ev(\alpha)$.
  From Lemma \ref{le:detp} follows that $\tgt(\alpha)=\tgt(\alpha')=\tgt(\beta)$.
\end{proof}

\begin{proposition}
  \label{pr:DetImpliesSInv}
  If $L$ is recognised by a deterministic iHDA, then $L$ is swap-invariant.
\end{proposition}

\begin{proof}
  Like the proof of Proposition \ref{l:DetImpliesSInv},
  swapping out the applications of Lemma \ref{l:PathDivision} with Lemma \ref{l:iPathDivision}
  and the one of Lemma \ref{l:DetSubPaths} with Lemma \ref{l:iDetSubPaths}.
\end{proof}

Together with Theorem \ref{t:swapdet},
Corollary \ref{co:detdet} and Proposition \ref{pr:DetImpliesSInv}
now imply that a language is recognised by a deterministic iHDA if and only if it is swap-invariant.

\section{Conclusion and further work}

We have proven a Myhill-Nerode type theorem for higher-dimensional automata (HDAs),
stating that a language is regular if and only if it has finite prefix quotient.
We have also introduced deterministic HDAs and shown that not all finite HDAs are determinizable.
Lastly, we have seen that both notions are somewhat simpler
when using higher-dimensional automata with interfaces (iHDAs),
given that no restrictions to essential parts are necessary.

HDAs are arguably simpler than iHDAs, and also somewhat more standard as a model for concurrent computations.
On the other hand, we have seen in \cite{conf/concur/FahrenbergJSZ22, Kleenearxiv} and now also here
that because of the structural axioms of HDAs,
certain concepts are easier to state and prove for iHDAs than for HDAs.
This same observation has led to the introduction of partial HDAs
in \cite{DBLP:conf/calco/FahrenbergL15, DBLP:conf/fossacs/Dubut19},
of which iHDAs are a more restricted event-based version.
In particular, it appears that the trees of Dubut's \cite{DBLP:conf/fossacs/Dubut19}
are related to some of our iHDA constructions developed here.

Our Myhill-Nerode theorem provides a language-internal criterion for whether a language is regular,
and we have developed a similar one to distinguish deterministic languages.
Another important aspect is the \emph{decidability} of these questions,
together with other standard problems such as membership or language inclusion.
Together with coauthors A.~Amrane and H.~Bazille,
we show in \cite{DBLP:conf/ictac/AmraneBFZ23} that these are decidable.

Given that we have shown that not all regular languages are deterministic,
one might ask for the approximation of deterministic languages
by other, less restrictive notions.
It is shown in \cite{DBLP:conf/ictac/AmraneBFZ23} that
non-deterministic HDAs may exhibit unbounded ambiguity,
but other approaches such as for example history-determinism \cite{DBLP:journals/siglog/BokerL23}
or residuality \cite{DBLP:conf/stacs/DenisLT01}
remain to be explored.
It appears that our Myhill-Nerode HDAs may be residual in some sense,
which would open connections to for example automata learning
\cite{DBLP:journals/iandc/Angluin87, DBLP:conf/ijcai/BolligHKL09, DBLP:conf/fossacs/HeerdtKR021}.

\subsection*{Acknowledgement}

We are indebted to members and associates of the (i)Po(m)set Project\footnote{%
  \url{https://ulifahrenberg.github.io/pomsetproject/}}
for numerous discussions regarding the subjects of this paper;
any errors, however, are exclusively ours.

\newcommand{\Afirst}[1]{#1} \newcommand{\afirst}[1]{#1}

\end{document}